\definecolor{ly}{RGB}{250,250,200}
\definecolor{lg}{RGB}{200,250,200}
\def\Nset{\mathbb{N}}
\def\Qset{\mathbb{Q}}
\def\Rset{\mathbb{R}}
\newcommand{\true}{\mbox{\tt true}}
\def\ra{\rightarrow}
\newcommand{\sat}{\models}
\newcommand{\comment}[1]{}
\newcommand{\setsep}{:}
\newcommand{\suchthat}{\mbox{ s.t. }}
\newcommand{\xmark}{\text{\ding{55}}}
\newcommand{\IGNORE}[1]{}
\newcommand{\nnr}{\Rset_{\geq 0}}
\newcommand{\dist}{\mathsf{Dist}}
\newcommand{\adist}{\mu}
\newcommand{\support}{\mathsf{support}}
\newcommand{\dirac}[1]{\{ {#1} \mapsto 1 \}}
\newcommand{\distsum}{\bigoplus}
\newcommand{\ctweight}{\lambda}
\newcommand{\aPTS}{\mathcal{T}}
\newcommand{\states}{S}
\newcommand{\astate}{s}
\newcommand{\anotherstate}{t}
\newcommand{\sinit}{\overline{\astate}}
\newcommand{\actions}{\mathit{Act}}
\newcommand{\anaction}{a}
\newcommand{\ptstrans}{\Delta}
\newcommand{\aninfrun}{r}
\newcommand{\afinrun}{r}
\newcommand{\infruns}[1]{\mathit{InfRuns}^{#1}}
\newcommand{\finruns}[1]{\mathit{FinRuns}^{#1}}
\newcommand{\last}{\mathit{last}}
\newcommand{\infrunsstate}[2]{\mathit{InfRuns}^{#1}({#2})}
\newcommand{\astrat}{\sigma}
\newcommand{\strategies}{\Sigma}
\newcommand{\astratset}{\strategies'}
\newcommand{\altstrategies}[1]{\strategies_{#1}}
\newcommand{\altstrategiespts}[2]{\strategies^{#1}_{#2}}
\newcommand{\prob}[2]{\mathrm{Pr}^{#1}_{#2}}
\newcommand{\anequiv}{\equiv}
\newcommand{\anec}{C}
\newcommand{\aprobsim}{\preceq}
\newcommand{\actset}{A}
\newcommand{\astateset}{\states'}
\newcommand{\clocks}{\mathcal{X}}
\newcommand{\aclock}{x}
\newcommand{\anotherclock}{y}
\newcommand{\aval}{v}
\newcommand{\valuations}{\nnr^{\clocks}}
\newcommand{\aclockset}{X}
\newcommand{\pedges}{\mathit{prob}}
\newcommand{\apedge}{p}
\newcommand{\pd}{\mathfrak{p}}
\newcommand{\locs}{L}
\newcommand{\locinit}{\bar{l}}
\newcommand{\inv}{\mathit{inv}}
\newcommand{\aloc}{l}
\newcommand{\g}{g}
\newcommand{\semPTS}[1]{[\![{#1}]\!]}
\newcommand{\bzero}{\mathbf{0}}
\newcommand{\val}{v}
\newcommand{\vinz}{\sat}
\newcommand{\cconparam}[1]{\mathit{CC}({#1})}
\newcommand{\ccons}{\cconparam{\clocks}}
\newcommand{\cc}{\psi}
\newcommand{\reset}[2]{\mathsf{Reset}({#1},{#2})}
\newcommand{\maxval}[3]{\mathbb{P}^\mathrm{max}_{{#1},{#3}}({#2})}
\newcommand{\minval}[3]{\mathbb{P}^\mathrm{min}_{{#1},{#3}}({#2})}
\newcommand{\Sup}[1]{\sup_{#1}}
\newcommand{\Inf}[1]{\inf_{#1}}
\newcommand{\adelay}{\delta}
\newcommand{\finalstates}{\states_F}
\newcommand{\pthresh}{\lambda}
\newcommand{\adisttemp}{\mathfrak{d}}
\newcommand{\adtpara}[2]{{#1}[{#2}]}
\newcommand{\disttemps}{\mathfrak{Dist}}
\newcommand{\acdpta}{\mathcal{P}}
\newcommand{\ptstransdelay}{\overrightarrow{\ptstrans}}
\newcommand{\ptstranspedge}{\widehat{\ptstrans}}
\newcommand{\finallocs}{F}
\newcommand{\cdptastrategies}{\mathbf{\strategies}}
\newcommand{\interval}[2]{I^{#1}_{#2}}
\newcommand{\apartition}[2]{\mathcal{I}^{#1}_{#2}}
\newcommand{\aninterval}{I}
\newcommand{\outcome}{e}
\newcommand{\plf}[2]{f^{#1}_{#2}}
\newcommand{\fc}[3]{c^{#1}_{{#2},{#3}}}
\newcommand{\fd}[3]{d^{#1}_{{#2},{#3}}}
\newcommand{\twocm}{\mathcal{M}}
\newcommand{\instrs}{\mathcal{L}}
\newcommand{\counters}{\mathcal{C}}
\newcommand{\anin}{\ell}
\newcommand{\acount}{\mathsf{c}}
\newcommand{\avalc}{\mathsf{v}}
\newcommand{\floor}[1]{\lfloor{#1}\rfloor}
\newcommand{\seq}[2]{\langle {#1} \rangle_{{#2}}}
\newcommand{\set}[2]{\{ {#1} \}_{{#2}}}
\newcommand{\maxbound}{M}
\newcommand{\anatval}{h}
\newcommand{\acp}{\alpha}
\newcommand{\acpconst}{a}
\newcommand{\areg}{R}
\newcommand{\multof}[1]{[{#1}]}
\newcommand{\cps}[2]{\mathsf{CornerPoints}_{#2}}
\newcommand{\cpsofreg}[1]{\mathsf{CP}({#1})}
\newcommand{\regs}[2]{\mathsf{Regs}_{#2}}
\newcommand{\acdrg}[2]{\mathcal{A}_{#2}}
\newcommand{\rgstates}[2]{\mathsf{S}_{#2}}
\newcommand{\rginit}{\overline{\mathsf{s}}}
\newcommand{\rgactions}[2]{\mathsf{Act}_{#2}}
\newcommand{\rgtrans}[2]{\Gamma_{#2}}
\newcommand{\rsuccact}{\tau}
\newcommand{\rgtransdelay}[2]{\overrightarrow{\rgtrans{#1}{#2}}}
\newcommand{\rgtranspedge}[2]{\widehat{\rgtrans{#1}{#2}}}
\newcommand{\anrgdist}{\nu}
\newcommand{\rgstrategies}[2]{\mathbf{\Pi}_{#2}}
\newcommand{\finalregs}[2]{\mathsf{Regs}_{#2}^F}
\newcommand{\inrg}[4]{\langle \! [ {#1},{#2} ] \! \rangle_{#4}}
\newcommand{\weight}{\theta}
\newcommand{\proposedpsim}{\aprobsim}
\newcommand{\regcont}[2]{[{#1}]_{#2}}
\begin{document}
\title{Probabilistic Timed Automata with Clock-Dependent Probabilities}
\titlerunning{Probabilistic Timed Automata with Clock-Dependent Probabilities}  
%
\author{
Jeremy Sproston\textsuperscript{(\Letter)}
}
\authorrunning{Jeremy Sproston} 
%
\tocauthor{
Jeremy Sproston
}
\institute{
Dipartimento di Informatica,
University of Turin, Italy\\
\email{sproston@di.unito.it}
}

\maketitle              

\begin{abstract}
Probabilistic timed automata are classical timed automata 
extended with discrete probability distributions over edges.
We introduce clock-dependent probabilistic timed automata, 
a variant of probabilistic timed automata in which 
transition probabilities can depend linearly on clock values.
Clock-dependent probabilistic timed automata
allow the modelling of a continuous relationship between time passage and 
the likelihood of system events.
We show that the problem of deciding whether the 
maximum probability of reaching a certain location 
is above a threshold is undecidable
for clock-dependent probabilistic timed automata.
On the other hand,
we show that the maximum and minimum probability of 
reaching a certain location in clock-dependent probabilistic timed automata
can be approximated using a region-graph-based approach.
\end{abstract}
%

\section{Introduction}\label{sec:intro}
Reactive systems are increasingly required to satisfy a combination
of qualitative criteria (such as safety and liveness)
and quantitative criteria (such as timeliness, reliability and performance).
This trend has led to the development of techniques and tools
for the formal verification of both qualitative and quantitative properties.
In this paper, we consider a formalism for real-time systems that exhibit randomised behaviour,
namely probabilistic timed automata (PTA) \cite{GJ95,KNSS02}.
PTAs extend classical Alur-Dill timed automata \cite{AD94}
with discrete probabilistic branching over automata edges; 
alternatively a PTA can be viewed as a Markov decision process \cite{Put94}
or a Segala probabilistic automaton \cite{SegPhD}
extended with timed-automata-like clock variables and constraints over those clocks.
PTAs have been used previously to model case studies
including randomised protocols and scheduling problems with uncertainty \cite{KNPS06,NPS13},
some of which have become standard benchmarks in the field of probabilistic model checking.

We recall briefly the behaviour of a PTA:
as time passes, the model stays within a particular discrete state,
and the values of its clocks increase at the same rate;
at a certain point in time, 
the model can leave the discrete state if the current values of the clocks satisfy 
a constraint (called a guard) labelling one of the probability distributions over edges leaving the state;
then a probabilistic choice as to which discrete state 
to then visit is made according to the chosen edge distribution.
In the standard presentation of PTAs,
any dependencies between time and probabilities over edges 
must be defined by utilising multiple distributions enabled with different sets of clock values.
For example, to model the fact that a packet loss is more likely as time passes,
we can use clock $\aclock$ to measure time, 
and two distributions $\adist_1$ and $\adist_2$ assigning probability $\lambda_1$ and $\lambda_2$
(for $\lambda_1 < \lambda_2$), respectively, 
to taking edges leading to a discrete state corresponding to packet loss,
where the guard of $\adist_1$ is $\aclock \leq c$
and the guard of $\adist_2$ is $\aclock > c$, for some constant $c \in \Nset$.
Hence, when the value of clock $\aclock$ is not more than $c$, 
a packet loss occurs with probability $\lambda_1$,
otherwise it occurs with probability $\lambda_2$.
A more direct way of expressing the relationship between time and probability 
would be letting the probability of making a transition to a discrete state representing packet loss
be dependent on the value of the clock,
i.e., let the value of this probability be equal to $f(\aclock)$,
where $f$ is an increasing function from the values of $\aclock$ to probabilities.
We note that such a kind of dependence of discrete branching probabilities 
on values of continuous variables is standard in the field of stochastic hybrid systems,
for example in \cite{AKLP10}.

In this paper we consider such a formalism based on PTAs,
in which all probabilities used by edge distributions 
can be expressed as functions of values of the clocks used by the model:
the resulting formalism is called \emph{clock-dependent probabilistic timed automata} (cdPTA).
We focus on a simple class of functions from clock values to probabilities,
namely those that can be expressed as sums of continuous piecewise linear functions,
%
and consider a basic problem in the context of probabilistic model checking,
namely probabilistic reachability:
determine whether the maximum (respectively, minimum) probability of reaching a certain set of locations
from the initial state is above (respectively, below) a threshold.
After introducing cdPTAs (in Section \ref{sec:cdpta}),
our first result (in Section \ref{sec:undec}) is that the probabilistic reachability problem is undecidable 
for cdPTA with a least three clocks.
This result is inspired from recent related work on stochastic timed Markov decision processes \cite{ABKMT16}.
Furthermore, we give an example of cdPTA with one clock
for which the maximal probability of reaching a certain location
involves a particular edge being taken when the clock has an irrational value.
This suggests that classical techniques for partitioning the state space into a finite number of equivalence classes
on the basis of a fixed, rational-numbered time granularity, 
such as the region graph \cite{AD94} or the corner-point abstraction \cite{BBL08},
cannot be applied directly to the case of cdPTA
to obtain optimal reachability probabilities, 
because they rely on the fact that optimal choices can be made either at or arbitrarily closely to 
clock values that are multiples of the chosen rational-numbered time granularity.
In Section \ref{sec:approx},
we present a conservative approximation method for cdPTA,
i.e., maximum (respectively, minimum) probabilities are 
bounded from above (respectively, from below) in the approximation.
This method is based on the region graph 
but uses concepts from the corner-point abstraction to define transition distributions.
%
%
We show that successive refinement of the approximation,
obtained by increasing the time granularity by a constant factor,
does not lead to a more conservative approximation:
in practice, in many cases such a refinement can lead to 
a substantial improvement in the computed probabilities,
which we show using a small example.

%
%

\section{Clock-Dependent Probabilistic Timed Automata}\label{sec:cdpta}
\subsubsection{Preliminaries.}

We use $\nnr$ to denote the set of non-negative real numbers,
$\Qset$ to denote the set of rational numbers and
$\Nset$ to denote the set of natural numbers.
A (discrete) probability \emph{distribution} over a countable set $Q$ is
a function $\adist: Q \ra [0,1]$ such that $\sum_{q \in Q} \adist(q) = 1$.
For a function $\adist : Q \ra \nnr$ we define $\support(\adist) = \{q \in Q \setsep \adist(q) > 0 \}$.
Then for an uncountable set $Q$ we define $\dist(Q)$ to be the set of
functions $\adist : Q \ra [0,1]$, such that $\support(\adist)$ is a
countable set and $\adist$ restricted to $\support(\adist)$ is a (discrete)
probability distribution.
Given $q \in Q$, we use $\dirac{q}$ to denote the distribution 
that assigns probability 1 to the single element $q$.

A \emph{probabilistic transition system} (PTS)
$\aPTS = ( \states, \sinit, \actions, \ptstrans )$ comprises the following components:
a set $\states$ of \emph{states} with an \emph{initial state} $\sinit \in \states$,
a set $\actions$ of \emph{actions},
and a \emph{probabilistic transition relation}
$\ptstrans \subseteq \states \times \actions \times \dist(\states)$.
The sets of states, actions and the probabilistic transition relation
can be uncountable.
Transitions from state to state of a PTS are performed in two
steps: if the current state is $\astate$, the first step concerns a
nondeterministic selection of a probabilistic transition $(\astate,\anaction,\adist) \in \ptstrans$; 
the second step comprises a probabilistic choice, made according to the distribution $\adist$, 
as to which state to make the transition 
(that is, a transition to a state $\astate' \in \states$ is made with probability $\adist(\astate')$).
We denote such a completed transition by $\astate \xrightarrow{\anaction,\adist} \astate'$.
%
We assume that 
for each state $\astate \in \states$ there exists some $(\astate,\anaction,\adist) \in \ptstrans$.
%

An \emph{infinite run} of the PTS $\aPTS$
is an infinite sequence of consecutive transitions
$\aninfrun = \astate_0 \xrightarrow{\anaction_0,\adist_0} \astate_1 \xrightarrow{\anaction_1,\adist_1} \cdots$
(i.e., the target state of one transition is the source state of the next).
Similarly, a \emph{finite run} of $\aPTS$
is a finite sequence of consecutive transitions
$\afinrun = 
\astate_0 \xrightarrow{\anaction_0,\adist_0} \astate_1 \xrightarrow{\anaction_1,\adist_1} \cdots \xrightarrow{\anaction_{n-1},\adist_{n-1}} s_n$.
We use $\infruns{\aPTS}$ to denote the set of infinite runs of $\aPTS$,
and $\finruns{\aPTS}$ the set of finite runs of $\aPTS$.
If $\afinrun$ is a finite run, we denote by $\last(\afinrun)$ the last state of $\afinrun$.
For any infinite run $\afinrun$ and $i \in \Nset$, 
let $\afinrun(i)=\astate_i$ be the $(i+1)$th state along $\afinrun$.
Let $\infrunsstate{\aPTS}{\astate}$ 
refer to the set of infinite runs of $\aPTS$
commencing in state $\astate \in \states$.

A \emph{strategy} of a PTS $\aPTS$ is a function $\astrat$
mapping every finite run $\afinrun \in \finruns{\aPTS}$ to a 
distribution in $\dist(\ptstrans)$
such that $(\astate,\anaction,\adist) \in \support(\astrat(\afinrun))$ implies that 
$\astate = \last(\afinrun)$.
From \cite[Lemma~4.10]{Hah13},
without loss of generality we can assume henceforth that strategies
map to distributions assigning positive probability to finite sets of elements,
i.e., strategies $\astrat$ for which $|\support(\astrat(\afinrun))|$ is finite for all $\afinrun \in \finruns{\aPTS}$.
For any strategy 
$\astrat$,
let $\infruns{\astrat}$ denote the set of infinite runs
resulting from the choices of $\astrat$.
For a state $\astate \in \states$,
let $\infrunsstate{\astrat}{\astate} = \infruns{\sigma} \cap \infrunsstate{\aPTS}{\astate}$.
Given a strategy 
$\astrat$ and a state $\astate \in \states$,
we define the probability measure $\prob{\astrat}{\astate}$ over $\infrunsstate{\astrat}{\astate}$
in the standard way \cite{KSK66}.

Given a set $\finalstates \subseteq \states$,
define $\Diamond \finalstates  
= \{ \aninfrun \in \infruns{\aPTS} \setsep \exists i \in \Nset \suchthat \afinrun(i) \in \finalstates  \}$
to be the set of infinite runs of $\aPTS$ such that some state of $\finalstates$ is visited along the run.
Given a set $\astratset \subseteq \strategies$ of strategies,
we define the 
\emph{maximum value over $\astratset$ with respect to $\finalstates$} as
$\maxval{\aPTS}{\finalstates}{\astratset} 
= 
\Sup{\astrat \in \astratset}~ \prob{\astrat}{\sinit}(\Diamond \finalstates)$.
Similarly, 
the \emph{minimum value over $\astratset$ with respect to $\finalstates$} 
is defined as
$\minval{\aPTS}{\finalstates}{\astratset} 
= 
\Inf{\astrat \in \astratset}~ \prob{\astrat}{\sinit}(\Diamond \finalstates)$.
%
%
The \emph{maximal reachability problem} 
for $\aPTS$, $\finalstates \subseteq \states$, $\astratset \subseteq \strategies$, 
$\unrhd \in \{ \geq, > \}$ and $\pthresh \in [0,1]$
is to decide whether $\maxval{\aPTS}{\finalstates}{\astratset} \unrhd \pthresh$.
Similarly, the \emph{minimal reachability problem}
for $\aPTS$, $\finalstates \subseteq \states$, $\astratset \subseteq \strategies$,
$\unlhd \in \{ \leq, < \}$ and $\pthresh \in [0,1]$
is to decide whether $\minval{\aPTS}{\finalstates}{\astratset} \unlhd \pthresh$.
%


\subsubsection{Clock-Dependent Probabilistic Timed Automata.}

Let $\clocks$ be a finite set of
real-valued variables called \emph{clocks}, 
the values of which increase at the same rate as real-time
and which can be reset to 0.
A function $\aval: \clocks \rightarrow \nnr$ is referred to as a \emph{clock valuation}
and the set of all clock valuations is denoted by $\valuations$.
For $\aval \in \valuations$, $t \in \nnr$ and $\aclockset \subseteq \clocks$, 
we use $\aval{+}t$ to denote the clock valuation that increments all clock values in $\aval$ by $t$,
and $\aval[\aclockset{:=}0]$ to denote the clock valuation in which clocks in $\aclockset$ are reset to 0.

For a set $Q$, 
a \emph{distribution template} $\adisttemp : \valuations \ra \dist(Q)$
gives a distribution over $Q$ for each clock valuation.
In the following, we use notation $\adtpara{\adisttemp}{\aval}$, rather than $\adisttemp(\aval)$,
to denote the distribution corresponding to distribution template $\adisttemp$ and clock valuation $\aval$.
Let $\disttemps(Q)$ be the set of distribution templates over $Q$.

The set $\ccons$ of {\em clock constraints} over $\clocks$ is defined
as the set of conjunctions over atomic formulae of the
form
$\aclock \sim c$, where $\aclock \in \clocks$,
$\sim \in \{ <,\leq,\geq,> \}$, and $c \in \Nset$.
A clock valuation $\aval$ satisfies a clock constraint $\cc$,
denoted by $\aval \vinz \cc$, 
if $\cc$ resolves to $\true$ when substituting each occurrence of clock $x$ with $\aval(x)$.
\sloppypar{
A \emph{clock-dependent probabilistic timed automaton} (cdPTA)  
$\acdpta = (\locs, \locinit, \clocks, \inv, \pedges)$
comprises the following components:
a finite set
$\locs$ of \emph{locations} with an \emph{initial location} $\locinit \in \locs$;
a finite set $\clocks$ of clocks;
a function $\inv: \locs \ra \ccons$ associating an \emph{invariant condition} with each location;
a set $\pedges \subseteq \locs \times \ccons \times \disttemps(2^{\clocks} \times \locs)$
of \emph{probabilistic edges}.
A probabilistic edge $(\aloc,\g,\pd) \in \pedges$ comprises:
\begin{inparaenum}[(1)]
\item a source location $\aloc$;
\item a clock constraint $\g$, called a \emph{guard}; and
\item a distribution template $\pd$ 
with respect to pairs of the form $(\aclockset,\aloc') \in 2^{\clocks} \times \locs$ 
(i.e., pairs consisting of a set $\aclockset$ of clocks to be reset and a target location $\aloc'$).
\end{inparaenum}
}
The behaviour of a cdPTA takes a similar form to
that of a standard probabilistic timed automaton \cite{GJ95,KNSS02}:
in any location time can advance as long as the invariant holds,
and the choice as to how much time elapses is made nondeterministically;
a probabilistic edge can be taken if its guard is satisfied
 by the current values of the clocks and,
again, the choice as to which probabilistic edge to take is made nondeterministically;
for a taken probabilistic edge, the choice of which clocks to reset and which target
location to make the transition to is \emph{probabilistic}.
The key difference with cdPTAs is that the distribution used to make this probabilistic choice
depends on the probabilistic edge taken \emph{and}
on the current clock valuation.

\begin{figure}[t]
{

\centering
\scriptsize

\begin{tikzpicture}[->,>=stealth',shorten >=1pt,auto, thin] 

  \tikzstyle{state}=[draw=black, text=black, shape=circle, inner sep=3pt, outer sep=0pt, circle, rounded corners, fill=ly] 
  \tikzstyle{final_state}=[draw=black, text=black, shape=circle, inner sep=3pt, outer sep=0pt, circle, rounded corners, fill= lg] 

	\node(for_initial){};

	\node[state, node distance=1cm](TL)[right of=for_initial]{$\mathrm{TL}$};
	\node[yshift=-0.3cm,above of = TL](inv_TL){\begin{tabular}{c}{$\aclock \leq 2$}\\{$\anotherclock \leq c_\mathrm{max}$}\end{tabular}};

	\node[fill=black, node distance=2.5cm] (nail_TLr) [right of=TL] {};
	\node[fill=black, node distance=1.2cm] (nail_TLd) [below of=TL] {};

	\node[state, node distance=2.5cm](TR)[right of=nail_TLr]{$\mathrm{TR}$};
	\node[yshift=-0.3cm,above of = TR](inv_TR){\begin{tabular}{c}{$\aclock \leq 2$}\\{$\anotherclock \leq c_\mathrm{max}$}\end{tabular}};

	\node[fill=black, node distance=1.2cm] (nail_TR) [below of=TR] {};

	\node[state, node distance=1.2cm](BL)[below of=nail_TLd]{$\mathrm{BL}$};
	\node[yshift=0.3cm,below of = BL](inv_BL){\begin{tabular}{c}{$\aclock \leq 3$}\\{$\anotherclock \leq c_\mathrm{max}$}\end{tabular}};

	\node[fill=black, node distance=2.5cm] (nail_BL) [right of=BL] {};

	\node[state, node distance=1.2cm](BR)[below of=nail_TR]{$\mathrm{BR}$};
	\node[yshift=0.5cm,below of = BR](inv_BR){$\aclock \leq 0$};

	\node[fill=black, node distance=2.5cm] (nail_BR) [right of=BR] {};

	\node[final_state, node distance=1.2cm](F)[above of=nail_BR]{$\xmark$};
	\node[final_state, node distance=2cm,accepting](S)[right of=nail_BR]{$\checkmark$};
	
	\node[node distance=1.7cm](NW)[above left of=TL]{};
	\node[node distance=1.7cm](NE)[right of=TR]{};
	\node[node distance=1.7cm](SW)[below left of=BL]{};
	\node[node distance=1.7cm](SE)[below right of=BR]{};

		
	\path[rounded corners]
	(for_initial)
		edge (TL)

	(TL)
		edge [above, bend right] node {$\aclock \geq 1$} (nail_TLr)
		edge [right, bend left] node {$\aclock \geq 1$} (nail_TLd)

	(nail_TLr)
		edge [above] node {\colorbox{gray!15}{$\aclock-1$}} (TR)
		edge [below] node {$\{ \aclock \}$} (TR)
		edge [above, bend right] node {\begin{tabular}{c}{\colorbox{gray!15}{$2-\aclock$}}\\{$\{ \aclock \}$}\end{tabular}} (TL)

	(nail_TLd)
		edge [right] node {$\{ \aclock \}$} (BL)
		edge [left] node {\colorbox{gray!15}{$\aclock-1$}} (BL)
		edge [left, bend left] node {\begin{tabular}{c}{\colorbox{gray!15}{$2-\aclock$}}\\{$\{ \aclock \}$}\end{tabular}} (TL)

	(TR)
		edge [left, bend right] node {$\aclock \geq 1$} (nail_TR)

	(nail_TR)
		edge [right] node {$\{ \aclock \}$} (BR)
		edge [left] node {\colorbox{gray!15}{$\frac{\aclock-1}{2}$}} (BR)
		edge [right, bend right] node {\begin{tabular}{c}{\colorbox{gray!15}{$1-\frac{\aclock-1}{2}$}}\\{$\{ \aclock \}$}\end{tabular}} (TR)

	(BL)
		edge [below, bend left] node {$\aclock \geq 2$} (nail_BL)

	(nail_BL)
		edge [above] node {\colorbox{gray!15}{$\aclock-2$}} (BR)
		edge [below] node {$\{ \aclock \}$} (BR)
		edge [below, bend left] node {\begin{tabular}{c}{\colorbox{gray!15}{$3-\aclock$}}\\{$\{ \aclock \}$}\end{tabular}} (BL)

	(BR)
		edge [above] node {$\aclock = 0$} (nail_BR)

	(nail_BR)
		edge [below] node {\colorbox{gray!15}{$1-\frac{\anotherclock}{c_\mathrm{max}}$}} (S)
		edge [right] node {\colorbox{gray!15}{$\frac{\anotherclock}{c_\mathrm{max}}$}} (F)

	(NE)
		edge [right,bend left,bend angle=60] node {$\anotherclock = c_\mathrm{max}$} (F);

  \begin{pgfonlayer}{background}
    \draw [dashed,join=round]
      (NW.north  -| NE.east)  rectangle (SE.south  -| SW.west);
  \end{pgfonlayer}

\end{tikzpicture}

}

\caption{A cdPTA modelling a simple robot example.}
\label{fig:robot}
\end{figure}
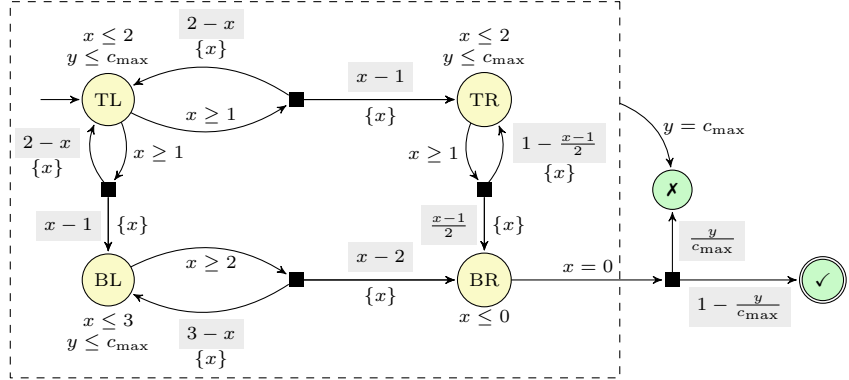

\paragraph{Example 1.}
In Figure~\ref{fig:robot} we give an example of a cdPTA modelling a simple robot 
that must reach a certain geographical area and then carry out a particular task.
The usual conventions for the graphical representation of timed automata are used in the figure.
Black squares denote the distributions of probabilistic edges,
and expressions on probabilities used by distribution templates are written 
with a grey background on their outgoing arcs. 
The robot can be in one of four geographical areas, 
which can be thought of as cells in a $2 \times 2$ grid,
each of which corresponds to a cdPTA location.
The robot begins in the top-left cell (corresponding to location $\mathrm{TL}$),
and its objective is to reach the bottom-right cell (location $\mathrm{BR}$).
The robot can move either to the top-right cell (location $\mathrm{TR}$), 
or to the bottom-left cell (location $\mathrm{BL}$),
then to the bottom-right cell.
In each cell, the robot must wait a certain amount of time 
($1$ time units in the top cells and $2$ time units in the bottom-left cell)
before attempting to leave the cell
(for example, to recharge solar batteries),
after which it can spend at most $1$ time unit attempting to leave the cell.
With a certain probability, the attempt to leave the cell will fail,
and the robot must wait before trying to leave the cell again;
the more time is dedicated to leaving the cell, the more likely the robot will succeed.
Although passing through the top-right cell is not slower than passing through the bottom-left cell,
the probability of leaving the cell successfully increases at a slower rate than in other cells
(representing, for example, terrain in which the robot finds it difficult to navigate).
On arrival in the bottom-right cell, 
the robot successfully carries out its task with a probability that is inversely proportional to the total time elapsed
(for example, 
the robot could be transporting medical supplies,
the efficacy of which may be inversely proportional to the time elapsed).
The clock $\aclock$ is used to represent the amount of time used by the robot in its attempt 
to move from cell to cell,
whereas
the clock $\anotherclock$ represents the total amount of time since the start of the robot's mission.
If the clock $\anotherclock$ reaches its maximum amount $c_\mathrm{max}$,
then the mission fails (as denoted by the edge to the location denoted by $\xmark$,
which is available in locations $\mathrm{TL}$, $\mathrm{TR}$, $\mathrm{BL}$ and $\mathrm{BR}$,
as indicated by the dashed box).
The objective of the robot's controller is to maximise the probability of reaching the location denoted by $\checkmark$.
Note that there is a trade-off between dedicating more time to movement between the cells,
which increases the probability of successful navigation and therefore progress towards the target point,
and spending less time on the overall mission,
which increases the probability of carrying out the required task at the target point.
\qed

A \emph{state\/} of a cdPTA is a pair comprising a location and a clock valuation
satisfying the location's invariant condition,
i.e., $(\aloc,\aval) \in \locs \times \valuations$ such that $\aval \vinz \inv(\aloc)$.
In any state $(\aloc,\aval)$, either a certain amount of time $\adelay \in \nnr$ elapses, 
or a probabilistic edge is traversed.
If time elapses, then
the choice of $\adelay$ requires that the invariant $\inv(\aloc)$ remains continuously satisfied while time passes. 
The resulting state after this transition is $(\aloc,\aval{+}\adelay)$.
A probabilistic edge $(\aloc',\g,\pd) \in \pedges$ can be chosen from $(\aloc,\aval)$ if $\aloc = \aloc'$ and it is \emph{enabled},
i.e., the clock constraint $\g$ is satisfied by $\aval$.
Once a probabilistic edge $(\aloc,\g,\pd)$ is chosen, a set of clocks to reset and a successor location are selected at random,
according to the distribution $\adtpara{\pd}{\aval}$.

We make a number of assumptions concerning the cdPTA models considered.
Firstly, we restrict our attention to cdPTAs for which it is always possible to take a probabilistic edge, 
either immediately or after letting time elapse.
This condition holds generally for PTA models in practice \cite{KNPS06}.
A sufficient syntactic condition for this property has been presented formally in \cite{JLS08}.
Secondly, 
we consider cdPTAs that feature invariant conditions that prevent clock values from exceeding some bound:
formally, for each location $\aloc \in \locs$, we have that $\inv(\aloc)$ 
contains a constraint of the form $\aclock \leq c$ or $\aclock < c$ for each clock $\aclock \in \clocks$.
Thirdly, we assume that all possible target states of probabilistic edges satisfy their invariants:
for all probabilistic edges $(\aloc,\g,\pd) \in \pedges$,
for all clock valuations $\aval \in \valuations$ such that $\aval \vinz \g$,
and for all $(\aclockset,\aloc') \in 2^\clocks \times \locs$, 
we have that $\adtpara{\pd}{\aval}(\aclockset,\aloc')>0$ implies $\aval[\aclockset:=0] \vinz \inv(\aloc')$.
%
%
Finally, we assume that any clock valuation that satisfies the guard of a probabilistic edge also 
satisfies the invariant of the source location:
this can be achieved, without changing the underlying semantic PTS, 
by replacing each probabilistic edge $(\aloc,\g,\pd) \in \pedges$ by $(\aloc,\g \wedge \inv(\aloc),\pd)$.

Let $\bzero \in \valuations$ be the clock valuation which assigns 0 to all
clocks in $\clocks$.
The semantics of the cdPTA
$\acdpta = (\locs, \locinit, \clocks, \inv, \pedges)$
is the PTS
$\semPTS{\acdpta} = ( \states, \sinit, \actions, \ptstrans )$
where:
\begin{itemize}
\item
$\states = \{ (\aloc,\aval) \setsep \aloc \in \locs \mbox{ and } \aval \in \nnr^\clocks \suchthat \aval \vinz \inv(\aloc) \}$
and $\sinit = \{ (\locinit,\bzero) \}$;
\item
$\actions = \nnr \cup \pedges$;
\item
$\ptstrans = \ptstransdelay \cup \ptstranspedge$,
where $\ptstransdelay \subseteq \states \times \nnr \times \dist(\states)$ 
and $\ptstranspedge \subseteq \states \times \pedges \times \dist(\states)$ such that:
\begin{itemize}
\item
$\ptstransdelay$ is the smallest set such that 
$((\aloc,\aval),\adelay,\dirac{(\aloc,\aval+\adelay)}) \in \ptstransdelay$ 
if there exists $\adelay \in \nnr$
such that $\aval+\adelay' \vinz \inv(\aloc)$ for all $0 \leq \adelay' \leq \adelay$;
\item
$\ptstranspedge$ is the smallest set such that 
$((\aloc,\aval),(\aloc,\g,\pd),\adist) \in \ptstranspedge$
if
\begin{enumerate}
\item
$\aval \vinz \g$;
\item
for any $(\aloc',\aval') \in \states$, we have 
$\adist(\aloc',\aval') = \sum_{\aclockset \in \reset{\aval}{\aval'}} \adtpara{\pd}{\aval}(\aclockset,\aloc')$,
where $\reset{\aval}{\aval'} = \{ \aclockset \subseteq \clocks \mid \aval[\aclockset:=0] = \aval' \}$.
\end{enumerate}
\end{itemize}
\end{itemize}

When considering maximum and minimum values for cdPTAs,
we henceforth consider strategies that alternate between transitions from $\ptstransdelay$
(time elapse transitions)
and transitions from $\ptstranspedge$
(probabilistic edge transitions).
Formally,
a \emph{cdPTA strategy} $\astrat$ is a strategy such that, 
for a finite run $\afinrun \in \finruns{\semPTS{\acdpta}}$ 
that has $\astate \xrightarrow{\anaction,\adist} \astate'$ as its final transition,
either $(\astate,\anaction,\adist) \in \ptstransdelay$ and $\support(\astrat(\afinrun)) \in \ptstranspedge$,
or $(\astate,\anaction,\adist) \in \ptstranspedge$ and $\support(\astrat(\afinrun)) \in \ptstransdelay$.
We write $\cdptastrategies$ for the set of cdPTA strategies of $\semPTS{\acdpta}$.
%
Given a set $\finallocs \subseteq \locs$ of locations, subsequently called \emph{target locations},
we let $\finalstates = \{ (\aloc,\aval) \in \states \setsep \aloc \in \finallocs\}$.
Let $\unrhd \in \{ \geq, > \}$, $\unlhd \in \{ \leq, < \}$ and $\pthresh \in [0,1]$:
then the maximal (respectively, minimal) reachability problem for cdPTA is to decide whether 
$\maxval{\semPTS{\acdpta}}{\finalstates}{\cdptastrategies} \unrhd \pthresh$
(respectively, $\minval{\semPTS{\acdpta}}{\finalstates}{\cdptastrategies} \unlhd \pthresh$).

\subsubsection{Piecewise Linear Clock Dependencies.}
In this paper, we concentrate on a particular subclass of distribution templates
based on continuous piecewise linear functions.
Let $\aclock \in \clocks$ be a clock and $\apedge = (\aloc,\g,\pd) \in \pedges$ be a probabilistic edge.
Let $\interval{\apedge}{\aclock}$ be the interval containing the values of $\aclock$ 
of clock valuations that satisfy $\g$: 
formally $\interval{\apedge}{\aclock} = 
\{ \aval(\aclock) \in \nnr \setsep \aval \in \valuations \suchthat \aval \vinz \g \}$.
For example, for $\g = (\aclock \geq 3) \wedge (\aclock < 5) \wedge (\anotherclock \leq 8)$,
we have $\interval{\apedge}{\aclock}=[3,5)$ and $\interval{\apedge}{\anotherclock}=[0,8]$.
We equip each probabilistic edge $\apedge = (\aloc,\g,\pd) \in \pedges$ 
and $\outcome = (\aclockset,\aloc') \in 2^\clocks \times \locs$
with a continuous piecewise linear function $\plf{\apedge,\outcome}{\aclock}$ 
with domain $\interval{\apedge}{\aclock}$ 
for each clock $\aclock \in \clocks$.
Formally, we consider a partition 
$\apartition{\apedge,\outcome}{\aclock}$ of $\interval{\apedge}{\aclock}$
(i.e., $\bigcup_{\aninterval \in \apartition{\apedge,\outcome}{\aclock}} \aninterval = \interval{\apedge}{\aclock}$
and $\aninterval \cap \aninterval' = \emptyset$ 
for each $\aninterval,\aninterval' \in \apartition{\apedge,\outcome}{\aclock}$ such that $\aninterval \neq \aninterval'$),
and sets $\set{\fc{\apedge,\outcome}{\aclock}{\aninterval}}{\aninterval \in \apartition{\apedge,\outcome}{\aclock}}$
and $\set{\fd{\apedge,\outcome}{\aclock}{\aninterval}}{\aninterval \in \apartition{\apedge,\outcome}{\aclock}}$ 
of constants in $\Qset$
such that:
\begin{inparaenum}[(a)]
\item
for every $\aninterval \in \apartition{\apedge,\outcome}{\aclock}$ and $\gamma \in \aninterval$,
we have 
$\plf{\apedge,\outcome}{\aclock}(\gamma) = 
\fc{\apedge,\outcome}{\aclock}{\aninterval} + \fd{\apedge,\outcome}{\aclock}{\aninterval} \cdot \gamma$;
\item
$\plf{\apedge,\outcome}{\aclock}$ is continuous
(i.e., for each $\gamma \in \interval{\apedge}{\aclock}$,
we have $\lim_{\zeta \ra \gamma} \plf{\apedge,\outcome}{\aclock}(\zeta) = \plf{\apedge,\outcome}{\aclock}(\gamma)$).
\end{inparaenum} 
We make the following assumptions for each probabilistic edge $\apedge \in \pedges$:
\begin{inparaenum}[(1)]
\item
all endpoints of intervals in $\apartition{\apedge,\outcome}{\aclock}$ are natural numbers,
for all clocks $\aclock \in \clocks$ and $\outcome \in 2^\clocks \times \locs$;
\item
$\sum_{\aclock \in \clocks} \plf{\apedge,\outcome}{\aclock}(\aval(\aclock)) \in [0,1]$ 
for each $\outcome \in 2^\clocks \times \locs$
and $\aval \in \valuations$ such that $\aval \vinz \g$;
\item
$\sum_{\outcome \in 2^\clocks \times \locs} \sum_{\aclock \in \clocks} \plf{\apedge,\outcome}{\aclock}(\aval(\aclock)) = 1$
for each $\aval \in \valuations$ such that $\aval \vinz \g$.
\end{inparaenum}
Then the probabilistic edge $\apedge$ is \emph{piecewise linear} if,
for each $\outcome \in 2^\clocks \times \locs$ and each $\aval \in \valuations$ such that $\aval \vinz \g$,
we have $\adtpara{\pd}{\aval}(\outcome) = \sum_{\aclock \in \clocks} \plf{\apedge,\outcome}{\aclock}(\aval(\aclock))$.
We assume henceforth that all probabilistic edges of cdPTAs are piecewise linear.

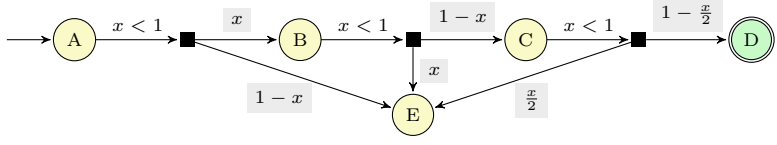
\begin{figure}[t]
{

\centering
\scriptsize

\begin{tikzpicture}[->,>=stealth',shorten >=1pt,auto, thin] 

  \tikzstyle{state}=[draw=black, text=black, shape=circle, inner sep=3pt, outer sep=0pt, circle, rounded corners, fill=ly] 
  \tikzstyle{final_state}=[draw=black, text=black, shape=circle, inner sep=3pt, outer sep=0pt, circle, rounded corners, fill= lg] 

	\node(for_initial){};

	\node[state, node distance=1cm](A)[right of=for_initial]{$\mathrm{A}$};

	\node[fill=black, node distance=1.5cm] (nail_A) [right of=A] {};

	\node[state, node distance=1.5cm](B)[right of=nail_A]{$\mathrm{B}$};

	\node[fill=black, node distance=1.5cm] (nail_B) [right of=B] {};

	\node[state, node distance=1.5cm](C)[right of=nail_B]{$\mathrm{C}$};

	\node[fill=black, node distance=1.5cm] (nail_C) [right of=C] {};

	\node[final_state, node distance=1.5cm,accepting](D)[right of=nail_C]{$\mathrm{D}$};

	\node[state, node distance=1cm](E)[below of=nail_B]{$\mathrm{E}$};
		
	\path[rounded corners]
	(for_initial)
		edge (A)

	(A)
		edge [above] node {$\aclock<1$} (nail_A)

	(nail_A)
		edge [above] node {\colorbox{gray!15}{$\aclock$}} (B)
		edge [below] node[xshift=-2mm] {\colorbox{gray!15}{$1-\aclock$}} (E)

	(B)
		edge [above] node {$\aclock< 1$} (nail_B)

	(nail_B)
		edge [above] node {\colorbox{gray!15}{$1-\aclock$}} (C)
		edge [right] node {\colorbox{gray!15}{$\aclock$}} (E)

	(C)
		edge [above] node {$\aclock<1$} (nail_C)

	(nail_C)
		edge [above] node {\colorbox{gray!15}{$1-\frac{\aclock}{2}$}} (D)
		edge [below] node {\colorbox{gray!15}{$\frac{\aclock}{2}$}} (E);

\end{tikzpicture}

}

\caption{A one-clock cdPTA for which the maximum probability is attained by a time delay corresponding to an irrational number.}
\label{fig:one_clock}
\end{figure}

\paragraph{Example 2.}
%
Standard methods for the analysis of timed automata 
typically consist of a finite-state system that represents faithfully the original model.
In particular, the region graph \cite{AD94} and the corner-point abstraction \cite{BBL08}
both involve the division of the state space
according to a fixed, rational-numbered granularity.
The example of a one-clock cdPTA $\acdpta$ of  Figure~\ref{fig:one_clock}
shows that such an approach
cannot be used for the exact computation of optimal reachability probabilities in cdPTAs,
because optimality may be attained when the clock has an irrational value.
For an example of the formal description of a piecewise linear probabilistic edge,
consider the probabilistic edge from location $\mathrm{C}$,
which we denote by $\apedge_\mathrm{C}$:
then we have $\apartition{\apedge_\mathrm{C},(\emptyset,\mathrm{D})}{\aclock} 
= \apartition{\apedge_\mathrm{C},(\emptyset,\mathrm{E})}{\aclock}
= \{ [0,1) \}$,
with $\fc{\apedge_\mathrm{C},(\emptyset,\mathrm{D})}{\aclock}{[0,1)} = 1$,
$\fd{\apedge_\mathrm{C},(\emptyset,\mathrm{D})}{\aclock}{[0,1)} = -\frac{1}{2}$,
$\fc{\apedge_\mathrm{C},(\emptyset,\mathrm{E})}{\aclock}{[0,1)} = 0$,
and
$\fd{\apedge_\mathrm{C},(\emptyset,\mathrm{E})}{\aclock}{[0,1)} = \frac{1}{2}$.
Now consider the maximum probability of reaching location $\mathrm{D}$
(that is, $\maxval{\semPTS{\acdpta}}{\states_{\{\mathrm{D}\}}}{\cdptastrategies}$).
Intuitively, the longer the cdPTA remains in location $\mathrm{A}$,
the lower the probability of making a transition to location $\mathrm{E}$ from $\mathrm{A}$,
but the higher the probability of making a transition to $\mathrm{E}$ 
from $\mathrm{B}$ and $\mathrm{C}$.
Note that, after $\mathrm{A}$ is left,
the choice resulting in the maximum probability of reaching $\mathrm{D}$
is to take the outgoing transitions from $\mathrm{B}$ and $\mathrm{C}$ as soon as possible
(delaying in $\mathrm{B}$ and $\mathrm{C}$ will increase the value of $\aclock$,
therefore increasing the probability of making a transition to $\mathrm{E}$).
Denoting by $\adelay$ the amount of time elapsed in $\mathrm{A}$,
the maximum probability of reaching $\mathrm{D}$
is equal to $\adelay(1-\adelay)(1-\frac{\adelay}{2})$, 
which (within the interval $[0,1)$) reaches its maximum at $1 - \frac{\sqrt{3}}{3}$.
Hence, this example indicates that abstractions based on the optimality of choices made
at (or arbitrarily close to) rational-numbered clock values 
(such as the region graph or corner-point abstraction)
do not yield exact analysis methods for cdPTAs.
\qed



\section{Undecidability of Maximal Reachability for cdPTAs}\label{sec:undec}

\begin{theorem}\label{thm:undec}
The maximal reachability problem is undecidable for cdPTAs
with at least 3 clocks.
\end{theorem}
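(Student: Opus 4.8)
The plan is to reduce from the halting problem for deterministic two‑counter (Minsky) machines, which is undecidable. Given such a machine $\twocm$ with a finite list of labelled instructions of the forms ``increment a counter and go to a label'', ``if a counter is zero go to one label, else decrement it and go to another'', and ``halt'', I will build a cdPTA $\acdpta$ with three clocks together with a target location $\halt$, and argue that $\maxval{\semPTS{\acdpta}}{\states_{\{\halt\}}}{\cdptastrategies} \unrhd \pthresh$ holds (for a fixed rational $\pthresh$ — it will be natural to take $\pthresh = 1$ — and an appropriate $\unrhd \in \{\geq,>\}$) if and only if $\twocm$ halts. A counter value $n$ is encoded by the value $2^{-n}$ of one of the clocks $x_1, x_2$, so that a zero test is a test of whether the clock equals $1$, incrementing corresponds to halving the clock value and decrementing to doubling it; the third clock is an auxiliary clock, reset at the start of each simulated instruction, used to preserve a counter value while its encoding clock is reset. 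All guards and invariants use only integer constants, and invariants keep every clock bounded by a small integer, as required of cdPTAs.

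The control flow of $\twocm$ is mirrored faithfully by the location structure of $\acdpta$, one small gadget per instruction, so that the only genuinely free choices a strategy makes are (i) how long to delay inside each gadget and (ii), at a zero‑test gadget, which outgoing branch to take. Probabilistic behaviour enters only through the clock‑dependent distribution templates, which I use to penalise timing choices that do not faithfully implement the intended counter operation. The basic building block is a clock‑dependent \emph{exactness gadget}: following the phenomenon already exhibited in Example~2, it consists of a constant number of consecutive probabilistic edges whose distribution templates are affine on the relevant unit intervals — recall that breakpoints must be integers, so within any unit interval each template is affine and nontrivial shaping can only come from composing edges — chosen so that the product of the ``continue'' probabilities along the path leading towards $\halt$ is a polynomial in the clock values that attains its maximum, equal to $1$, exactly at the clock value recording the correct post‑operation encoding (equivalently, exactly when the delay chosen by the strategy realises the required halving or doubling), and is strictly smaller for every other value; the complementary probability mass is routed to sink locations from which $\halt$ is unreachable. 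Since two affine factors with rational coefficients can be made to peak together at a prescribed rational point with peak value $1$, and since the guard of the zero branch (being integer‑constant) is satisfiable only when the counter clock is exactly $1$, a strategy that simulates $\twocm$ faithfully continues with probability exactly $1$ through every gadget, whereas any deviation strictly lowers the probability of reaching $\halt$.

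Consequently, if $\twocm$ halts then the strategy that performs the faithful simulation and takes the $\halt$‑edge reaches $\states_{\{\halt\}}$ with probability exactly $\pthresh = 1$, so $\maxval{\semPTS{\acdpta}}{\states_{\{\halt\}}}{\cdptastrategies} = 1$; and if $\twocm$ does not halt, then — and this is the technical core of the argument — no strategy reaches $\states_{\{\halt\}}$ with probability approaching $1$, so the maximum value is strictly below $1$. Hence $\maxval{\semPTS{\acdpta}}{\states_{\{\halt\}}}{\cdptastrategies} \geq 1$ (equivalently $= 1$) decides whether $\twocm$ halts, and since $\twocm$ was arbitrary the maximal reachability problem is undecidable for cdPTAs; the whole construction uses three clocks.

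The main obstacle is precisely that second half of the correctness argument: showing that when $\twocm$ does not halt, the supremum of the reachability probability over all cdPTA strategies is bounded away from $1$. The subtle point is that a strategy may introduce an arbitrarily small timing error in one gadget — costing only an arbitrarily small multiplicative factor there — and thereafter delay ``faithfully relative to the drifted clock values'', apparently paying nothing further; one must rule this out. The reason it can be ruled out is the discreteness forced by the integer constants: a counter clock perturbed away from the intended power of $2$ can never satisfy the integer‑constant guard of the zero branch (which demands the value be exactly $1$), so at the first zero test of a counter that ought to be zero the run is forced into the decrement branch, where doubling the not‑quite‑$1$ value overshoots the location's integer invariant bound and the run is stuck, unable ever to reach $\halt$; conversely, reaching $\halt$ without halting forces a gross control‑flow or value deviation, which the gadgets make cost a multiplicative factor bounded away from $1$. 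Turning this into a rigorous proof — designing the affine templates, guards and invariants so that (a) the faithful run has value exactly $\pthresh$ independently of the computation length, (b) every small perturbation provably leads to a stuck run or to $\halt$ being unreachable, and (c) every gross deviation costs a fixed factor — is where the real work lies; the remaining ingredients (the control‑flow gadgets, keeping all clocks bounded, respecting the alternation discipline of cdPTA strategies, and the standing assumption that a probabilistic edge can always eventually be taken) are routine.
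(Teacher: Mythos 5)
Your overall strategy (encode counter values as clock values $2^{-n}$, simulate each instruction by a small gadget, and use clock-dependent probabilities to penalise timing errors) matches the paper's, but the specific reduction you propose cannot be made to work, and the obstacle is exactly the one your ``exactness gadget'' is supposed to overcome. You want the product of the ``continue'' probabilities along the faithful path to equal $1$ precisely at the correct (non-integer) clock value and to be strictly smaller elsewhere. Since every factor in that product is a probability, hence at most $1$, the product can equal $1$ only if every factor equals $1$ there. But each factor is, as a function of the strategy's timing error $\epsilon$, affine on a neighbourhood of $\epsilon=0$ (interval breakpoints are integers and the correct value $2^{-(n+1)}$ lies in the interior of a unit interval), and an affine function that is $\leq 1$ on an open interval and equals $1$ at an interior point is constantly $1$ near that point. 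So any gadget of the kind you describe is necessarily insensitive to small deviations: forcing the peak value to be $1$ forces the penalty to vanish identically in a neighbourhood of the correct value, and your claim that every small perturbation is punished fails. This also sinks the choice of threshold $\pthresh=1$ and the reduction from halting.

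The paper's construction is designed around precisely this constraint. Its check gadget reaches the target with probability $\bigl(\tfrac12+\epsilon\bigr)\bigl(\tfrac12-\epsilon\bigr)=\tfrac14-\epsilon^2$: a product of two affine factors that both stay within $[0,1]$ and peak at $\tfrac12$, so the product peaks at $\tfrac14<1$, a value that \emph{is} attainable at an interior point. Each simulation step branches with probability $\tfrac12$ into this check and with probability $\tfrac12$ continues, so a faithful non-halting run accrues target probability $\sum_{i\ge 1}(\tfrac12)^i\cdot\tfrac14=\tfrac14$, a halting run accrues only a strict partial sum $<\tfrac14$, and any timing deviation replaces some summand's factor $\tfrac14$ by $\tfrac14-\epsilon^2<\tfrac14$. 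The reduction is therefore from \emph{non-halting} with threshold $\tfrac14$ and comparison $\geq$, which simultaneously avoids the impossible peak-at-$1$ gadget and the hard ``supremum bounded away from $1$'' argument that you correctly identify as the unresolved core of your version.
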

\begin{proof}[sketch]
We proceed by reducing the non-halting problem for two-counter machines
to the maximal reachability problem for cdPTAs.
The reduction has close similarities to a reduction presented in \cite{ABKMT16}.

A two-counter machine $\twocm = (\instrs,\counters)$
comprises a set $\instrs = \{ \anin_1, ..., \anin_n \}$ of instructions
and a set $\counters = \{ \acount_1, \acount_2 \}$ of counters.
The instructions are of the following form (for $1 \leq i,j,k \leq n$ and $l \in \{1,2\}$):
\begin{enumerate}
\item
$\anin_i : \acount_l:=\acount_l + 1; \mbox{ goto } \anin_j$ (increment $\acount_l$);
\item
$\anin_i : \acount_l:=\acount_l - 1; \mbox{ goto } \anin_j$ (decrement $\acount_l$);
\item
$\anin_i : \mbox{ if } (\acount_l>0) \mbox{ them goto } \anin_j \mbox{ else goto } \anin_k$ (zero check $\acount_l$);
\item
$\anin_n : \mbox{HALT}$ (halting instruction).
\end{enumerate}
A configuration $(\anin,\avalc_1,\avalc_2)$ of a two-counter machine comprises an
instruction $\anin$ and values $\avalc_1$ and $\avalc_2$ of counters $\acount_1$ and $\acount_2$, respectively.
A run of a two-counter machine consists of a finite or infinite sequence of configurations,
starting from configuration $(\anin_1,0,0)$,
and where subsequent configurations are successively generated by following the rule specified in the associated configuration.
A run is finite if and only if the final instruction visited along the run is $\anin_n$ (the halting instruction).
The halting problem for two-counter machines concerns determining whether 
the unique run of the two-counter machine is finite, 
and is undecidable \cite{Min67};
hence the non-halting problem (determining whether 
the unique run of the two-counter machine is infinite) is also undecidable.

Consider a two-counter machine $\twocm$.
We reduce the non-halting problem for $\twocm$ to the maximal reachability problem in the following way.
We construct a cdPTA $\acdpta_\twocm$ with three clocks $\{ \aclock_1, \aclock_2, \aclock_3 \}$ 
by considering modules for each form that the instructions of a two-counter machine can take.
On entry to each module, we have that $\aclock_1 = \frac{1}{2^{\acount_1}}$, $\aclock_2 = \frac{1}{2^{\acount_2}}$ and
$\aclock_3 = 0$.
The module for simulating an increment instruction is shown in Figure~\ref{fig:increment}.
In location $\anin_i$, there is a delay of $1-\frac{1}{2^{\acount_1}}$,
and hence the values of the clocks on entry to location $\mathrm{B}$ are
$\aclock_1 = 0$, $\aclock_2 = \frac{1}{2^{\acount_2}} + 1-\frac{1}{2^{\acount_1}} \mod 1$
and $\aclock_3 = 1-\frac{1}{2^{\acount_1}}$.
A nondeterministic choice is then made concerning the amount of time that elapses in location $\mathrm{B}$:
note that this amount must be in the interval $(0,\frac{1}{2^{\acount_1}})$.
In order to correctly simulate the increment of counter $\acount_1$,
the choice of delay in location $\mathrm{B}$ should be equal to $\frac{1}{2^{\acount_1 + 1}}$.
On leaving location $\mathrm{B}$, a probabilistic choice is made:
the rightward outcome corresponds to continuing the simulation of the two-counter machine,
whereas the downward outcome corresponds to checking that the delay in location $\mathrm{B}$
was correctly $\frac{1}{2^{\acount_1 + 1}}$.
We write the delay in location $\mathrm{B}$ as $\frac{1}{2^{\acount_1 + 1}} + \epsilon$,
where $-\frac{1}{2^{\acount_1 + 1}} < \epsilon < \frac{1}{2^{\acount_1 + 1}}$:
hence, for a correct simulation of the increment of $\acount_1$,
we require that $\epsilon=0$.

Consider the case in which the downward outcome (from the outgoing probabilistic edge of location $\mathrm{B}$)
is taken: then the cdPTA fragment from location $\mathrm{D}$ has the role of checking whether $\epsilon=0$.
Note that, after entering location $\mathrm{D}$, 
no time elapses in locations $\mathrm{D}$ and $\mathrm{E}$
(as enforced by the reset of $\aclock_2$ to zero and the invariant condition $\aclock_2 = 0$),
and hence both clocks $\aclock_1$ and $\aclock_3$ retain the same values that they had when location $\mathrm{B}$ was left.
We show that the probability of reaching the target location $\mathrm{G}$
from location $\mathrm{D}$ is $\frac{1}{4} - \epsilon^2$,
and hence equal to $\frac{1}{4}$ if and only if $\epsilon=0$.
To see that the probability of reaching $\mathrm{G}$ from $\mathrm{D}$ is $\frac{1}{4} - \epsilon^2$,
observe that the probability is equal to 
$\frac{1}{2}(\aclock_1+\aclock_3) 
= \frac{1}{2}(\frac{1}{2^{\acount_1 + 1}} + \epsilon+ (1-\frac{1}{2^{\acount_1 + 1}}) + \epsilon) 
= \frac{1}{2} + \epsilon$
multiplied by $1-\frac{1}{2}(\aclock_1+\aclock_3) 
= \frac{1}{2} - \epsilon$,
i.e., equal to $\frac{1}{4} - \epsilon^2$.
Hence the probability of reaching location $\mathrm{G}$ from location $\mathrm{D}$
is equal to $\frac{1}{4}$ if and only if $\epsilon=0$ (otherwise, the probability is less than $\frac{1}{4}$).

\begin{figure}[t]
{

\centering
\scriptsize

\begin{tikzpicture}[->,>=stealth',shorten >=1pt,auto, thin] 

  \tikzstyle{state}=[draw=black, text=black, shape=circle, inner sep=3pt, outer sep=0pt, circle, rounded corners, fill=ly] 
  \tikzstyle{final_state}=[draw=black, text=black, shape=circle, inner sep=3pt, outer sep=0pt, circle, rounded corners, fill= lg] 

	\node(for_initial){$\aclock_1 = \frac{1}{2^{\acount_1}}$};

	\node[state, node distance=1.5cm](instruction_i)[right of=for_initial]{$\anin_i$};
	\node[yshift=-0.5cm,above of = instruction_i](inv_instruction_i){$\aclock_1,\aclock_2 \leq 1$};

	\node[state, node distance=2.5cm](B)[right of=instruction_i]{$\mathrm{B}$};
	\node[yshift=-0.3cm,above of = B](inv_B){\begin{tabular}{c}{$\aclock_2 \leq 1, $}\\{$\aclock_1, \aclock_3 < 1$}\end{tabular}};

	\node[fill=black, node distance=2.5cm] (nail_1) [right of=B] {};

	\node[state, node distance=2.5cm](C)[right of=nail_1]{$\mathrm{C}$};
	\node[yshift=-0.5cm,above of = C](inv_C){$\aclock_2,\aclock_3 \leq 1$};

	\node[state, node distance=2.5cm](instruction_j)[right of=C]{$\anin_j$};

	\node[state, node distance=1.2cm](D)[below of=nail_1]{$\mathrm{D}$};
	\node[xshift=-0.2cm,right of = D](inv_D){$\aclock_2=0$};
		
	\node[fill=black, node distance=1.2cm] (nail_2) [below of=D] {};

	\node[state, node distance=2.5cm](E)[right of=nail_2]{$\mathrm{E}$};
	\node[yshift=-0.5cm,above of = E](inv_E){$\aclock_2=0$};

	\node[state, node distance=2.5cm](F)[left of=nail_2]{$\mathrm{F}$};

	\node[fill=black, node distance=2.5cm] (nail_3) [right of=E] {};

	\node[final_state, node distance=1.2cm,accepting](G)[below of=nail_3]{$\mathrm{G}$};
	\node[state, node distance=1.2cm](H)[above of=nail_3]{$\mathrm{H}$};

	\path
	(for_initial)
		edge (instruction_i)

	(instruction_i)
		edge [above] node {$\aclock_1=1$} (B)
		edge [below] node {$\{\aclock_1\}$} (B)
		edge [loop below] node {$\aclock_2=1,\{ \aclock_2 \}$} (instruction_i)

	(B)
		edge [above] node {$0 < \aclock_1, \aclock_3 < 1$} (nail_1)
		edge [loop below] node {$\aclock_2=1,\{ \aclock_2 \}$} (B)

	(nail_1)
		edge [above] node {\colorbox{gray!15}{$\frac{1}{2}$}} (C)
		edge [below] node {$\{\aclock_1\}$} (C)
		edge [left] node {\colorbox{gray!15}{$\frac{1}{2}$}} (D)
		edge [right] node {$\{ \aclock_2 \}$} (D)

	(C)
		edge [above] node {$\aclock_3=1$} (instruction_j)
		edge [below] node {$\{\aclock_3\}$} (instruction_j)
		edge [loop below] node {$\aclock_2=1,\{ \aclock_2 \}$} (B)

	(D)
		edge [left] node {$\aclock_2 = 0$} (nail_2)

	(nail_2)
		edge [below] node {\colorbox{gray!15}{$\frac{1}{2}(\aclock_1+\aclock_3)$}} (E)
		edge [below] node {\colorbox{gray!15}{$1-\frac{1}{2}(\aclock_1+\aclock_3)$}} (F)

	(E)
		edge [above] node {$\aclock_2 = 0$} (nail_3)

	(nail_3)
		edge [left] node[pos=0.7] {\colorbox{gray!15}{$1-\frac{1}{2}(\aclock_1+\aclock_3)$}} (G)
		edge [left] node[pos=0.7] {\colorbox{gray!15}{$\frac{1}{2}(\aclock_1+\aclock_3)$}} (H);

\end{tikzpicture}

}

\caption{The cdPTA module for simulating an increment instruction for counter $\acount_1$.}
\label{fig:increment}
\end{figure}
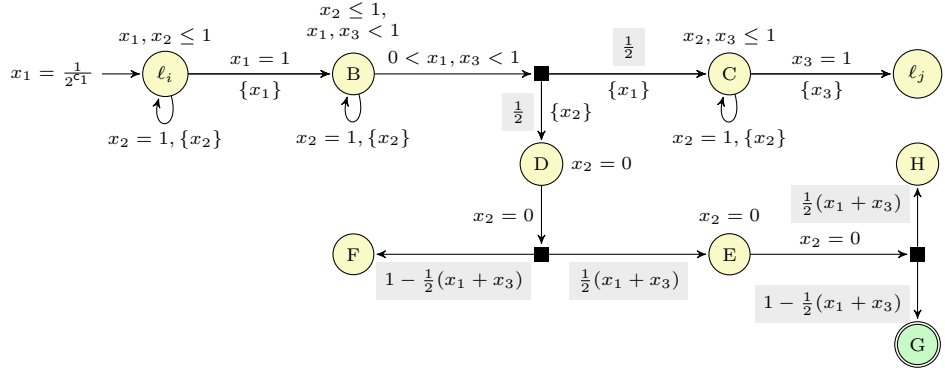


The module for simulating a decrement instruction is shown in Figure~\ref{fig:decrement}.
In a similar manner to the cdPTA fragment in Figure \ref{fig:increment} for the simulation of an increment instruction,
the only nondeterministic choice made is with regard to the amount of time spent in location $\anin_i$,
which is denoted by $\adelay$.
For the correct simulation of the decrement instruction,
$\adelay$ should equal $1 - \frac{1}{2^{\acount_1-1}}$.
The rightward outcome is taken from the probabilistic edge leaving location $\anin_i$
corresponds to the continuation of the simulation of the two-counter machine:
hence, on entry to location $\mathrm{B}$, 
we have $\aclock_1=0$, $\aclock_2=\frac{1}{2^{\acount_2}} + \adelay$ and $\aclock_3=\adelay$;
then, on entry to location $\anin_j$,
we have $\aclock_1=\adelay$, $\aclock_2=\frac{1}{2^{\acount_2}}$ and $\aclock_3=0$.

Let $\adelay = 1 - \frac{1}{2^{\acount_1-1}} + \epsilon$.
For the correct simulation of the decrement instruction,
we require that $\epsilon=0$.
The downward outcome from the probabilistic edge leaving location $\anin_i$
corresponds to checking that $\epsilon=0$,
and takes a similar form to the analogous downward edge of the cdPTA fragment for the increment instruction,
as shown in Figure \ref{fig:increment}.
Note that, on entry to location $\mathrm{C}$,
we have that $\aclock_1=1 - \frac{1}{2^{\acount_1}} + \epsilon$, $\aclock_2=0$ 
and $\aclock_3=1 - \frac{1}{2^{\acount_1-1}} + \epsilon$.
Then, on entry to location $\mathrm{D}$,
we have that $\aclock_1=0$, $\aclock_2=\frac{1}{2^{\acount_1}} - \epsilon$ 
and $\aclock_3=1 - \frac{1}{2^{\acount_1}}$.
As no time elapses in locations $\mathrm{D}$ and $\mathrm{E}$,
we have that target location $\mathrm{F}$ is then reached with probability 
$\frac{1}{2}(\aclock_2+\aclock_3) 
= \frac{1}{2}(\frac{1}{2^{\acount_1}} - \epsilon + 1 - \frac{1}{2^{\acount_1}})
= \frac{1}{2} + \frac{\epsilon}{2}$
multiplied by the probability
$1-\frac{1}{2}(\aclock_2+\aclock_3)
= \frac{1}{2} - \frac{\epsilon}{2}$,
which equals $\frac{1}{4}-\frac{\epsilon^2}{4}$.
Hence we conclude that the probability of reaching location $\mathrm{F}$ from location $\mathrm{C}$
is equal to $\frac{1}{4}$ if and only if $\epsilon=0$.

\begin{figure}[t]
{

\centering
\scriptsize

\begin{tikzpicture}[->,>=stealth',shorten >=1pt,auto, thin] 

  \tikzstyle{state}=[draw=black, text=black, shape=circle, inner sep=3pt, outer sep=0pt, circle, rounded corners, fill=ly] 
  \tikzstyle{final_state}=[draw=black, text=black, shape=circle, inner sep=3pt, outer sep=0pt, circle, rounded corners, fill= lg] 


	\node(for_initial){\begin{tabular}{c}{$\aclock_1 = \frac{1}{2^{\acount_1}},$}\\{$\aclock_3=0$}\end{tabular}};

	\node[state, node distance=1.5cm](instruction_i)[right of=for_initial]{$\anin_i$};
	\node[yshift=0.5cm,below of = instruction_i](inv_instruction_i){$\aclock_1,\aclock_3 < 1$};

	\node[fill=black, node distance=2cm] (nail_1) [right of=A] {};

	\node[state, node distance=2cm](B)[right of=nail_1]{$\mathrm{B}$};
	\node[yshift=0.5cm,below of = B](inv_B){$\aclock_2 \leq 1$};

	\node[state, node distance=2cm](instruction_j)[right of=B]{$\anin_j$};

	\node[state, node distance=2cm](C)[below of=nail_1]{$\mathrm{C}$};
	\node[yshift=0.5cm,below of = C](inv_C){$\aclock_1 \leq 1$};

	\node[state, node distance=2cm](D)[right of=C]{$\mathrm{D}$};
	\node[yshift=0.5cm,below of = D](inv_D){$\aclock_1=0$};
		
	\node[fill=black, node distance=2cm] (nail_2) [right of=D] {};

	\node[state, node distance=2cm](E)[right of=nail_2]{$\mathrm{E}$};
	\node[yshift=0.5cm,below of = E](inv_E){$\aclock_1=0$};

	\node[state, node distance=1.5cm](G)[below of=nail_2]{$\mathrm{G}$};

	\node[fill=black, node distance=2cm] (nail_3) [right of=E] {};

	\node[final_state, node distance=1.5cm,accepting](F)[above of=nail_3]{$\mathrm{F}$};
	\node[state, node distance=1.5cm](H)[below of=nail_3]{$\mathrm{H}$};


	\path
	(for_initial)
		edge (instruction_i)

	(instruction_i)
		edge [above] node {$\aclock_1=1$} (nail_1)

	(nail_1)
		edge [above] node {\colorbox{gray!15}{$\frac{1}{2}$}} (B)
		edge [below] node {$\{\aclock_1\}$} (B)
		edge [left] node {\colorbox{gray!15}{$\frac{1}{2}$}} (C)
		edge [right] node {$\{ \aclock_2 \}$} (C)

	(B)
		edge [above] node {$\aclock_3 = 1$} (instruction_j)
		edge [below] node {$\{\aclock_3\}$} (instruction_j)
		edge [loop above] node {$\aclock_2=1,\{ \aclock_2 \}$} (B)

	(C)
		edge [above] node {$\aclock_1=1$} (D)
		edge [below] node {$\{\aclock_1\}$} (D)

	(D)
		edge [above] node {$\aclock_1 = 0$} (nail_2)

	(nail_2)
		edge [above] node {\colorbox{gray!15}{$\frac{1}{2}(\aclock_2+\aclock_3)$}} (E)
		edge [left] node[pos=0.75] {\colorbox{gray!15}{$1-\frac{1}{2}(\aclock_2+\aclock_3)$}} (G)

	(E)
		edge [above] node {$\aclock_1 = 0$} (nail_3)

	(nail_3)
		edge [left] node[pos=0.75] {\colorbox{gray!15}{$1-\frac{1}{2}(\aclock_2+\aclock_3)$}} (F)
		edge [left] node[pos=0.75] {\colorbox{gray!15}{$\frac{1}{2}(\aclock_2+\aclock_3)$}} (H);


\end{tikzpicture}

}

\caption{The cdPTA module for simulating a decrement instruction for counter $\acount_1$.}
\label{fig:decrement}
\end{figure}
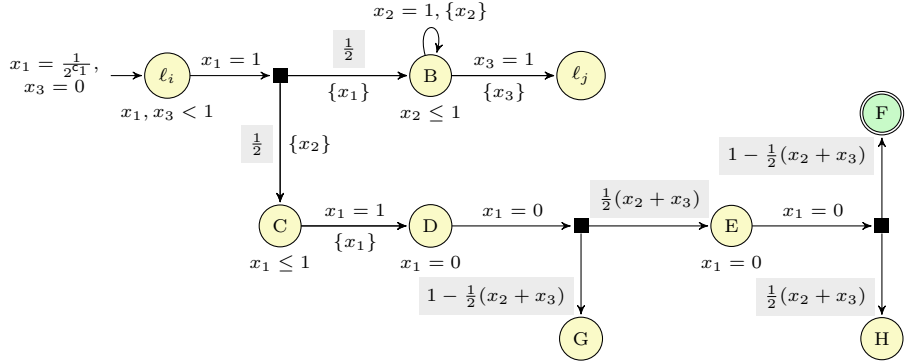

Finally, the module for a zero test instruction 
$\anin_i : \mbox{ if } (\acount_1>0) \mbox{ then goto } \anin_j \mbox{ else goto } \anin_k$
is shown in Figure~\ref{fig:zerotest}.
The module is almost identical to that of \cite{ABKMT16arxiv}, and we present it here only for completeness.
After entry to location $\anin_i$,
two probabilistic edges are enabled:
the rightward one is taken if $\acount_1 = 0$ (i.e., if $\aclock_1 = \frac{1}{2^0} = 1$),
whereas the leftward one is taken otherwise. 
Both probabilistic edges involve an outcome leading to a target location with probability $\frac{1}{4}$:
if this outcome is not taken, the cdPTA fragment then proceeds to location $\anin_j$ or $\anin_j$,
depending on which probabilistic edge was taken.

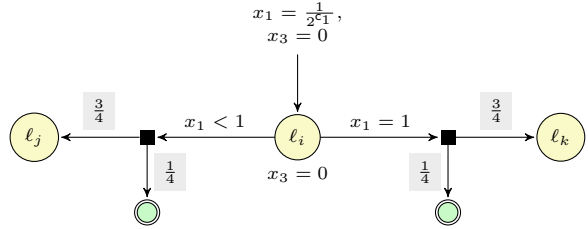
\begin{figure}[t]
{

\centering
\scriptsize

\begin{tikzpicture}[->,>=stealth',shorten >=1pt,auto, thin] 

  \tikzstyle{state}=[draw=black, text=black, shape=circle, inner sep=3pt, outer sep=0pt, circle, rounded corners, fill=ly] 
  \tikzstyle{final_state}=[draw=black, text=black, shape=circle, inner sep=3pt, outer sep=0pt, circle, rounded corners, fill= lg] 


	\node(for_initial){\begin{tabular}{c}{$\aclock_1 = \frac{1}{2^{\acount_1}},$}\\{$\aclock_3=0$}\end{tabular}};

	\node[state, node distance=1.5cm](instruction_i)[below of=for_initial]{$\anin_i$};
	\node[yshift=0.5cm,below of = instruction_i](inv_instruction_i){$\aclock_3 = 0$};

	\node[fill=black, node distance=2cm] (nail_1) [right of=instruction_i] {~};
	\node[fill=black, node distance=2cm] (nail_2) [left of=instruction_i] {~};

	\node[final_state, node distance=1cm,accepting](top)[below of=nail_1]{};
	\node[state, node distance=1.5cm](instruction_k)[right of=nail_1]{$\anin_k$};

	\node[final_state, node distance=1cm,accepting](bottom)[below of=nail_2]{};
	\node[state, node distance=1.5cm](instruction_j)[left of=nail_2]{$\anin_j$};


	\path
	(for_initial)
		edge (instruction_i)

	(instruction_i)
		edge [above] node {$\aclock_1=1$} (nail_1)
		edge [above] node {$\aclock_1<1$} (nail_2)

	(nail_1)
		edge [left] node {\colorbox{gray!15}{$\frac{1}{4}$}} (top)
		edge [above] node {\colorbox{gray!15}{$\frac{3}{4}$}} (instruction_k)

	(nail_2)
		edge [right] node {\colorbox{gray!15}{$\frac{1}{4}$}} (bottom)
		edge [above] node {\colorbox{gray!15}{$\frac{3}{4}$}} (instruction_j);


\end{tikzpicture}

}

\caption{The cdPTA module for simulating a zero-test instruction for counter $\acount_1$.}
\label{fig:zerotest}
\end{figure}

Given the construction of a cdPTA simulating the two-counter machine using the modules described above,
we can now proceed to show Theorem~\ref{thm:undec}.
The reasoning is the same as that of Lemma~5 of \cite{ABKMT16}.
If the two-counter machine halts in $k$ steps,
and the strategy of the cdPTA correctly simulates the two-counter machine
the probability of reaching a target location will be 
$\frac{1}{2} \cdot \frac{1}{4} + (\frac{1}{2})^2 \cdot \frac{1}{4} + ... + (\frac{1}{2})^k \cdot \frac{1}{4} < \frac{1}{4}$.
If the two-counter machine halts in $k$ steps,
and the strategy of the cdPTA does not correctly simulate the two-counter machine,
then this means that the probability of reaching a target location is strictly less than that corresponding to correct simulation,
given that deviation from simulation of a certain step corresponds to 
reaching the target locations with probability strictly less than $\frac{1}{4}$ in that step.
Now consider the case in which the two-counter machine does not halt:
in this case, faithful simulation in the cdPTA corresponds to reaching target locations 
with probability $\sum_{i=1}^\infty (\frac{1}{2})^i \cdot \frac{1}{4} = \frac{1}{4}$,
whereas unfaithful simulation in the cdPTA corresponds to reaching the target locations with probability 
$\sum_{i=1}^\infty (\frac{1}{2})^i \cdot \gamma_i$
where $\gamma_i \leq \frac{1}{4}$ for all $i \in \Nset$
and $\gamma_j < \frac{1}{4}$ for at least one $j \in \Nset$,
and hence $\sum_{i=1}^\infty (\frac{1}{2})^i \cdot \gamma_i < \frac{1}{4}$.
Therefore the two-counter machine does not halt 
if and only if there exists a strategy in the constructed cdPTA that reaches the target locations
with probability at least $\frac{1}{4}$,
concluding the proof of Theorem~\ref{thm:undec}.
\qed
\end{proof}

\section{Approximation of Reachability Probabilities}\label{sec:approx}
We now consider the approximation of maximal and minimal reachability probabilities of cdPTAs.
Our approach is to utilise concepts from the corner-point abstraction \cite{BBL08}.
However, while the standard corner-point abstraction is a finite-state system that 
extends the classical region graph by encoding corner points within states,
the states of our finite-state system correspond to regions, 
and we use corners of regions only to define available distributions.
Furthermore, in contrast to the widespread use of the corner-point abstraction in the context
of weighted (or priced) timed automata (see \cite{Bou15} for a survey),
and in line with the undecidability results presented in Section~\ref{sec:undec},
our variant of the corner-point abstraction does not result in a finite-state system that can be used
to obtain a quantitative measure that is arbitrarily close to the actual one:
in the context of cdPTAs, we will present a method that approximates maximal and minimal reachability properties,
and show that successive refinement of regions leads to a more accurate approximation.

First we define regions and corner points.
Let $\acdpta = (\locs, \locinit, \clocks, \inv, \pedges)$ be a cdPTA,
which we assume to be fixed throughout this section,
and let $\maxbound \in \Nset$ denote the upper bound on clocks in $\acdpta$.
We choose $k \in \Nset$, which we will refer to as the \emph{(time) granularity},
and let $\multof{k} = \{ \frac{c}{k} \setsep c \in \Nset \}$ be the set of multiples of $\frac{1}{k}$.
A \emph{$k$-region} $(\anatval,[\aclockset_0, ..., \aclockset_n])$ over $\clocks$ 
comprises: 
\begin{enumerate}
\item
a function $\anatval: \clocks \ra (\multof{k} \cap [0,\maxbound])$ assigning a multiple of $\frac{1}{k}$ 
no greater than $\maxbound$ to each clock
and 
\item
a partition $[\aclockset_0, ..., \aclockset_n]$ of $\clocks$,
where $\aclockset_i \neq \emptyset$ for all $1 \leq i \leq n$ 
and $\anatval(\aclock)=\maxbound$ implies $\aclock \in \aclockset_0$ for all $\aclock \in \clocks$.
\end{enumerate}
%
%
%
Given clock valuation $\aval \in \valuations$ and granularity $k$,
the \emph{$k$-region $\areg = (\anatval,[\aclockset_0, ..., \aclockset_n])$ 
containing $\aval$} (written $\aval \in \areg$)
satisfies the following conditions:
\begin{enumerate}
\item
$\floor{k \cdot \aval(\aclock)} {=} k \cdot \anatval(\aclock)$ for all clocks $\aclock \in \clocks$;
\item
$\aval(\aclock) {=} \anatval(\aclock)$ for all clocks $\aclock \in \aclockset_0$;
\item
$k \cdot \aval(\aclock) - \floor{k \cdot \aval(\aclock)} \leq k \cdot \aval(\anotherclock) - \floor{k \cdot \aval(\anotherclock)}$
if and only if $\aclock \in \aclockset_i$ and $\anotherclock \in \aclockset_j$ with $i \leq j$,
for all clocks $x,y \in \clocks$.
\end{enumerate}
%
Note that, rather than considering regions delimited by valuations corresponding to natural numbers,
in our definition regions are delimited by valuations corresponding to multiples of $\frac{1}{k}$. 
%
%
We use $\regs{\maxbound}{k}$ to denote the set of $k$-regions. 
For $\areg, \areg' \in \regs{\maxbound}{k}$ and clock constraint $\cc \in \ccons$,
we say that $\areg'$ is a \emph{$\cc$-satisfying time successor} of $\areg$
if there exist $\aval \in \areg$ and $\adelay \in \nnr$
such that $(\aval{+}\adelay) \in \areg'$ and $(\aval{+}\adelay') \vinz \cc$ for all $0 \leq \adelay' \leq \adelay$.
For a given $k$-region $\areg \in \regs{\maxbound}{k}$,
we let $\areg[\aclockset:=0]$ be the $k$-region that corresponds to 
resetting clocks in $\aclockset$ to 0 from clock valuations in $\areg$
(that is, $\areg[\aclockset:=0]$ contains valuations $\aval[\aclockset:=0]$ for $\aval \in \areg$).
We use $\areg_\bzero$ to denote the $k$-region that contains the valuation $\bzero$.

A \emph{corner point} $\acp = \seq{\acpconst_i}{0 \leq i \leq n} \in (\multof{k} \cap [0,\maxbound])^n$
of $k$-region $(\anatval,[\aclockset_0, ..., \aclockset_n])$ 
is defined by:
\[
\acpconst_i(\aclock) =
\left\{
\begin{array}{ll}
\anatval(\aclock) 
&
\mbox{if } \aclock \in \aclockset_j \mbox{ with } j \leq i
\\
\anatval(\aclock) + \frac{1}{k}
&
\mbox{if } \aclock \in \aclockset_j \mbox{ with } j > i \; .
\end{array}
\right.
\]
Note that a $k$-region $(\anatval,[\aclockset_0, ..., \aclockset_n])$ is associated with $n + 1$ corner points.
Let $\cpsofreg{\areg}$ be the set of corner points of $k$-region $\areg$.
Given 
granularity $k$,
we let $\cps{\maxbound}{k}$ be the set of all corner points.

Next we define the \emph{clock-dependent region graph with granularity $k$}
as the finite-state PTS $\acdrg{\acdpta}{k} = ( \rgstates{\acdpta}{k}, \rginit, \rgactions{\acdpta}{k}, \rgtrans{\acdpta}{k} )$,
where $\rgstates{\acdpta}{k} = \locs \times \regs{\maxbound}{k}$,
$\rginit = (\locinit, \areg_\bzero)$,
$\rgactions{\acdpta}{k} = \{ \rsuccact \} \cup (\cps{\maxbound}{k} \times \pedges)$,
and $\rgtrans{\acdpta}{k} = \rgtransdelay{\acdpta}{k} \cup \rgtranspedge{\acdpta}{k}$ 
where
$\rgtransdelay{\acdpta}{k} \subseteq 
\rgstates{\acdpta}{k} \times \{ \rsuccact \} \times \dist(\rgstates{\acdpta}{k})$
and $\rgtranspedge{\acdpta}{k} \subseteq 
\rgstates{\acdpta}{k} \times \cps{\maxbound}{k} \times \pedges \times \dist(\rgstates{\acdpta}{k})$
such that:
\begin{itemize}
\item
$\rgtransdelay{\acdpta}{k}$ is the smallest set of transitions 
such that $((\aloc,\areg),\rsuccact,\dirac{(\aloc,\areg')}) \in \rgtransdelay{\acdpta}{k}$
if $(\aloc,\areg')$ is an $\inv(\aloc)$-satisfying time successor of $(\aloc,\areg)$; 
\item
$\rgtranspedge{\acdpta}{k}$ is the smallest set 
such that $((\aloc,\areg),(\acp,(\aloc,\g,\pd)),\anrgdist) \in \rgtranspedge{\acdpta}{k}$ if:
\begin{enumerate}
\item
$\areg \vinz \g$;
\item
$\acp \in \cpsofreg{\areg}$;
\item
for any $(\aloc',\areg') \in \rgstates{\acdpta}{k}$, we have that
$\anrgdist(\aloc',\areg') = \sum_{\aclockset \in \reset{\areg}{\areg'}} \adtpara{\pd}{\acp}(\aclockset,\aloc')$,
where $\reset{\areg}{\areg'} = \{ \aclockset \subseteq \clocks \mid \areg[\aclockset:=0] = \areg' \}$.
\end{enumerate}
\end{itemize}

Hence the clock-dependent region graph of a cdPTA
encodes corner points within (probabilistic-edge-based) transitions,
in contrast to the corner-point abstraction,
which encodes corner points within states.
In fact, a literal application of the standard corner-point abstraction,
as presented in \cite{Bou15},
does not result in a conservative approximation,
which we now explain with reference to Example~2.
%

\paragraph{Example~2 (continued).}
Recall that the states of the corner-point abstraction comprise 
a location, a region and a corner point of the region,
and transitions maintain consistency between corner points
of the source and target states.
For example, for the cdPTA of Figure~\ref{fig:one_clock}, 
consider the state $(\mathrm{A},0<\aclock<1,\aclock=1)$,
where $0<\aclock<1$ is used to refer to the state's region component 
and $\aclock=1$ is used to refer to the state's corner point.
Then the probabilistic edge leaving location $\mathrm{A}$ is enabled
(because the state represents the situation in which clock $\aclock$ is in the interval $(0,1)$
and arbitrarily close to $1$).
Standard intuition on the corner-point abstraction 
(adapted from weights in \cite{Bou15} to probabilities in distribution templates in this paper)
specifies that, when considering probabilities of outgoing probabilistic edges,
the state $(\mathrm{A},0<\aclock<1,\aclock=1)$
should be associated with probabilities for which $\aclock=1$.
Hence the probability of making a transition to location $\mathrm{B}$ is $1$,
and the target corner-point-abstraction state is $(\mathrm{B},0<\aclock<1,\aclock=1)$.
However, now consider the probabilistic edge leaving location $\mathrm{B}$:
in this case, given that the corner point under consideration is $\aclock=1$,
the probability of making a transition to location $\mathrm{C}$ is $0$,
and hence the target location $\mathrm{D}$ is reachable with probability $0$.
Furthermore, consider the state $(\mathrm{A},0<\aclock<1,\aclock=0)$:
in this case, if the probabilistic edge leaving location $\mathrm{A}$ is taken,
then location $\mathrm{B}$ is reached with probability $0$,
and hence location $\mathrm{D}$ is again reachable with probability $0$.
We can conclude that such a direct application of the corner-point abstraction to cdPTA
is not a conservative approximation of the cdPTA,
because the maximum reachability probability in the corner-point abstraction is 0,
i.e., less than the 
maximum reachability probability of the cdPTA (which we recall is $1-\frac{\sqrt{3}}{3}$).
Instead, in our definition of the clock-dependent region graph,
we allow ``inconsistent'' corner points to be used in successive transitions:
for example, from location $\mathrm{A}$,
the outgoing probabilistic edge can be taken using the value of $\aclock$ 
corresponding to the corner point $\aclock=1$;
then, from locations $\mathrm{B}$ and $\mathrm{C}$,
the outgoing probabilistic edge can be taken using corner point $\aclock=0$.
Hence maximum probability of reaching the target location $\mathrm{D}$, with $k=1$, is 1.
\qed

Analogously to the case of cdPTA strategies,
we consider strategies of clock-dependent region graphs 
that alternate between transitions from $\rgtransdelay{\acdpta}{k}$
(time elapse transitions)
and transitions from $\rgtranspedge{\acdpta}{k}$
(probabilistic edge transitions).
Formally,
a \emph{region graph strategy} $\astrat$ is a strategy of $\acdrg{\acdpta}{k}$ such that, 
for a finite run $\afinrun \in \finruns{\acdrg{\acdpta}{k}}$ 
that has $(\aloc,\areg) \xrightarrow{\anaction,\anrgdist} (\aloc',\areg')$ as its final transition,
either $((\aloc,\areg),\anaction,\anrgdist) \in \rgtransdelay{\acdpta}{k}$ 
and $\support(\astrat(\afinrun)) \in \rgtranspedge{\acdpta}{k}$,
or $((\aloc,\areg),\anaction,\anrgdist) \in \rgtranspedge{\acdpta}{k}$ 
and $\support(\astrat(\afinrun)) \in \rgtransdelay{\acdpta}{k}$.
We write $\rgstrategies{\acdpta}{k}$ for the set of region graph strategies of $\acdrg{\acdpta}{k}$.

Let $\finallocs \subseteq \locs$ be the set of target locations,
which we assume to be fixed in the following.
Recall that $\finalstates = \{ (\aloc,\aval) \in \locs \times \valuations \setsep \aloc \in \finallocs \}$
and let 
$\finalregs{\maxbound}{k} = \{ (\aloc,\areg) \in \rgstates{\maxbound}{k} \setsep \aloc \in \finallocs \}$.
The following result specifies that the maximum (minimum) probability for reaching target locations from the initial state
of a cdPTA is bounded from above (from below, respectively) by the corresponding 
maximum (minimum, respectively) probability in the clock-dependent region graph with granularity $k$.
Similarly, the maximum (minimum) probability computed in the region graph of granularity $k$
is an upper (lower, respectively) bound on the maximum (minimum, respectively) probability 
computed in the region graph of granularity $2k$
(we note that this result can be adapted 
to hold for granularity $ck$ rather than $2k$, for  any $c \in \Nset \setminus \{ 0, 1 \}$).
The proof of the proposition can be found in the appendix.

\begin{proposition}\label{prop:approx}
~
\begin{enumerate}
\item
$\maxval{\semPTS{\acdpta}}{\finalstates}{\cdptastrategies}
\leq
\maxval{\acdrg{\acdpta}{k}}{\finalregs{\maxbound}{k}}{\rgstrategies{\acdpta}{k}}$,
$\minval{\semPTS{\acdpta}}{\finalstates}{\cdptastrategies} \geq 
\minval{\acdrg{\acdpta}{k}}{\finalregs{\maxbound}{k}}{\rgstrategies{\acdpta}{k}}$.
\item
$\maxval{\acdrg{\acdpta}{2k}}{\finalregs{\maxbound}{2k}}{\rgstrategies{\acdpta}{2k}}
\leq 
\maxval{\acdrg{\acdpta}{k}}{\finalregs{\maxbound}{k}}{\rgstrategies{\acdpta}{k}}$,
$\minval{\acdrg{\acdpta}{2k}}{\finalregs{\maxbound}{2k}}{\rgstrategies{\acdpta}{2k}}
\geq 
\minval{\acdrg{\acdpta}{k}}{\finalregs{\maxbound}{k}}{\rgstrategies{\acdpta}{k}}$.
\end{enumerate}
\end{proposition}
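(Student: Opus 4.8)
The plan is to prove both items by establishing, for each direction (maximum/minimum) and each comparison (cdPTA vs.\ granularity-$k$ region graph, and granularity-$k$ vs.\ granularity-$2k$), a simulation-type relation between the relevant PTSs that allows one to transfer strategies in the ``refining'' direction. For item~1, the key object is a relation $\re$ between states $(\aloc,\aval)$ of $\semPTS{\acdpta}$ and states $(\aloc,\areg)$ of $\acdrg{\acdpta}{k}$ given by $(\aloc,\aval)\re(\aloc,\areg)$ iff $\aval\in\areg$. I would first check that $\re$ respects time-elapse transitions: if $(\aloc,\aval)\xrightarrow{\adelay}(\aloc,\aval{+}\adelay)$ in the semantics with the invariant holding throughout, then $(\aloc,\areg')$ with $\aval{+}\adelay\in\areg'$ is an $\inv(\aloc)$-satisfying time successor of $(\aloc,\areg)$, so the matching $\rsuccact$-transition exists. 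The crux is the probabilistic-edge step: given $(\aloc,\aval)\re(\aloc,\areg)$ with $\aval\vinz\g$, and a probabilistic edge $\apedge=(\aloc,\g,\pd)$, I must exhibit a corner point $\acp\in\cpsofreg{\areg}$ such that the region-graph distribution $\anrgdist$ built from $\adtpara{\pd}{\acp}$ ``dominates'' the semantic distribution $\adtpara{\pd}{\aval}$ with respect to reaching $\finalstates$. Because each outcome probability $\adtpara{\pd}{\aval}(\outcome)=\sum_{\aclock}\plf{\apedge,\outcome}{\aclock}(\aval(\aclock))$ is a sum of \emph{linear} functions on the region (all interval endpoints of the partitions are integers, hence each region lies within a single linear piece for every clock), the map $\acp\mapsto$ (reachability value under the resulting distribution) is an affine function of the corner point, maximised (for the max case) at one of the finitely many corners of $\areg$; so picking that corner gives a region-graph transition whose value is $\geq$ the semantic one. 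Summing/integrating this pointwise domination along runs via an induction on the structure of the value function (or on the number of steps in a value-iteration unfolding) yields the inequality $\maxval{\semPTS{\acdpta}}{\finalstates}{\cdptastrategies}\leq\maxval{\acdrg{\acdpta}{k}}{\finalregs{\maxbound}{k}}{\rgstrategies{\acdpta}{k}}$; the minimum case is symmetric, choosing the corner minimising the value.

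For item~2, I would use the fact that $\multof{2k}\supseteq\multof{k}$, so every $k$-region is a disjoint union of $2k$-regions and, crucially, every corner point of a $k$-region is also a corner point of some $2k$-region contained in it (and more generally the corners available at granularity $2k$ include all those available at granularity $k$, plus extra ``half-way'' ones). Hence I would set up a relation between $\acdrg{\acdpta}{k}$ and $\acdrg{\acdpta}{2k}$ mapping a $2k$-state $(\aloc,\areg_{2k})$ to the unique $k$-state $(\aloc,\areg_k)$ with $\areg_{2k}\subseteq\areg_k$, and argue that any transition available at granularity $k$ — in particular any probabilistic-edge transition using a $k$-corner $\acp$, which is also a $2k$-corner of the appropriate $2k$-subregion — can be matched at granularity $2k$ with the same or a better value. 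Combined with the fact that the linear-piece structure is unchanged (interval endpoints are still integers, so refining to $\frac1{2k}$-granularity does not split any linear piece), the same affine-maximisation argument shows $\acdrg{\acdpta}{2k}$ offers at least as good a maximum (at most as bad a minimum) as $\acdrg{\acdpta}{k}$, giving the stated inequalities.

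The main obstacle I anticipate is making the ``pointwise domination of distributions implies domination of reachability values'' step fully rigorous in the presence of unbounded runs and nondeterminism: one needs the reachability value to be the least fixed point of a monotone value-iteration operator, then prove by induction on the iteration index that each $k$-step under-/over-approximant in one system is dominated by that of the other, using at each probabilistic step the corner-point selection described above, and finally pass to the limit. A secondary technical nuisance is the bookkeeping around resets and the alternation constraint defining cdPTA strategies and region-graph strategies — one must check that the matched strategy respects the delay/edge alternation, which is immediate since the simulation matches delay-steps to delay-steps and edge-steps to edge-steps. Handling target locations correctly (a run reaches $\finalstates$ exactly when the matched run reaches $\finalregs{\maxbound}{k}$, since the relation preserves the location component) is routine. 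I expect the affine-maximisation-over-corners observation to be the conceptual heart of the argument, and the rest to be careful but standard simulation bookkeeping.
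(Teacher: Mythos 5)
Your treatment of part~(1) is essentially sound and rests on the same key observation as the paper --- that over a single $k$-region every outcome probability $\adtpara{\pd}{\aval}(\outcome)=\sum_{\aclock\in\clocks}\plf{\apedge,\outcome}{\aclock}(\aval(\aclock))$ is affine in $\aval$, so its behaviour is controlled by the corner points --- but packages it differently. The paper proves (Lemmata~\ref{lem:existence_weights} and~\ref{lem:weight_pedge}) that $\adtpara{\pd}{\aval}$ is \emph{exactly} the convex combination $\sum_{\acp\in\cpsofreg{\areg}}\weight_\acp\cdot\adtpara{\pd}{\acp}$ induced by writing $\aval$ as a convex combination of corners, and feeds the resulting combined transition into an off-the-shelf probabilistic-simulation transfer theorem (Proposition~\ref{prop:hahn}); you instead select a single value-maximising corner and push the domination through a value-iteration induction. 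Your route works for reachability but must re-prove the transfer by hand (fixed-point convergence, the reset bookkeeping of Lemma~\ref{lem:resets}, and the fact that distinct elements of a distribution's support land in distinct regions, Lemma~\ref{lem:concrete_support}), whereas the exact convex-combination formulation is objective-independent and delegates that induction to the simulation theorem.

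Part~(2), however, is argued in the wrong direction, and as written it would establish the reverse inequality. You propose to match each transition of $\acdrg{\acdpta}{k}$ by a transition of $\acdrg{\acdpta}{2k}$, on the grounds that every $k$-corner is also available at granularity $2k$; if such a matching succeeded it would show that $\acdrg{\acdpta}{2k}$ is at least as permissive as $\acdrg{\acdpta}{k}$, i.e.\ $\maxval{\acdrg{\acdpta}{k}}{\finalregs{\maxbound}{k}}{\rgstrategies{\acdpta}{k}}\leq\maxval{\acdrg{\acdpta}{2k}}{\finalregs{\maxbound}{2k}}{\rgstrategies{\acdpta}{2k}}$ --- the opposite of the statement. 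Moreover the matching itself fails: from a fixed $2k$-region only that region's own corners are available, so for one clock the $2$-region $0<\aclock<\frac{1}{2}$ has corners $0$ and $\frac{1}{2}$ and cannot use the corner $\aclock=1$ of the containing $1$-region; the paper's Example~2 (maximum probability $1$ at $k=1$ versus $\frac{1}{2}$ at $k=2$) is a concrete witness that $\acdrg{\acdpta}{2k}$ cannot reproduce all behaviour of $\acdrg{\acdpta}{k}$. The correct argument (Lemmata~\ref{lem:corner_k_2k} and~\ref{lem:one-step_pedge_k_2k}) goes the other way: every corner of a $2k$-region lies in the convex hull of the corners of the unique containing $k$-region, so every probabilistic-edge transition of $\acdrg{\acdpta}{2k}$ is matched by a \emph{combined} transition of $\acdrg{\acdpta}{k}$, yielding a simulation of $\acdrg{\acdpta}{2k}$ by $\acdrg{\acdpta}{k}$ and hence the stated $\leq$ for maxima and $\geq$ for minima.
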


\paragraph{Example~2 (continued).}
We give the intuition underlying Proposition~\ref{prop:approx} using Example~2 (Figure~\ref{fig:one_clock}),
considering the maximum probability of reaching the target location $\mathrm{D}$.
When $k=1$, as described above, 
the maximum probability of reaching $\mathrm{D}$ is $1$.
Instead, for $k=2$, the maximum probability of reaching location $\mathrm{D}$ corresponds to taking
the probabilistic edge from location $\mathrm{A}$ for the corner point $\aclock = \frac{1}{2}$
corresponding to the $2$-region $0<\aclock<\frac{1}{2}$
and the probabilistic edges from locations $\mathrm{B}$ and $\mathrm{C}$ for corner point $\aclock = 0$,
again for the $2$-region $0<\aclock<\frac{1}{2}$
i.e., the probability is $\frac{1}{2}$.
With granularity $k=4$, the maximum probability of reaching location $\mathrm{D}$ is $0.328125$,
obtained by taking the probabilistic edge from $\mathrm{A}$ for the corner point $\aclock = \frac{1}{2}$,
and the probabilistic edges from $\mathrm{B}$ and $\mathrm{C}$ for corner point $\aclock = \frac{1}{4}$,
where the $4$-region used in all cases is $\frac{1}{4}<\aclock<\frac{1}{2}$.
\qed



\begin{figure}[t]
\centering
\includegraphics[scale=0.4]{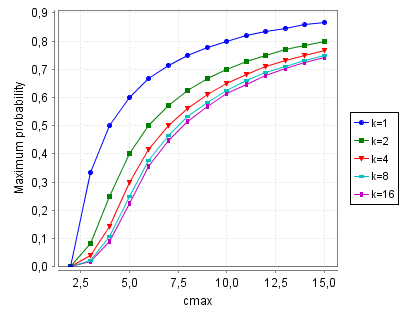}
\caption{Maximum probability of reaching location $\checkmark$ in the cdPTA of Figure~\ref{fig:robot}.
\label{fig:results}}
\end{figure}

\paragraph{Example~1 (continued).}
In Figure~\ref{fig:results} we plot the values of the maximum probability of reaching location $\checkmark$
in the example of Figure~\ref{fig:robot}
for various values of $c_\mathrm{max}$ and $k$,
obtained by encoding the clock-dependent region graph as a finite-state PTS
and using {\sc Prism} \cite{prism}.
For this example, the difference between 
the probabilities obtained from low values of $k$ is substantial.
We note that the number of states of the largest instance that we considered here 
(for $k=16$ and $c_\mathrm{max}=15$) was 140174.
\qed

\section{Conclusion}\label{sec:concl}
In this paper we  presented cdPTAs,
an extension of PTAs in which probabilities can depend on the values of clocks.
We have shown that a basic probabilistic model checking problem,
maximal reachability, is undecidable for cdPTAs with at least three clocks.
One direction of future research could be attempting to improve these results by considering cdPTAs with one or two clocks,
or identifying other kinds of subclass of cdPTAs for which for which probabilistic reachability is decidable:
for example, we conjecture decidability can be obtained 
for cdPTAs in which all clock variables are reset after utilising a probabilistic edge 
that depends non-trivially on clock values.
Furthermore, we conjecture that qualitative reachability problems
(whether there exists a strategy such that the target locations are reached with probability strictly greater than $0$,
or equal to $1$)
are decidable (and in exponential time) for cdPTAs for which the piecewise linear functions are bounded away from $0$
by a region graph construction.
The case of piecewise linear functions that can approach arbitrarily closely to $0$
requires more care
(because non-forgetful cycles, in the terminology of \cite{BA11}, 
can lead to convergence of a probability used along a cdPTA path to $0$).
We also presented a conservative overapproximation method for cdPTAs.
At present this method gives no guarantees on the distance of the obtained bounds to the actual optimal probability:
future work could address this issue, by extending the region graph construction 
from a PTS to a stochastic game 
(to provide upper and lower bounds on the maximum/minimum probability in the manner of \cite{KKNP10}),
or by considering approximate relations (by generalising the results of \cite{DAK12,BA17} 
from Markov chains to PTSs).

\subsubsection{Acknowledgments.}
The inspiration for cdPTA arose from a discussion with Patricia Bouyer on the corner-point abstraction.
Thanks also to Holger Hermanns,
who expressed interest in a cdPTA-like formalism in a talk at Dagstuhl Seminar 14441.

\bibliographystyle{abbrv}
\bibliography{cdpta}

\appendix
\section{Proof of Proposition~\ref{prop:approx}}\label{sec:appendix}


\subsection{Preliminaries}
Given set $Q$,
let $\set{\adist_i}{i \in I} \subseteq \dist(Q)$ be a set of distributions
and $\set{\ctweight_i}{i \in I}$ be a set of weights such that $\ctweight_i > 0$ for all $i \in I$
and $\sum_{i \in I} \ctweight_i = 1$.
Then we write $\distsum_{i \in I} \ctweight_i \cdot \adist_i$ to refer to the distribution over $Q$
such that $(\distsum_{i \in I} \ctweight_i \cdot \adist_i)(q) = \sum_{i \in I} \ctweight_i \cdot \adist_i(q)$
for each $q \in Q$.

Let $\anequiv \subseteq \states \times \states$ be an equivalence relation over $\states$.
We say that $\anequiv$ \emph{respects} $\astateset \subseteq \states$
if $\astateset$ is the union of states contained in some set of equivalence classes of $\anequiv$.
Given two distributions $\adist,\adist'$ over $\states$,
we write $\adist \anequiv \adist'$ if 
$\sum_{\astate \in \anec} \adist(\astate) 
= 
\sum_{\astate \in \anec} \adist'(\astate)$
for all equivalence classes $\anec$ of $\anequiv$.
A \emph{combined transition} from state $\astate \in \states$ is a pair 
$(\set{(\astate,\anaction_i,\adist_i)}{i \in I}, \set{\ctweight_i}{i \in I})$
such that $(\astate,\anaction_i,\adist_i) \in \ptstrans$ and $\ctweight_i > 0$
for all $i \in I$, and $\sum_{i \in I} \ctweight_i = 1$.
Let $\actset \subseteq \actions$ be a set of actions.
Then a \emph{probabilistic simulation respecting $\anequiv$ and $\actset$} is a relation
$\aprobsim \subseteq \states \times \states$ such that $\astate \aprobsim \anotherstate$
implies that (1)~$\astate \anequiv \anotherstate$,
and (2)~for each transition $(\astate,\anaction,\adist) \in \ptstrans$,
there exists a combined transition 
$(\set{(\anotherstate,\anaction_i,\adist_i)}{i \in I}, \set{\ctweight_i}{i \in I})$
such that $\adist \anequiv \distsum_{i \in I} \ctweight_i \cdot \adist_i$,
$\set{\anaction_i}{i \in I} \subseteq \actset$ if $\anaction \in \actset$,
and $\set{\anaction_i}{i \in I} \subseteq \actions \setminus \actset$ 
if $\anaction \in \actions \setminus \actset$.\footnote{
Our notion of probabilistic simulation respecting an equivalence relation
is stronger than that of probabilistic simulation of \cite{SegPhD}.
Also note that we do not require actions to be matched in 
the definition of probabilistic simulation respecting $\anequiv$,
although we \emph{do} require that matching actions 
are either all in $\actset$ or all in $\actions \setminus \actset$.}
%

Next, we consider strategies that alternate between actions 
in a certain set $\actset \subseteq \actions$
and actions in the complement set $\actions \setminus \actset$.
Formally, an \emph{$\actset$-alternating strategy} $\astrat$ is a strategy such that,
for finite run $\afinrun \in \finruns{\aPTS}$ that has $\astate \xrightarrow{\anaction,\adist} \astate'$
as its final transition,
then 
$\{ \anaction' \in \actions \setsep (\astate,\anaction',\adist) \in \support(\astrat(\afinrun)) \}
\subseteq \actset$
if $\anaction \in \actions \setminus \actset$,
and $\{ \anaction' \in \actions \setsep (\astate,\anaction',\adist) \in \support(\astrat(\afinrun)) \}
\subseteq \actions \setminus \actset$
if $\anaction \in \actset$.
Let $\altstrategiespts{\aPTS}{\actset}$ be the set of $\actset$-alternating strategies of $\aPTS$;
when the context is clear, we write simply $\altstrategies{\actset}$ 
rather than $\altstrategiespts{\aPTS}{\actset}$.

Given two PTSs $\aPTS_1 = ( \states_1, \sinit_1, \actions_1, \ptstrans_1 )$
and $\aPTS_2 = ( \states_2, \sinit_2, \actions_2, \ptstrans_2 )$,
their disjoint union is defined as the PTS
$( \states_1 \uplus \states_2, \_, \actions_1 \uplus \actions_2, \ptstrans_1 \uplus \ptstrans_2 )$
(where the initial state is irrelevant and is hence omitted).
%
The following result is essentially identical to \cite[Lemma~3.17, Lemma~3.18]{Hah13}
(which in turn rely on \cite[Theorem~8.6.1]{SegPhD}).
%
\begin{proposition}\label{prop:hahn}\cite{Hah13}
Let $\actset_1 \subseteq \actions_1$, let $\actset_2 \subseteq \actions_2$,
and let $\anequiv$ be an equivalence relation over $\states_1 \uplus \states_2$ that respects $\finalstates$ .
If $\sinit_1 \aprobsim \sinit_2$ for a probabilistic simulation respecting $\anequiv$
and $\actset_1 \uplus \actset_2$,
then 
$\maxval{\aPTS_1}{\finalstates}{\altstrategies{\actset_1}} 
\leq 
\maxval{\aPTS_2}{\finalstates}{\altstrategies{\actset_2}}$
and 
$\minval{\aPTS_1}{\finalstates}{\altstrategies{\actset_1}} 
\geq 
\minval{\aPTS_2}{\finalstates}{\altstrategies{\actset_2}}$.
\end{proposition}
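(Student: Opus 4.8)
The plan is to derive both inequalities from a single strategy-transfer lemma obtained by transporting strategies along the simulation. Concretely, I would prove: \emph{for every strategy $\astrat_1 \in \altstrategies{\actset_1}$ of $\aPTS_1$ there is a strategy $\astrat_2 \in \altstrategies{\actset_2}$ of $\aPTS_2$ with} $\prob{\astrat_1}{\sinit_1}(\Diamond \finalstates) = \prob{\astrat_2}{\sinit_2}(\Diamond \finalstates)$. Granting this lemma, the maximum inequality follows because every value attained on the left equals a value attainable on the right, hence is at most $\maxval{\aPTS_2}{\finalstates}{\altstrategies{\actset_2}}$; taking the supremum over $\astrat_1$ gives $\maxval{\aPTS_1}{\finalstates}{\altstrategies{\actset_1}} \le \maxval{\aPTS_2}{\finalstates}{\altstrategies{\actset_2}}$. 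Symmetrically, each value on the left equals a value on the right and so is at least $\minval{\aPTS_2}{\finalstates}{\altstrategies{\actset_2}}$; taking the infimum over $\astrat_1$ yields the minimum inequality. Thus everything reduces to the mimicking lemma, and note that the single equality feeds both directions precisely because the simulation is one-sided.

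To prove the lemma I would construct $\astrat_2$ together with a coupling of the two induced path measures, by induction on run length, maintaining the invariant that the coupling is supported on pairs of equal-length finite runs $(\afinrun_1,\afinrun_2)$ whose marginals are the measures induced by $\astrat_1$ and $\astrat_2$ and whose endpoints satisfy $\last(\afinrun_1) \aprobsim \last(\afinrun_2)$ (hence $\last(\afinrun_1) \anequiv \last(\afinrun_2)$). For the extension step, from a coupled pair with $\astate := \last(\afinrun_1) \aprobsim \last(\afinrun_2) =: \anotherstate$, the choice $\astrat_1(\afinrun_1) \in \dist(\ptstrans_1)$ is a distribution over transitions $(\astate,\anaction,\adist)$, and for each transition in its support clause~(2) of the simulation furnishes a combined transition $(\set{(\anotherstate,\anaction_i,\adist_i)}{i \in I},\set{\ctweight_i}{i \in I})$ with $\adist \anequiv \distsum_{i \in I}\ctweight_i \cdot \adist_i$. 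I let $\astrat_2$ play exactly this combined transition from $\anotherstate$ and extend the coupling via the weight-function characterisation of $\adist \anequiv \distsum_{i \in I}\ctweight_i \cdot \adist_i$, so that paired successors again lie in one $\anequiv$-class. The action clause of~(2)---$\set{\anaction_i}{i \in I}$ lies in $\actset_2$ exactly when $\anaction \in \actset_1$---guarantees that $\astrat_2$ respects the required alternation, so $\astrat_2 \in \altstrategies{\actset_2}$.

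It then remains to read off the probabilities. The event $\Diamond \finalstates$ is the increasing union over $n$ of the events ``$\finalstates$ is visited within the first $n$ steps'', so each of $\prob{\astrat_1}{\sinit_1}(\Diamond \finalstates)$ and $\prob{\astrat_2}{\sinit_2}(\Diamond \finalstates)$ is the limit of its finite-horizon probabilities. Since $\anequiv$ respects $\finalstates$ and the coupling keeps the two $n$-th states in a common $\anequiv$-class, membership of $\finalstates$ agrees along paired runs at every step; hence the finite-horizon probabilities coincide for all $n$, and passing to the limit gives the claimed equality, completing the lemma.

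The main obstacle is the extension step itself: clause~(2) matches one-step distributions only \emph{modulo} $\anequiv$, whereas iterating the construction requires the freshly paired endpoints to be related by the finer relation $\aprobsim$ so that clause~(2) can be reapplied. Upgrading this per-step $\anequiv$-matching to an $\aprobsim$-respecting coupling of successors, and simultaneously verifying that the resulting $\astrat_2$ is a genuine strategy---a function of the $\aPTS_2$-history alone, well defined even though a single $\aPTS_1$-transition is split into the $I$-indexed mixture and several $\aPTS_1$-histories may be coupled to one $\aPTS_2$-history---is the technical core. I would handle it exactly as in Segala's Theorem~8.6.1 and Hahn's Lemmas~3.17--3.18, defining $\astrat_2$ on each $\aPTS_2$-history as the mixture dictated by the conditional coupling probabilities; the finite-support restriction on strategies (via Hahn's Lemma~4.10, assumed earlier) keeps every combined transition a finite mixture and thereby makes the construction well defined.
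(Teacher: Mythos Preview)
The paper does not prove this proposition at all: it is quoted as a known result, attributed to Lemmas~3.17--3.18 of Hahn's thesis, which in turn rest on Segala's Theorem~8.6.1. Your sketch is precisely the standard Segala-style strategy-transfer/coupling argument that those references carry out, and you explicitly defer the technical core to the very same citations, so your approach coincides with what the paper invokes rather than proves.

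One remark on the obstacle you flag. You are right that clause~(2) in the paper's definition matches successor distributions only modulo $\anequiv$, not via a lifting of $\aprobsim$, so in full generality the ``upgrade'' you need to iterate is not automatic from the stated definition. In the two places the paper actually applies the proposition this causes no trouble, because each $\anequiv$-class contains exactly one state of $\aPTS_2$ and $\aprobsim$ is defined so that every $\aPTS_1$-state is related to that unique representative; hence $\anequiv$-matching of successors already yields $\aprobsim$-matching across the two systems. If you want a self-contained proof at the level of generality of the proposition as written, you should either (i) check that Hahn's formulation indeed carries this through, or (ii) add the mild structural hypothesis just mentioned, which is all the paper needs.
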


\subsection{Approximating a cdPTA by the clock-dependent region graph with granularity $k$}

In order to show part~(1) of Proposition~\ref{prop:approx},
we first consider the following intermediate lemmata.
The first lemma specifies that
the sets of clocks that, when reset to 0, are used to transform valuation $\aval$ to valuation $\aval'$
are the same as the sets of clocks used to transform the $k$-region containing $\aval$
to the $k$-region containing the valuation $\aval'$.

\begin{lemma}\label{lem:resets}
Let $\maxbound, k \in \Nset$ and $\aval, \aval' \in \valuations$
such that, for each clock $\aclock \in \clocks$, either $\aval'(\aclock)=\aval(\aclock)$ or $\aval'(\aclock)=0$.
Using $\areg, \areg' \in \regs{\maxbound}{k}$ to denote the $k$-regions such that $\aval \in \areg$ and $\aval' \in \areg'$,
we have $\reset{\aval}{\aval'} = \reset{\areg}{\areg'}$.
\end{lemma}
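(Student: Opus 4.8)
The plan is to prove the two set inclusions $\reset{\aval}{\aval'} \subseteq \reset{\areg}{\areg'}$ and $\reset{\areg}{\areg'} \subseteq \reset{\aval}{\aval'}$ separately. Both use two elementary facts about the (scaled) region construction: (i) the $k$-regions partition the valuation space, so every clock valuation lies in a \emph{unique} $k$-region; and (ii) directly from the definition of $\areg[\aclockset:=0]$, if $\aval'' \in \areg$ then $\aval''[\aclockset:=0] \in \areg[\aclockset:=0]$.

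For the first inclusion I would take $\aclockset \in \reset{\aval}{\aval'}$, i.e.\ $\aval[\aclockset:=0] = \aval'$. Since $\aval \in \areg$, fact~(ii) gives $\aval[\aclockset:=0] \in \areg[\aclockset:=0]$; as $\aval[\aclockset:=0] = \aval'$ and $\aval' \in \areg'$, the valuation $\aval'$ lies in both $\areg[\aclockset:=0]$ and $\areg'$, so by the uniqueness in~(i) we get $\areg[\aclockset:=0] = \areg'$, that is, $\aclockset \in \reset{\areg}{\areg'}$. This direction does not use the hypothesis relating $\aval$ and $\aval'$.

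For the reverse inclusion I would take $\aclockset \in \reset{\areg}{\areg'}$, i.e.\ $\areg[\aclockset:=0] = \areg'$, and show $\aval[\aclockset:=0] = \aval'$. By fact~(ii), $\aval[\aclockset:=0] \in \areg[\aclockset:=0] = \areg'$, and by hypothesis $\aval' \in \areg'$, so both valuations lie in the same $k$-region $\areg' = (\anatval',[\aclockset'_0,\dots,\aclockset'_n])$; moreover, by the hypothesis of the lemma together with the definition of $\aval[\aclockset:=0]$, each of the two valuations assigns to every clock $\aclock$ a value in $\{\aval(\aclock),0\}$. I would then argue clock by clock that they agree. When $\aval(\aclock)=0$ this is trivial. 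Otherwise, if one of the two valuations assigns $0$ to $\aclock$, then from condition~(1) of region membership it has $\anatval'(\aclock)=0$, and by the standard structural fact that a clock pinned to a multiple of $\tfrac1k$ lies in the block $\aclockset'_0$ we get $\aclock \in \aclockset'_0$; condition~(2) then forces the other valuation, being in $\areg'$ too, to assign $\anatval'(\aclock)=0$ to $\aclock$ as well, so the two agree at $\aclock$. If neither assigns $0$ to $\aclock$, then both assign the only remaining value $\aval(\aclock)$, and again they agree. Hence $\aval[\aclockset:=0] = \aval'$, i.e.\ $\aclockset \in \reset{\aval}{\aval'}$.

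The only delicate ingredient is the structural fact invoked above: that in a $k$-region $(\anatval,[\aclockset_0,\dots,\aclockset_n])$ the block $\aclockset_0$ is exactly the set of clocks that all valuations of the region assign a multiple of $\tfrac1k$. This is the usual property of (scaled) regions; I would either derive it from conditions~(2) and~(3) of region membership (using the ``iff'' in condition~(3) with a witness clock in $\aclockset_0$) or record it as a one-line observation just before the lemma. Everything else is routine unfolding of the definitions of $\reset{\cdot}{\cdot}$, of $\areg[\aclockset:=0]$, and of region membership, together with the partition property.
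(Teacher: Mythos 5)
Your proof is correct, but it is organised differently from the paper's. The paper does not argue by double inclusion; instead it characterises both collections explicitly: writing $\aclockset^0_\aval$, $\aclockset^0_{\aval'}$ for the zero-sets of the two valuations and $\aclockset^0_\areg$, $\aclockset^0_{\areg'}$ for the zero-sets of their regions, it observes that $\aclockset \in \reset{\aval}{\aval'}$ iff $\aclockset^0_{\aval'} \setminus \aclockset^0_\aval \subseteq \aclockset \subseteq \aclockset^0_{\aval'}$ (using the hypothesis that $\aval'(\aclock) \in \{\aval(\aclock),0\}$), that the analogous characterisation holds for $\reset{\areg}{\areg'}$, and that the zero-sets of a valuation and of its region coincide. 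Your first inclusion instead leans on the well-definedness of $\areg[\aclockset{:=}0]$ together with the partition property, and your second inclusion is a pointwise comparison of $\aval[\aclockset{:=}0]$ and $\aval'$ inside the common region $\areg'$. Both arguments ultimately rest on the same structural fact --- that a clock's being $0$ is determined by the region, i.e.\ by $\anatval(\aclock)=0$ together with $\aclock \in \aclockset_0$ --- which the paper invokes in one line (``by the definition of $k$-regions \dots'') while you explicitly flag it as the one delicate ingredient and sketch its derivation from conditions (2) and (3); that is, if anything, more careful than the published proof. The paper's route buys a reusable explicit description of the reset collections as an interval of sets between $\aclockset^0_{\aval'} \setminus \aclockset^0_\aval$ and $\aclockset^0_{\aval'}$; yours avoids computing that description but is slightly longer and additionally relies on regions forming a partition and on reset commuting with region membership, both of which the paper's definitions do supply.
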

\begin{proof}
%
Let $\aclockset^0_\aval$ be the set of clocks that are equal to 0 in $\aval$,
and let $\aclockset^0_{\aval'}$ be the set of clocks that are equal to 0 in $\aval'$.
Similarly, let $\aclockset^0_\areg$ be the set of clocks that are equal to 0 in valuations in $\areg$,
and let $\aclockset^0_{\areg'}$ be the set of clocks that are equal to 0 in valuations in $\areg'$.
By the definition of $k$-regions,
for any clock $\aclock \in \clocks$,
$\aval(\aclock)=0$ if and only if $\aval''(\aclock)=0$ for all $\aval'' \in \areg$,
and
$\aval'(\aclock)=0$ if and only if $\aval''(\aclock)=0$ for all $\aval'' \in \areg'$.
Hence $\aclockset^0_\aval = \aclockset^0_\areg$ and $\aclockset^0_{\aval'} = \aclockset^0_{\areg'}$.
Given that either $\aval'(\aclock)=\aval(\aclock)$ or $\aval'(\aclock)=0$ for each $\aclock \in \clocks$,
we have that $\aclockset \in \reset{\aval}{\aval'}$ if and only if 
$\aclockset^0_{\aval'} \setminus \aclockset^0_\aval \subseteq \aclockset \subseteq \aclockset^0_{\aval'}$.
Similarly, $\aclockset \in \reset{\areg}{\areg'}$ if and only if 
$\aclockset^0_{\areg'} \setminus \aclockset^0_\areg \subseteq \aclockset \subseteq \aclockset^0_{\areg'}$.
Therefore we have that $\aclockset \in \reset{\aval}{\aval'}$ if and only if 
$\aclockset^0_{\aval'} \setminus \aclockset^0_\aval \subseteq \aclockset \subseteq \aclockset^0_{\aval'}$
if and only if 
$\aclockset^0_{\areg'} \setminus \aclockset^0_\areg \subseteq \aclockset \subseteq \aclockset^0_{\areg'}$
if and only if 
$\aclockset \in \reset{\areg}{\areg'}$.
Hence $\reset{\aval}{\aval'} = \reset{\areg}{\areg'}$.
%
%
%
%
%
%
\qed
\end{proof}

A \emph{set of weights} is a finite set $\set{\weight_i}{i \in I}$ 
such that $\weight_i \in (0,1]$ for each $i \in I$ and $\sum_{i \in I} \weight_i = 1$.
In the following, we use an interpretation of valuations and corner points 
as points in $\nnr^{|\clocks|}$-space,
allowing the use of operations such as $\weight \cdot \aval$ and $\aval + \aval'$
(interpreted as $(\weight \cdot \aval)(\aclock) = \weight \cdot \aval(\aclock)$
and $(\aval + \aval')(\aclock) = \aval(\aclock) + \aval'(\aclock)$ for all clocks $\aclock \in \clocks$, 
respectively).

\begin{lemma}\label{lem:existence_weights}
Let $\aval \in \valuations$, let $k \in \Nset$ 
and let $\areg \in \regs{\maxbound}{k}$ be the unique $k$-region such that $\aval \in \areg$.
Then there exists a set of weights $\set{\weight_\acp}{\acp \in \cpsofreg{\areg}}$
such that 
$\aval = \sum_{\acp \in \cpsofreg{\areg}} \weight_\acp \cdot \acp$.
\end{lemma}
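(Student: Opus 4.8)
The plan is to establish that $\aval$ lies in the convex hull of the finite corner-point set $\cpsofreg{\areg}$ and to read the convex coefficients off directly from the ``layered'' structure of the $k$-region $\areg$. Write $\areg = (\anatval,[\aclockset_0,\dots,\aclockset_n])$ and, for $0 \leq i \leq n$, let $\acp_i \in \cpsofreg{\areg}$ be the corner point with $\acp_i(\aclock) = \anatval(\aclock)$ whenever $\aclock \in \aclockset_j$ with $j \leq i$, and $\acp_i(\aclock) = \anatval(\aclock) + \tfrac{1}{k}$ otherwise; these are precisely the $n+1$ (pairwise distinct) elements of $\cpsofreg{\areg}$.

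First I would extract the ``fractional profile'' of $\aval$ from the fact that $\aval \in \areg$. By condition~(1) of $k$-region membership, $k \cdot \aval(\aclock) - \floor{k \cdot \aval(\aclock)} \in [0,1)$ for every clock $\aclock$, and by conditions~(2) and~(3) this quantity is constant on each block $\aclockset_i$, vanishes exactly on $\aclockset_0$, and strictly increases with $i$. Writing $\beta_i$ for its common value on $\aclockset_i$ (with the convention $\beta_0 := 0$, which is consistent when $\aclockset_0 \neq \emptyset$ and forces $\beta_1 > 0$ otherwise), I obtain $0 = \beta_0 < \beta_1 < \cdots < \beta_n < 1$; moreover, since $\floor{k \cdot \aval(\aclock)} = k \cdot \anatval(\aclock)$, we have $\aval(\aclock) = \anatval(\aclock) + \beta_i / k$ for each $\aclock \in \aclockset_i$.

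Next I would define the weights by telescoping the $\beta_i$: put $\weight_{\acp_i} := \beta_{i+1} - \beta_i$ for $0 \leq i \leq n-1$ and $\weight_{\acp_n} := 1 - \beta_n$. The strict inequalities from the previous step make each weight lie in $(0,1]$, and the sum telescopes to $1 - \beta_0 = 1$, so $\set{\weight_{\acp_i}}{0 \leq i \leq n}$ is a legitimate set of weights. It then remains to verify $\aval = \sum_{i=0}^{n} \weight_{\acp_i} \cdot \acp_i$ coordinatewise: fixing a clock $\aclock$ and letting $m$ be the index with $\aclock \in \aclockset_m$, one has $\acp_i(\aclock) = \anatval(\aclock)$ for $i \geq m$ and $\acp_i(\aclock) = \anatval(\aclock) + \tfrac{1}{k}$ for $i < m$, so (using $\sum_i \weight_{\acp_i} = 1$ and $\sum_{i=0}^{m-1} \weight_{\acp_i} = \beta_m$) the $\aclock$-coordinate of the combination equals $\anatval(\aclock) + \tfrac{1}{k} \sum_{i < m} \weight_{\acp_i} = \anatval(\aclock) + \beta_m / k = \aval(\aclock)$.

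I expect no real obstacle: the argument is essentially bookkeeping on the block structure of $\areg$. The single load-bearing point is the \emph{strictness} of $\beta_0 < \beta_1 < \cdots < \beta_n < 1$ (supplied by condition~(3) of region membership together with $[0,1)$-boundedness of fractional parts), since this is exactly what turns the telescoping differences into \emph{positive} weights rather than merely nonnegative coefficients. The degenerate cases --- $n = 0$, where the unique corner point is $\anatval = \aval$, and $\aclockset_0 = \emptyset$ --- are absorbed by the convention $\beta_0 = 0$ and need no separate treatment.
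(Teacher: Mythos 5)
Your proof is correct, but it is genuinely more explicit than the paper's. The paper disposes of this lemma in one line: the convex hull of $\cpsofreg{\areg}$ contains (a superset of) the valuations in $\areg$, so $\aval$, being in $\areg$, is a convex combination of the corner points, and the weights therefore exist. You instead \emph{construct} the weights: you extract the common scaled fractional values $\beta_i$ on each block $\aclockset_i$ of the region's partition, set $\weight_{\acp_i} = \beta_{i+1}-\beta_i$ (and $\weight_{\acp_n}=1-\beta_n$), and verify the identity coordinatewise via the telescoping sum $\sum_{i<m}\weight_{\acp_i}=\beta_m$. The computation checks out: for $\aclock\in\aclockset_m$ the combination yields $\anatval(\aclock)+\beta_m/k$, which by conditions (1) and (3) of region membership equals $\aval(\aclock)$. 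What your route buys is twofold: it makes the lemma effective (the weights are computable directly from $\aval$ and the region's block structure), and it delivers \emph{strictly} positive weights, which is what the paper's own definition of a ``set of weights'' ($\weight_i\in(0,1]$) actually demands --- a point the bare convex-hull argument silently glosses over, since convexity alone only guarantees nonnegative coefficients. What the paper's route buys is brevity and uniformity: the same one-line geometric observation is reused almost verbatim for Lemma~\ref{lem:corner_k_2k}, where your fractional-profile bookkeeping would have to be redone for corner points of $2k$-regions inside $k$-regions. The only soft spot in your write-up is the side remark that the convention $\beta_0=0$ ``forces $\beta_1>0$'' when $\aclockset_0=\emptyset$; strictness there really comes from the uniqueness of the containing region assumed in the statement (otherwise $\aval$ would also lie in the region obtained by merging $\aclockset_1$ into $\aclockset_0$), but this is a definitional edge case that does not affect the substance of your argument.
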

\begin{proof}
Observe that the convex hull of corner points $\cpsofreg{\areg}$
corresponds to a superset of the valuations contained in $\areg$.
Hence, given that $\aval \in \areg$,
we have that $\aval$ is in the set of valuations induced by the convex hull of $\cpsofreg{\areg}$,
and hence there exists $\set{\weight_\acp}{\acp \in \cpsofreg{\areg}}$ 
with the required property.
%
%
%
%
\qed
\end{proof}

%
In the following, for a state $(\aloc,\aval) \in \states$ of $\semPTS{\acdpta}$, 
we use $\inrg{\aloc}{\val}{\maxbound}{k}$ to denote the unique pair $(\aloc',\areg) \in \locs \times \regs{\maxbound}{k}$
such that $\aloc = \aloc'$ and $\aval \in \areg$.

\begin{lemma}\label{lem:weight_pedge}
Let $k \in \Nset$,
let $\areg \in \regs{\maxbound}{k}$ be the $k$-region such that $\aval \in \areg$,
and let $(\aloc,\g,\pd) \in \pedges$ be a probabilistic edge such that $\aval \vinz \g$.
Then there exists a set of weights $\set{\weight_\acp}{\acp \in \cpsofreg{\areg}}$ 
such that, for any $(\aclockset,\aloc') \in 2^\clocks \times \locs$:
\[
\adtpara{\pd}{\aval}(\aclockset,\aloc') 
= 
\sum_{\acp \in \cpsofreg{\areg}} \weight_\acp \cdot \adtpara{\pd}{\acp}(\aclockset,\aloc') \; .
\]
\end{lemma}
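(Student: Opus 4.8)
The plan is to derive the identity pointwise in each clock by exploiting the piecewise linearity of $\apedge = (\aloc,\g,\pd)$ together with the fact that, inside a single $k$-region, every clock is confined to a closed interval of length at most $\frac{1}{k}$ whose endpoints are consecutive multiples of $\frac{1}{k}$. The weights will be exactly those produced by \lemref{lem:existence_weights} for $\aval$ and $\areg$; the essential point is that those weights depend only on $\aval$ and $\areg$, not on the outcome $(\aclockset,\aloc')$, so one choice works simultaneously for every outcome.

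First I would fix, for each clock $\aclock \in \clocks$, the natural number $a_\aclock = \floor{k\cdot\aval(\aclock)}$, and set $J_\aclock = [\tfrac{a_\aclock}{k},\tfrac{a_\aclock+1}{k}]$ (degenerating to the point $\tfrac{a_\aclock}{k}$ when $\aclock$ lies in the block $\aclockset_0$ of $\areg$). By the definition of $k$-regions and of corner points, both $\aval(\aclock)$ and $\acp(\aclock)$ lie in $J_\aclock$ for every corner point $\acp \in \cpsofreg{\areg}$. Two facts about $J_\aclock$ are needed: (i)~since its length is at most $1$, its interior contains no natural number; and (ii)~$J_\aclock \subseteq \overline{\interval{\apedge}{\aclock}}$, the closure of the interval of $\aclock$-values admitted by $\g$ — this follows from $\aval \vinz \g$, from the endpoints of $\interval{\apedge}{\aclock}$ being natural numbers, and from a short case distinction according to whether $\aval(\aclock)$ equals the right endpoint of $\interval{\apedge}{\aclock}$ (if it does, then $\aclock \in \aclockset_0$ and all corner points agree with $\aval$ there). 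By (i) and continuity, the restriction of $\plf{\apedge,(\aclockset,\aloc')}{\aclock}$ to $\interval{\apedge}{\aclock} \cap J_\aclock$ is \emph{affine}, since its only breakpoints lie at natural numbers, none in the interior of $J_\aclock$; by (ii) it extends, continuously and affinely, to all of $J_\aclock$, and I extend $\adtpara{\pd}{\acp}$ at corner points failing $\g$ by these extensions, consistently with the clock-dependent region graph.

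Next I would take the weights $\set{\weight_\acp}{\acp \in \cpsofreg{\areg}}$ supplied by \lemref{lem:existence_weights}, so that $\aval(\aclock) = \sum_{\acp \in \cpsofreg{\areg}} \weight_\acp \cdot \acp(\aclock)$ for every clock $\aclock$. Because each $\acp(\aclock) \in J_\aclock$, $\weight_\acp \geq 0$ and $\sum_\acp \weight_\acp = 1$, affinity of $\plf{\apedge,(\aclockset,\aloc')}{\aclock}$ on $J_\aclock$ gives
\[
\plf{\apedge,(\aclockset,\aloc')}{\aclock}(\aval(\aclock))
= \sum_{\acp \in \cpsofreg{\areg}} \weight_\acp \cdot \plf{\apedge,(\aclockset,\aloc')}{\aclock}(\acp(\aclock)) \; .
\]
Summing over $\aclock \in \clocks$, using $\adtpara{\pd}{\aval}(\aclockset,\aloc') = \sum_{\aclock \in \clocks}\plf{\apedge,(\aclockset,\aloc')}{\aclock}(\aval(\aclock))$, and interchanging the two finite sums yields
\[
\adtpara{\pd}{\aval}(\aclockset,\aloc')
= \sum_{\acp \in \cpsofreg{\areg}} \weight_\acp \sum_{\aclock \in \clocks} \plf{\apedge,(\aclockset,\aloc')}{\aclock}(\acp(\aclock))
= \sum_{\acp \in \cpsofreg{\areg}} \weight_\acp \cdot \adtpara{\pd}{\acp}(\aclockset,\aloc') \; ,
\]
which is the claim. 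I expect the only real obstacle to be fact~(ii): making sure that a corner point's clock value never escapes the (closure of the) domain of the $\plf{\apedge,\outcome}{\aclock}$ in a way that leaves $\adtpara{\pd}{\acp}$ undefined; once continuity lets us pass to the closed intervals $J_\aclock$, the rest is just that an affine function commutes with convex combinations together with a swap of finite sums.
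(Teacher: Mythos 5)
Your proposal is correct and follows essentially the same route as the paper's proof: both take the weights from Lemma~\ref{lem:existence_weights}, exploit the fact that each $\plf{\apedge,\outcome}{\aclock}$ is affine on the relevant interval around $\aval(\aclock)$ (no breakpoints there, since breakpoints are natural numbers), use continuity to extend the affine expression to the closed interval containing the corner-point values, and then commute the convex combination with the affine map and swap the finite sums. Your treatment of the boundary case (corner points possibly lying outside an open or half-open interval) is in fact slightly more explicit than the paper's, but the substance is identical.
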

\begin{proof}
Let $\set{\weight_\acp}{\acp \in \cpsofreg{\areg}}$ be the set of weights such that 
$\aval = \sum_{\acp \in \cpsofreg{\areg}} \weight_\acp \cdot \acp$,
which exists by Lemma~\ref{lem:existence_weights}.
Let $\outcome = (\aclockset,\aloc') \in 2^\clocks \times \locs$.
For clock $\aclock \in \clocks$,
we use $\aninterval_{\aval}$ to denote 
the interval of the partition $\apartition{\apedge,\outcome}{\aclock}$
such that $\aval(\aclock) \in \aninterval_{\aval}$,
and use $\fc{\apedge,\outcome}{\aclock}{\aninterval_{\aval}}$
and $\fd{\apedge,\outcome}{\aclock}{\aninterval_{\aval}}$ to denote the 
constants such that $\plf{\apedge,\outcome}{\aclock}(\gamma) = 
\fc{\apedge,\outcome}{\aclock}{\aninterval_{\aval}} 
+ \fd{\apedge,\outcome}{\aclock}{\aninterval_{\aval}} \cdot \gamma$
if $\gamma \in \aninterval_{\aval}$.
Then we have:
\[
\begin{array}{rclr}
\adtpara{\pd}{\aval}(\outcome) 
& = & 
\displaystyle{
\sum_{\aclock \in \clocks} \plf{\apedge,\outcome}{\aclock}(\aval(\aclock))
}
&
\\
& = &
\displaystyle{
\sum_{\aclock \in \clocks}
(\fc{\apedge,\outcome}{\aclock}{\aninterval_{\aval}} 
+ \fd{\apedge,\outcome}{\aclock}{\aninterval_{\aval}} \cdot \aval(\aclock))
}
&
\\
& = &
\displaystyle{
\sum_{\aclock \in \clocks}
(\fc{\apedge,\outcome}{\aclock}{\aninterval_{\aval}} 
+ \fd{\apedge,\outcome}{\aclock}{\aninterval_{\aval}} 
\cdot \sum_{\acp \in \cpsofreg{\areg}} \weight_\acp \cdot \acp(\aclock))
}
&
\\
& = &
\displaystyle{
\sum_{\aclock \in \clocks}
\fc{\apedge,\outcome}{\aclock}{\aninterval_{\aval}} 
+ 
\sum_{\aclock \in \clocks}
\fd{\apedge,\outcome}{\aclock}{\aninterval_{\aval}} 
\cdot \sum_{\acp \in \cpsofreg{\areg}} \weight_\acp \cdot \acp(\aclock)
}
&
\\
& = &
\displaystyle{
\sum_{\acp \in \cpsofreg{\areg}}
\weight_\acp 
\sum_{\aclock \in \clocks}
\fc{\apedge,\outcome}{\aclock}{\aninterval_{\aval}} 
+
\sum_{\acp \in \cpsofreg{\areg}} 
\weight_\acp 
\sum_{\aclock \in \clocks}
\fd{\apedge,\outcome}{\aclock}{\aninterval_{\aval}} 
\cdot  \acp(\aclock)
}
&
\mbox{(from $\displaystyle{\!\! \sum_{\acp \in \cpsofreg{\areg}} \!\! \weight_\acp = 1}$)}
\\
& = &
\displaystyle{
\sum_{\acp \in \cpsofreg{\areg}}
\weight_\acp 
(
\sum_{\aclock \in \clocks}
\fc{\apedge,\outcome}{\aclock}{\aninterval_{\aval}} 
+ 
\sum_{\aclock \in \clocks}
\fd{\apedge,\outcome}{\aclock}{\aninterval_{\aval}} 
\cdot \acp(\aclock)
)
\; .
}
&
\end{array}
\]
Recall that $\aninterval_{\aval}$ has natural-numbered endpoints,
and that $\acp(\aclock)$ is a rational number.
Note that it may be the case that $\aninterval_{\aval}$
is open or half-open,
and hence may not include $\acp(\aclock)$.
Given that $\plf{\apedge,\outcome}{\aclock}$ is a continuous function,
we have that $\plf{\apedge,\outcome}{\aclock}(\gamma) = 
\fc{\apedge,\outcome}{\aclock}{\aninterval_{\aval}} 
+ \fd{\apedge,\outcome}{\aclock}{\aninterval_{\aval}} \cdot \gamma$
for all $\gamma$ in the closure of $\aninterval_{\aval}$.
Given that $\acp(\aclock)$ must belong to the closure of $\aninterval_{\aval}$,
we conclude the following:
\begin{eqnarray*}
\sum_{\acp \in \cpsofreg{\areg}}
\weight_\acp 
(
\sum_{\aclock \in \clocks}
\fc{\apedge,\outcome}{\aclock}{\aninterval_{\aval}} 
+ 
\sum_{\aclock \in \clocks}
\fd{\apedge,\outcome}{\aclock}{\aninterval_{\aval}} 
\cdot \acp(\aclock)
)
& = &
\sum_{\acp \in \cpsofreg{\areg}}
\weight_\acp 
\sum_{\aclock \in \clocks} \plf{\apedge,\outcome}{\aclock}(\acp(\aclock))
\\
& = &
\sum_{\acp \in \cpsofreg{\areg}}
\weight_\acp \cdot \adtpara{\pd}{\acp}(\outcome) \; .
\end{eqnarray*}
Hence we have shown that 
$\adtpara{\pd}{\aval}(\outcome) 
= 
\sum_{\acp \in \cpsofreg{\areg}} \weight_\acp \cdot \adtpara{\pd}{\acp}(\outcome)$,
which concludes the proof.
\qed
\end{proof}

\begin{lemma}\label{lem:one-step_pedge}
Let $(\aloc,\aval) \in \states$ be a state,
let $k \in \Nset$,
and let $\areg \in \regs{\maxbound}{k}$ be the $k$-region such that $\aval \in \areg$.
For each transition $((\aloc,\aval),(\aloc,\g,\pd),\adist) \in \ptstrans$ of $\semPTS{\acdpta}$,
there exists a set of transitions 
$\set{(\inrg{\aloc}{\val}{\maxbound}{k},(\acp,(\aloc,\g,\pd)),\anrgdist_\acp)}{\acp \in \cpsofreg{\areg}} 
\subseteq 
\rgtrans{\acdpta}{k}$ of $\acdrg{\acdpta}{k}$
and weights $\set{\weight_\acp}{\acp \in \cpsofreg{\areg}}$ 
such that, for each state $(\aloc',\aval') \in \states$:
\[
\adist(\aloc',\aval') 
= 
\sum_{\acp \in \cpsofreg{\areg}} \weight_\acp \cdot \anrgdist_\acp(\inrg{\aloc'}{\val'}{\maxbound}{k}) \; .
\]
\end{lemma}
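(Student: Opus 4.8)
The plan is to express the semantic edge-distribution $\adist$ as a convex combination, indexed by the corner points of $\areg$, of edge-distributions of the clock-dependent region graph $\acdrg{\acdpta}{k}$, re-using the two facts already established: that the value $\adtpara{\pd}{\aval}(\cdot)$ of a distribution template decomposes over the corner points of $\areg$ with a single, outcome-independent set of weights (\lemref{lem:weight_pedge}, which rests on \lemref{lem:existence_weights}), and that the family of reset sets relating $\aval$ to a target valuation coincides with the family relating $\areg$ to the target region (\lemref{lem:resets}).

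Concretely, I would start from the given transition $((\aloc,\aval),(\aloc,\g,\pd),\adist) \in \ptstrans$. By the definition of $\semPTS{\acdpta}$ this means $\aval \vinz \g$, and hence $\areg \vinz \g$, since clock constraints have natural-numbered constants and every $k$-region lies entirely inside a single cell of the partition of $\valuations$ induced by such constraints. For each corner point $\acp \in \cpsofreg{\areg}$ I would put $\anrgdist_\acp(\aloc',\areg') = \sum_{\aclockset \in \reset{\areg}{\areg'}} \adtpara{\pd}{\acp}(\aclockset,\aloc')$ for $(\aloc',\areg') \in \rgstates{\acdpta}{k}$; conditions~(1)--(3) in the definition of $\rgtranspedge{\acdpta}{k}$ are then met ($\areg \vinz \g$; $\acp \in \cpsofreg{\areg}$; $\anrgdist_\acp$ has exactly the prescribed shape), so the triple $(\inrg{\aloc}{\aval}{\maxbound}{k},(\acp,(\aloc,\g,\pd)),\anrgdist_\acp)$ is a transition of $\acdrg{\acdpta}{k}$. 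For the weights I would take the set $\set{\weight_\acp}{\acp \in \cpsofreg{\areg}}$ furnished by \lemref{lem:weight_pedge}; crucially these weights are pinned down by the single requirement $\aval = \sum_{\acp \in \cpsofreg{\areg}} \weight_\acp \cdot \acp$ and are therefore common to all outcomes.

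To establish the displayed identity, fix a target state $(\aloc',\aval')$ and let $\areg'$ be the $k$-region with $\aval' \in \areg'$, so that $\inrg{\aloc'}{\aval'}{\maxbound}{k} = (\aloc',\areg')$. Since $\adist$ is supported on states whose valuation is a reset image of $\aval$ (i.e.\ $\aval'(\aclock) \in \{0,\aval(\aclock)\}$ for every $\aclock \in \clocks$), for such states the hypothesis of \lemref{lem:resets} holds, and I compute
\begin{align*}
\adist(\aloc',\aval')
&= \sum_{\aclockset \in \reset{\aval}{\aval'}} \adtpara{\pd}{\aval}(\aclockset,\aloc')
= \sum_{\aclockset \in \reset{\aval}{\aval'}} \sum_{\acp \in \cpsofreg{\areg}} \weight_\acp \cdot \adtpara{\pd}{\acp}(\aclockset,\aloc') \\
&= \sum_{\acp \in \cpsofreg{\areg}} \weight_\acp \sum_{\aclockset \in \reset{\areg}{\areg'}} \adtpara{\pd}{\acp}(\aclockset,\aloc')
= \sum_{\acp \in \cpsofreg{\areg}} \weight_\acp \cdot \anrgdist_\acp(\inrg{\aloc'}{\aval'}{\maxbound}{k}) .
\end{align*}
Here the first equality is the definition of $\adist$ in $\semPTS{\acdpta}$, the second is \lemref{lem:weight_pedge}, the third exchanges the two finite sums and applies \lemref{lem:resets} (which gives $\reset{\aval}{\aval'} = \reset{\areg}{\areg'}$), and the fourth is the definition of $\anrgdist_\acp$. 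This settles the identity for every state in the support of $\adist$.

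The step I expect to be the main obstacle is the complementary one: when $\aval'$ is not a reset image of $\aval$ we have $\reset{\aval}{\aval'} = \emptyset$, so $\adist(\aloc',\aval') = 0$, and one must still check that $\sum_{\acp \in \cpsofreg{\areg}} \weight_\acp \cdot \anrgdist_\acp(\inrg{\aloc'}{\aval'}{\maxbound}{k})$ vanishes as well. This again has to pass through \lemref{lem:resets}, whose hypothesis is exactly the reset-image condition, so some care is needed either to reduce an arbitrary target state to the case already treated or to argue the zero-probability case directly from the non-negativity of the template values $\adtpara{\pd}{\acp}(\cdot)$ together with \lemref{lem:weight_pedge}. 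Everything else is the routine substitution above followed by the exchange of finite sums.
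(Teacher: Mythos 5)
Your argument is essentially the paper's own proof: the same weights supplied by \lemref{lem:weight_pedge} (via \lemref{lem:existence_weights}), the same definition of $\anrgdist_\acp$ from the region-graph semantics, and the identical four-step chain of equalities (definition of $\adist$, decomposition of $\adtpara{\pd}{\aval}$ over corner points, replacement of $\reset{\aval}{\aval'}$ by $\reset{\areg}{\areg'}$ via \lemref{lem:resets}, exchange of the two finite sums). The one place where you go beyond the paper is the worry about target states with $\reset{\aval}{\aval'}=\emptyset$, and your instinct is sound: for such states the displayed identity can genuinely fail rather than merely being harder to prove. For instance, if $\aval'$ lies in the same $k$-region as some reset image of $\aval$ in $\support(\adist)$ but is distinct from it, the left-hand side is $0$ while the right-hand side equals the (positive) value at that reset image, since $\inrg{\aloc'}{\aval'}{\maxbound}{k}$ does not distinguish the two valuations. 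So the lemma should be read as asserting the identity only for $(\aloc',\aval')$ with $\aval'$ a reset image of $\aval$ (equivalently, for the states relevant to $\support(\adist)$), which is exactly what your computation establishes and all that is used downstream: \lemref{lem:one-step} invokes this lemma only through the valuation $\aval_{\adist,(\aloc',\areg')}$ selected from the support by \lemref{lem:concrete_support}. The paper's proof silently makes the same restriction (its appeal to \lemref{lem:resets} already presupposes the reset-image hypothesis), so there is nothing further for you to prove; you have simply made explicit an imprecision in the statement.
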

\begin{proof}
Let $\set{\weight_\acp}{\acp \in \cpsofreg{\areg}}$ be the set of weights such that 
$\aval = \sum_{\acp \in \cpsofreg{\areg}} \weight_\acp \cdot \acp$,
which exists by Lemma~\ref{lem:existence_weights},
and let $\areg,\areg' \in \regs{\maxbound}{k}$ be the $k$-regions 
such that $\aval \in \areg$ and $\aval \in \areg'$.
By definition of $\semPTS{\acdpta}$, we have:
\[
\begin{array}{rclr}
\adist(\aloc',\aval') 
& = &
\displaystyle{
\sum_{\aclockset \in \reset{\aval}{\aval'}} \adtpara{\pd}{\aval}(\aclockset,\aloc')
}
&
\\
& = &
\displaystyle{
\sum_{\aclockset \in \reset{\aval}{\aval'}} 
\sum_{\acp \in \cpsofreg{\areg}} \weight_\acp \cdot \adtpara{\pd}{\acp}(\aclockset,\aloc')
}
&
\mbox{(by Lemma~\ref{lem:weight_pedge})}
\\
& = &
\displaystyle{
\sum_{\aclockset \in \reset{\areg}{\areg'}} 
\sum_{\acp \in \cpsofreg{\areg}} \weight_\acp \cdot \adtpara{\pd}{\acp}(\aclockset,\aloc')
}
&
\mbox{(by Lemma~\ref{lem:resets})}
\\
& = &
\displaystyle{
\sum_{\acp \in \cpsofreg{\areg}} 
\weight_\acp 
\sum_{\aclockset \in \reset{\areg}{\areg'}} 
\adtpara{\pd}{\acp}(\aclockset,\aloc')
}
&
\\
& = &
\displaystyle{
\sum_{\acp \in \cpsofreg{\areg}} 
\weight_\acp \cdot 
\anrgdist_i(\inrg{\aloc'}{\aval'}{\maxbound}{k})
\; .
}
&
\end{array}
\]
\begin{flushright}
\qed
\end{flushright}
\end{proof}

The next lemma follows from standard non-probabilistic reasoning on the region graph.

\begin{lemma}\label{lem:one-step_time}
\sloppypar{
Let $(\aloc,\aval) \in \states$ be a state,
and let $k \in \Nset$.
For each transition $((\aloc,\aval),\adelay,\dirac{(\aloc,\aval+\adelay)}) \in \ptstrans$ of $\semPTS{\acdpta}$,
there exists a transition 
$(\inrg{\aloc}{\val}{\maxbound}{k},\rsuccact,\dirac{\inrg{\aloc}{\aval+\adelay}{\maxbound}{k}}) 
\in \rgtrans{\acdpta}{k}$ 
of $\acdrg{\acdpta}{k}$.
}
\end{lemma}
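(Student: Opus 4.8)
The plan is to chase the definitions. By assumption the transition $((\aloc,\aval),\adelay,\dirac{(\aloc,\aval+\adelay)})$ lies in $\ptstransdelay$, which means $\aval + \adelay' \vinz \inv(\aloc)$ for every $0 \le \adelay' \le \adelay$. I claim that $\aval$ itself, together with the delay $\adelay$, is precisely the witness demanded by the notion of an $\inv(\aloc)$-satisfying time successor, so the required region-graph transition is obtained essentially for free.

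Concretely, first I would let $\areg, \areg' \in \regs{\maxbound}{k}$ be the (unique) $k$-regions with $\aval \in \areg$ and $\aval + \adelay \in \areg'$; taking $\adelay' = 0$ and $\adelay' = \adelay$ above shows $\aval \vinz \inv(\aloc)$ and $\aval + \adelay \vinz \inv(\aloc)$, and since every location invariant of $\acdpta$ bounds each clock by $\maxbound$, all clock values of $\aval$ and of $\aval+\adelay$ lie in $[0,\maxbound]$, so $\areg$ and $\areg'$ indeed exist and are well defined, with $\inrg{\aloc}{\val}{\maxbound}{k} = (\aloc,\areg)$ and $\inrg{\aloc}{\aval+\adelay}{\maxbound}{k} = (\aloc,\areg')$. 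Next I would verify that $\areg'$ is an $\inv(\aloc)$-satisfying time successor of $\areg$: by definition it suffices to exhibit some $\aval'' \in \areg$ and $\adelay'' \in \nnr$ with $\aval'' + \adelay'' \in \areg'$ and $\aval'' + \adelay''' \vinz \inv(\aloc)$ for all $0 \le \adelay''' \le \adelay''$, and choosing $\aval'' = \aval$ and $\adelay'' = \adelay$ makes all of these conditions hold by the observations just made. Hence $((\aloc,\areg),\rsuccact,\dirac{(\aloc,\areg')}) \in \rgtransdelay{\acdpta}{k} \subseteq \rgtrans{\acdpta}{k}$, which is exactly the transition required by the lemma.

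There is no real obstacle here: the statement is a direct unfolding of the definitions, and the only place a standing assumption is needed is to guarantee that the target $k$-region $\areg'$ actually belongs to $\regs{\maxbound}{k}$, which is where the hypothesis that all location invariants bound the clocks by $\maxbound$ is used. Note in particular that, unlike the probabilistic-edge case of Lemma~\ref{lem:one-step_pedge}, no convex-combination or corner-point argument is needed, since a delay transition is deterministic and the definition of $\cc$-satisfying time successor already quantifies existentially over the representative valuation.
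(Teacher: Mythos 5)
Your proof is correct and matches the paper's intent: the paper omits the argument entirely, remarking only that the lemma ``follows from standard non-probabilistic reasoning on the region graph,'' and your direct unfolding of the definitions (using $\aval$ and $\adelay$ as the existential witnesses for the $\inv(\aloc)$-satisfying time successor, and the bounded-invariant assumption to ensure the target $k$-region lies in $\regs{\maxbound}{k}$) is exactly that standard reasoning made explicit.
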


The following lemma specifies that, for any transition of $\semPTS{\acdpta}$,
any two distinct states within its distribution's support set
belong to different $k$-regions.

\begin{lemma}\label{lem:concrete_support}
Let $(\aloc,\aval) \in \states$ be a state,
let $k \in \Nset$,
and let $((\aloc,\aval),(\aloc,\g,\pd),\adist) \in \ptstrans$ be a transition of $\semPTS{\acdpta}$.
For each pair $(\aloc_1,\aval_1), (\aloc_2,\aval_2) \in \support(\adist)$ 
such that $(\aloc_1,\aval_1) \neq (\aloc_2,\aval_2)$,
we have $\inrg{\aloc_1}{\aval_1}{\maxbound}{k} \neq \inrg{\aloc_2}{\aval_2}{\maxbound}{k}$.
\end{lemma}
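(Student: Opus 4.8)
The plan is to exploit the fact that every state in the support of $\adist$ is obtained from $(\aloc,\aval)$ by resetting a subset of the clocks. Indeed, by the definition of $\ptstranspedge$ in the semantics of $\semPTS{\acdpta}$, for each $(\aloc',\aval') \in \states$ we have $\adist(\aloc',\aval') = \sum_{\aclockset \in \reset{\aval}{\aval'}} \adtpara{\pd}{\aval}(\aclockset,\aloc')$, and hence $(\aloc',\aval') \in \support(\adist)$ implies $\aval' = \aval[\aclockset:=0]$ for some $\aclockset \subseteq \clocks$; in particular, for each clock $\aclock \in \clocks$ we have $\aval'(\aclock) \in \{ \aval(\aclock), 0 \}$.

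Given the two distinct states $(\aloc_1,\aval_1)$ and $(\aloc_2,\aval_2)$ of $\support(\adist)$, I would first observe that if $\aloc_1 \neq \aloc_2$ the claim is immediate, since the location components of $\inrg{\aloc_1}{\aval_1}{\maxbound}{k}$ and $\inrg{\aloc_2}{\aval_2}{\maxbound}{k}$ differ. So assume $\aloc_1 = \aloc_2$; then $\aval_1 \neq \aval_2$, and by the previous paragraph both $\aval_1$ and $\aval_2$ are obtained from $\aval$ by resetting a set of clocks. Pick a clock $\aclock$ with $\aval_1(\aclock) \neq \aval_2(\aclock)$. Since each of $\aval_1(\aclock)$ and $\aval_2(\aclock)$ lies in $\{ \aval(\aclock), 0 \}$, one of them equals $\aval(\aclock)$ and the other equals $0$, and moreover $\aval(\aclock) \neq 0$; without loss of generality, $\aval_1(\aclock) = \aval(\aclock) \neq 0$ and $\aval_2(\aclock) = 0$. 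It then suffices to recall (as already used in the proof of \lemref{lem:resets}) that for any $k$-region $\areg$ and clock $\aclock$, either $\aval''(\aclock) = 0$ for every $\aval'' \in \areg$ or $\aval''(\aclock) \neq 0$ for every $\aval'' \in \areg$: thus the $k$-region containing $\aval_1$ has $\aclock$ nonzero throughout whereas the $k$-region containing $\aval_2$ has $\aclock$ equal to $0$ throughout, so these two $k$-regions are distinct, and therefore $\inrg{\aloc_1}{\aval_1}{\maxbound}{k} \neq \inrg{\aloc_2}{\aval_2}{\maxbound}{k}$.

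There is no real obstacle here: the argument reduces to the two facts that support elements of a probabilistic-edge distribution are reset-variants of the source valuation (read off directly from the definition of $\ptstranspedge$) and that ``clock $\aclock$ has value $0$'' is invariant across a $k$-region (established within the proof of \lemref{lem:resets}). The only point requiring a little care is checking that the clock $\aclock$ witnessing $\aval_1 \neq \aval_2$ is indeed reset in exactly one of $\aval_1$, $\aval_2$, which forces $\aval(\aclock) \neq 0$ and hence the desired separation into distinct $k$-regions.
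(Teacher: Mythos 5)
Your proof is correct and follows essentially the same route as the paper's: both reduce the claim to the observations that support valuations of a probabilistic-edge transition are reset-variants of $\aval$ (so they differ only in which clocks are zero) and that whether a clock equals $0$ is invariant across a $k$-region. Your variant of isolating a single witnessing clock $\aclock$ with $\aval_1(\aclock)=\aval(\aclock)\neq 0$ and $\aval_2(\aclock)=0$ is a slightly more streamlined packaging of the paper's argument about the sets of zero-valued clocks, but it is the same idea.
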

\begin{proof}
Let $(\aloc_1,\aval_1), (\aloc_2,\aval_2) \in \support(\adist)$ such that $(\aloc_1,\aval_1) \neq (\aloc_2,\aval_2)$.
First observe that if $\aloc_1 \neq \aloc_2$ then trivially
$\inrg{\aloc_1}{\aval_1}{\maxbound}{k} \neq \inrg{\aloc_2}{\aval_2}{\maxbound}{k}$.
Now consider the case in which $\aloc_1 = \aloc_2$ and $\aval_1 \neq \aval_2$.
we must have $\aval_1 \neq \aval_2$.
Note that $\aval_1=\aval[\aclockset_1:=0]$ and $\aval_2=\aval[\aclockset_2:=0]$
for clock sets $\aclockset_1,\aclockset_2 \subseteq \clocks$.
Hence $\aval_1$ and $\aval_2$ differ only in terms of which clocks are equal to 0.
Intuitively, by the definition of $k$-regions, 
any two valuations that differ only in terms of which clocks are equal to 0 belong to different $k$-regions.
For completeness, we now explain this formally.
Denote the sets of clocks that are equal to 0 in $\aval_1$ by $\aclockset_1'$
and in $\aval_2$ by $\aclockset_2'$ 
(note that $\aclockset_1 \subseteq \aclockset_1'$, $\aclockset_2 \subseteq \aclockset_2'$
and that $\aclockset_1' \neq \aclockset_2'$ because $\aval_1 \neq \aval_2$).
Let the $k$-region component of $\inrg{\aloc_1}{\val_1}{\maxbound}{k}$ 
be denoted by $(\anatval_1,[\aclockset_{1,0}, \aclockset_{1,1},..., \aclockset_{1,n_1}])$
and let the $k$-region component of $\inrg{\aloc_2}{\val_2}{\maxbound}{k}$ 
be denoted by $(\anatval_2,[\aclockset_{2,0}, \aclockset_{2,1},..., \aclockset_{2,n_2}])$.
Given that $\aclockset_1' \neq \aclockset_2'$,
either there exists clock $\aclock \in \aclockset_1' \setminus \aclockset_2'$
such that $\anatval_1(\aclock)=0$ and $\aclock \in \aclockset_{1,0}$
but either $\anatval_2(\aclock) \neq 0$ or $\aclock \not\in \aclockset_{2,0}$,
or there exists clock $\aclock \in \aclockset_2' \setminus \aclockset_1'$
such that $\anatval_2(\aclock)=0$ and $\aclock \in \aclockset_{2,0}$
but either $\anatval_1(\aclock) \neq 0$ or $\aclock \not\in \aclockset_{1,0}$.
Hence we have either $\anatval_1 \neq \anatval_2$ or $\aclockset_{1,0} \neq \aclockset_{2,0}$,
and therefore $\inrg{\aloc_1}{\aval_1}{\maxbound}{k} \neq \inrg{\aloc_2}{\aval_2}{\maxbound}{k}$.
\qed
\end{proof}

Lemma~\ref{lem:concrete_support} specifies that,
for each transition $((\aloc,\aval),\anaction,\adist) \in \ptstrans$ of $\semPTS{\acdpta}$
and for each $(\aloc',\areg) \in \rgstates{\acdpta}{k}$,
there exists at most one valuation $\aval' \in \areg$
such that $(\aloc',\aval') \in \support(\adist)$.
If such a valuation $\aval'$ exists, we set $\aval_{\adist,(\aloc',\areg)} = \aval'$,
otherwise $\aval_{\adist,(\aloc',\areg)}$ can be set to an arbitrary valuation.
%
From this fact, together with Lemma~\ref{lem:one-step_pedge} and Lemma~\ref{lem:one-step_time},
we obtain the following lemma.

\begin{lemma}\label{lem:one-step}
Let $(\aloc,\aval) \in \states$ be a state,
and let $k \in \Nset$.
For each transition $((\aloc,\aval),\anaction,\adist) \in \ptstrans$ of $\semPTS{\acdpta}$,
there exists a combined transition 
$(\set{(\inrg{\aloc}{\aval}{\maxbound}{k},\anaction_i,\anrgdist_i)}{i \in I}, \set{\ctweight_i}{i \in I})$ 
of $\acdrg{\acdpta}{k}$
such that, for each $(\aloc',\areg') \in \rgstates{\acdpta}{k}$,
we have:
\begin{enumerate}
\item
$\adist(\aloc',\aval_{\adist,(\aloc',\areg')}) = \sum_{i \in I} \ctweight_i \cdot \anrgdist_i(\aloc',\areg')$.
\item
$\sum_{\aval' \in \areg'} \adist(\aloc',\aval') = \sum_{i \in I} \ctweight_i \cdot \anrgdist_i(\aloc',\areg')$.
\end{enumerate}
\end{lemma}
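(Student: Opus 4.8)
The plan is to split on the type of the action $\anaction$ occurring in the transition $((\aloc,\aval),\anaction,\adist)$: since $\actions = \nnr \cup \pedges$, either $\anaction \in \nnr$ and $\adist$ is a Dirac distribution on a time successor of $(\aloc,\aval)$, or $\anaction$ is a probabilistic edge $(\aloc,\g,\pd) \in \pedges$ with $\aval \vinz \g$. The two cases are discharged by Lemma~\ref{lem:one-step_time} and Lemma~\ref{lem:one-step_pedge} respectively, and Lemma~\ref{lem:concrete_support} is what lets us pass between conditions~(1) and~(2).

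First I would treat the time-elapse case, where $\anaction = \adelay \in \nnr$ and $\adist = \dirac{(\aloc,\aval+\adelay)}$. Here I take $I$ to be a one-element index set with weight $\ctweight = 1$ and the single transition $(\inrg{\aloc}{\aval}{\maxbound}{k}, \rsuccact, \dirac{\inrg{\aloc}{\aval+\adelay}{\maxbound}{k}})$ provided by Lemma~\ref{lem:one-step_time}. For any $(\aloc',\areg') \in \rgstates{\acdpta}{k}$, the left-hand sides of both~(1) and~(2) equal $1$ when $\inrg{\aloc}{\aval+\adelay}{\maxbound}{k} = (\aloc',\areg')$ and $0$ otherwise: for~(1) because the only support point of $\adist$ is $(\aloc,\aval+\adelay)$, so $\aval_{\adist,(\aloc',\areg')}$ carries mass $1$ exactly when $\aloc'=\aloc$ and $\aval+\adelay \in \areg'$; for~(2) because $\adist$ is Dirac. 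This coincides with the value of $\dirac{\inrg{\aloc}{\aval+\adelay}{\maxbound}{k}}(\aloc',\areg')$, so the combined transition works.

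The substantive case is $\anaction = (\aloc,\g,\pd) \in \pedges$. Letting $\areg$ be the $k$-region with $\aval \in \areg$, I would invoke Lemma~\ref{lem:one-step_pedge} to obtain the family of transitions $\set{(\inrg{\aloc}{\val}{\maxbound}{k},(\acp,(\aloc,\g,\pd)),\anrgdist_\acp)}{\acp \in \cpsofreg{\areg}}$ together with the weights $\set{\weight_\acp}{\acp \in \cpsofreg{\areg}}$, and set $I = \cpsofreg{\areg}$, $\ctweight_\acp = \weight_\acp$, $\anaction_\acp = (\acp,(\aloc,\g,\pd))$; this is a combined transition of $\acdrg{\acdpta}{k}$ because the $\weight_\acp$ are positive and sum to $1$. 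For part~(1), fix $(\aloc',\areg')$: if some $\aval' \in \areg'$ has $(\aloc',\aval') \in \support(\adist)$ then by definition $\aval_{\adist,(\aloc',\areg')} = \aval'$ and $\inrg{\aloc'}{\aval_{\adist,(\aloc',\areg')}}{\maxbound}{k} = (\aloc',\areg')$, so Lemma~\ref{lem:one-step_pedge} directly gives $\adist(\aloc',\aval_{\adist,(\aloc',\areg')}) = \sum_{\acp \in \cpsofreg{\areg}} \weight_\acp \cdot \anrgdist_\acp(\aloc',\areg')$; if no such $\aval'$ exists then $\adist(\aloc',\aval'') = 0$ for every $\aval'' \in \areg'$, and feeding any such $\aval''$ into Lemma~\ref{lem:one-step_pedge} forces $\sum_{\acp \in \cpsofreg{\areg}} \weight_\acp \cdot \anrgdist_\acp(\aloc',\areg') = 0$ as well, matching the (zero-mass) left-hand side. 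Part~(2) then follows from Lemma~\ref{lem:concrete_support}, which guarantees that at most one $\aval' \in \areg'$ satisfies $(\aloc',\aval') \in \support(\adist)$: hence $\sum_{\aval' \in \areg'} \adist(\aloc',\aval')$ equals $\adist(\aloc',\aval_{\adist,(\aloc',\areg')})$ when that valuation is a genuine support point and equals $0$ otherwise, and in both situations it agrees with $\sum_{\acp \in \cpsofreg{\areg}} \weight_\acp \cdot \anrgdist_\acp(\aloc',\areg')$ by part~(1).

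I do not expect any conceptual difficulty here: the statement is essentially a repackaging of Lemmas~\ref{lem:one-step_pedge},~\ref{lem:one-step_time} and~\ref{lem:concrete_support}. The one point that requires care is the handling of location-region components $(\aloc',\areg')$ that receive no probability mass, where $\aval_{\adist,(\aloc',\areg')}$ is pinned down only arbitrarily (within $\areg'$) and one must check separately that the combined-transition probability also vanishes; this is precisely where it matters that Lemma~\ref{lem:one-step_pedge} holds for all states $(\aloc',\aval')$ and not merely for those in the support of $\adist$.
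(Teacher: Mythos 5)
Your proposal is correct and follows essentially the same route as the paper's proof: a case split on whether the action is a delay or a probabilistic edge, discharged by Lemma~\ref{lem:one-step_time} and Lemma~\ref{lem:one-step_pedge} respectively, with Lemma~\ref{lem:concrete_support} collapsing the sum in part~(2) to a single term. Your extra care with the zero-mass regions (where $\aval_{\adist,(\aloc',\areg')}$ is only fixed arbitrarily) is a slightly more explicit treatment of a point the paper glosses over, but it does not change the argument.
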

\begin{proof}
We first consider part (1).
Let $\areg \in \regs{\maxbound}{k}$ be the unique region such that $\aval \in \areg$.
We consider the following two cases. 

{\sl Case $\anaction \in \pedges$.}
Let $\apedge = \anaction$.
By Lemma~\ref{lem:one-step_pedge}, there exist 
$\set{((\aloc,\areg),(\acp,\apedge),\anrgdist_\acp)}{\acp \in \cpsofreg{\areg}} 
\subseteq 
\rgtrans{\acdpta}{k}$ of $\acdrg{\acdpta}{k}$
and weights $\set{\weight_\acp}{\acp \in \cpsofreg{\areg}}$ 
such that
$\adist(\aloc',\aval_{\adist,(\aloc',\areg')})  = 
\sum_{\acp \in \cpsofreg{\areg}} \weight_\acp \cdot \anrgdist_\acp(\inrg{\aloc'}{\aval_{\adist,(\aloc',\areg')}}{\maxbound}{k})$.
Hence we let $I = \cpsofreg{\areg}$ and 
$\ctweight_\acp = \weight_\acp$ for each $\acp \in \cpsofreg{\areg}$,
concluding that 
$\adist(\aloc',\aval_{\adist,(\aloc',\areg')}) = \sum_{i \in I} \ctweight_i \cdot \anrgdist_i(\aloc',\areg')$.

{\sl Case $\anaction \in \nnr$.}
Let $\adelay = \anaction$.
Note that, by definition of $\semPTS{\acdpta}$,
for the unique $(\aloc',\areg') \in \rgstates{\acdpta}{k}$ 
such that $\aloc = \aloc'$ and $\aval+\adelay \in \areg'$,
we must have $\aval_{\adist,(\aloc',\areg')} = \aval+\adelay$, 
i.e., $\adist(\aloc',\aval_{\adist,(\aloc',\areg')}) = \adist(\aloc',\aval+\adelay) = 1$.
By Lemma~\ref{lem:one-step_time},
there exists 
$((\aloc,\areg),\rsuccact,\dirac{\inrg{\aloc}{\aval+\adelay}{\maxbound}{k}}) 
\in \rgtrans{\acdpta}{k}$:
hence we let $|I|=1$ and let $\set{\ctweight_i}{i \in I}$ be the set containing a single weight equal to 1.
Then we conclude that 
$\adist(\aloc',\aval_{\adist,(\aloc',\areg')}) 
= 
\adist(\aloc',\aval+\adelay)
=
1
=
\dirac{\inrg{\aloc}{\aval+\adelay}{\maxbound}{k}}(\inrg{\aloc}{\aval+\adelay}{\maxbound}{k})
= 
\sum_{i \in I} \ctweight_i \cdot \anrgdist_i(\aloc',\areg')$.

Part (2) of the lemma then follows from the fact that,
for $(\aloc',\areg') \in \rgstates{\acdpta}{k}$ such that 
there exists a valuation $\aval' \in \areg'$ with $(\aloc',\aval') \in \support(\adist)$,
we have $\sum_{\aval'' \in \areg'} \adist(\aloc',\aval'') = \adist(\aloc',\aval_{\adist,(\aloc',\areg')})$.
\qed
\end{proof}

Consider equivalence $\anequiv \subseteq (\states \uplus \rgstates{\acdpta}{k})^2$ 
over the states of
the disjoint union of $\semPTS{\acdpta}$ and $\acdrg{\acdpta}{k}$
defined as the smallest equivalence satisfying the following conditions:
\begin{itemize}
\item
for states $(\aloc,\aval),(\aloc',\aval') \in \states$,
we have $(\aloc,\aval) \anequiv (\aloc',\aval')$ 
if $\inrg{\aloc}{\val}{\maxbound}{k} = \inrg{\aloc'}{\val'}{\maxbound}{k}$
(i.e., $\aloc=\aloc'$, and $\aval$ and $\aval'$ belong to the same $k$-region in $\regs{\maxbound}{k}$);
\item
for $(\aloc,\aval) \in \states$, $(\aloc',\areg) \in \rgstates{\acdpta}{k}$,
we have $(\aloc,\aval) \anequiv (\aloc',\areg)$ 
if $\inrg{\aloc}{\val}{\maxbound}{k} = (\aloc',\areg)$
(i.e., $\aloc=\aloc'$ and $\aval$ belongs to $\areg$).
\end{itemize}
Then the following corollary is a direct consequence of part (2) of Lemma~\ref{lem:one-step}.

\begin{corollary}\label{cor:sum_to_distsum}
\sloppypar{
Let $(\aloc,\aval) \in \states$ be a state,
and let $k \in \Nset$.
For each transition $((\aloc,\aval),\anaction,\adist) \in \ptstrans$ of $\semPTS{\acdpta}$,
there exists a combined transition 
$(\set{(\inrg{\aloc}{\val}{\maxbound}{k},\anaction_i,\anrgdist_i)}{i \in I}, \set{\ctweight_i}{i \in I})$ 
of $\acdrg{\acdpta}{k}$
such that
$\adist \equiv \distsum_{i \in I} \ctweight_i \cdot \anrgdist_i$
and either
$\anaction_i = \rsuccact$ for all $i \in I$ if $\anaction \in \nnr$,
and $\set{\anaction_i}{i \in I} \subseteq \cps{\maxbound}{k} \times \pedges$ otherwise.
}
\end{corollary}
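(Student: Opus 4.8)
The plan is to use, as the required combined transition, exactly the one already produced by part~(2) of \lemref{lem:one-step} for the transition $((\aloc,\aval),\anaction,\adist)$, and then to verify the two remaining conditions in the statement.

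First I would describe the equivalence classes of $\anequiv$ over $\states \uplus \rgstates{\acdpta}{k}$. Writing $\inrg{\aloc}{\aval}{\maxbound}{k}$ for the image of a concrete state $(\aloc,\aval)$ under the canonical map to region-graph states, the two clauses defining $\anequiv$ show that each of its classes has the form $\{ (\aloc',\aval') \in \states \setsep \aval' \in \areg' \} \cup \{ (\aloc',\areg') \}$ for a unique region-graph state $(\aloc',\areg') \in \rgstates{\acdpta}{k}$: all concrete states mapping to $(\aloc',\areg')$ are glued to it, there is no rule relating two region-graph states, and every region-graph state is the image of some concrete state. Since $\adist \in \dist(\states)$, its mass on this class is $\sum_{\aval' \in \areg'} \adist(\aloc',\aval')$; since each $\anrgdist_i \in \dist(\rgstates{\acdpta}{k})$, the mass of $\distsum_{i \in I} \ctweight_i \cdot \anrgdist_i$ on the same class is $\sum_{i \in I} \ctweight_i \cdot \anrgdist_i(\aloc',\areg')$. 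Part~(2) of \lemref{lem:one-step} states precisely that these two quantities coincide for every $(\aloc',\areg')$, so $\adist \anequiv \distsum_{i \in I} \ctweight_i \cdot \anrgdist_i$, as required.

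Second I would read off the action labels from the case split in the proof of \lemref{lem:one-step}: when $\anaction \in \nnr$ the combined transition is the single time-successor transition, labelled $\rsuccact$, so $\anaction_i = \rsuccact$ for all $i \in I$; when $\anaction = \apedge \in \pedges$ we have $I = \cpsofreg{\areg}$ (with $\areg$ the $k$-region of $\aval$) and the transitions are labelled $(\acp,\apedge)$ for $\acp \in \cpsofreg{\areg} \subseteq \cps{\maxbound}{k}$, so $\set{\anaction_i}{i \in I} \subseteq \cps{\maxbound}{k} \times \pedges$.

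I do not expect a genuine obstacle: the corollary is essentially a restatement of part~(2) of \lemref{lem:one-step} in the language of the relation $\anequiv$. The only point deserving attention is that $\anequiv$ lives on the \emph{disjoint} union of the two PTSs, so in computing the mass of a distribution on an equivalence class one must remember that $\adist$ contributes only through its $\states$-part while each $\anrgdist_i$ contributes only through its $\rgstates{\acdpta}{k}$-part — and it is precisely this observation that turns part~(2) of \lemref{lem:one-step} into the identity we need.
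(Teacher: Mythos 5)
Your proposal is correct and follows exactly the route the paper intends: the paper states the corollary as a ``direct consequence of part~(2) of Lemma~\ref{lem:one-step}'', and your argument simply spells out why, by identifying the equivalence classes of $\anequiv$ with the sets $\{(\aloc',\aval') \setsep \aval' \in \areg'\} \cup \{(\aloc',\areg')\}$ and matching masses via part~(2), then reading the action labels off the case split in that lemma's proof. The only (harmless) imprecision is the claim that every region-graph state is the image of a concrete state, which can fail when no valuation of a region satisfies the location's invariant; such states form singleton classes and do not affect the mass comparison.
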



We now proceed to the proof of part~(1) of Proposition~\ref{prop:approx}.

\begin{proof}[of part~(1) of Proposition~\ref{prop:approx}]
Consider the relation $\proposedpsim \subseteq (\states \uplus \rgstates{\acdpta}{k})^2$
such that $\proposedpsim$ is the smallest relation satisfying the following property:
for $(\aloc,\aval) \in \states$, $(\aloc',\areg) \in \rgstates{\acdpta}{k}$,
we have $(\aloc,\aval) \proposedpsim (\aloc',\areg)$ 
if $\inrg{\aloc}{\val}{\maxbound}{k} = (\aloc',\areg)$.
By Corollary~\ref{cor:sum_to_distsum},
$\proposedpsim$ is a probabilistic simulation respecting $\anequiv$ and $\{ \rsuccact \} \cup \nnr$.
Then, by Proposition~\ref{prop:hahn},
we have that 
$\maxval{\semPTS{\acdpta}}{\finalstates}{\altstrategies{\nnr}} 
\leq 
\maxval{\acdrg{\acdpta}{k}}{\finalregs{\maxbound}{k}}{\altstrategies{\{ \rsuccact \}}}$
and $\minval{\semPTS{\acdpta}}{\finalstates}{\altstrategies{\nnr}} 
\geq 
\minval{\acdrg{\acdpta}{k}}{\finalregs{\maxbound}{k}}{\altstrategies{\{ \rsuccact \}}}$.
Noting that $\cdptastrategies = \altstrategies{\nnr}$
and $\rgstrategies{\acdpta}{k} = \altstrategies{\{ \rsuccact \}}$,
we have that 
$\maxval{\semPTS{\acdpta}}{\finalstates}{\cdptastrategies}
\leq
\maxval{\acdrg{\acdpta}{k}}{\finalregs{\maxbound}{k}}{\rgstrategies{\acdpta}{k}}$
and
$\minval{\semPTS{\acdpta}}{\finalstates}{\cdptastrategies} \geq 
\minval{\acdrg{\acdpta}{k}}{\finalregs{\maxbound}{k}}{\rgstrategies{\acdpta}{k}}$.
\qed
\end{proof}


\subsection{Approximating granularity $2k$ by granularity $k$}

For $2k$-region $\areg \in \regs{\maxbound}{2k}$ and $k$-region $\areg' \in  \regs{\maxbound}{k}$,
we write $\areg \subseteq \areg'$
if every valuation that is contained in $\areg$ is also contained in $\areg'$
(i.e., 
if $\{\aval \in \valuations \setsep \aval \in \areg\} \subseteq \{\aval \in \valuations \setsep \aval \in \areg'\}$).
Note that, for a given $2k$-region $\areg \in \regs{\maxbound}{2k}$ 
there is exactly one $k$-region $\areg' \in \regs{\maxbound}{k}$ 
such that $\areg \subseteq \areg'$.
In the following, given the $2k$-region $\areg$,
we use $\regcont{\areg}{k}$ to denote the unique $k$-region such that $\areg \subseteq \regcont{\areg}{k}$.
We now adapt Lemma~\ref{lem:resets} to the case of $2k$-regions and $k$-regions:
that is, 
the sets of clocks that, when reset to 0, 
are used to transform $2k$-region $\areg$ to $2k$-region  $\areg'$
are the same as the sets of clocks used to transform  
the $k$-region containing the $2k$-region $\areg$
to the $k$-region containing the $2k$-region $\areg'$.
The proof of the lemma proceeds in an analogous manner to that of Lemma~\ref{lem:resets},
and is therefore omitted.

\begin{lemma}\label{lem:resets_k_2k}
Let $k \in \Nset$ and 
let $\areg_{2k},\areg_{2k}' \in \regs{\maxbound}{2k}$
such that $\areg_{2k}' = \areg_{2k}[\aclockset:=0]$ for some $\aclockset \subseteq \clocks$.
Using $\areg_k, \areg_k' \in \regs{\maxbound}{k}$ to denote the unique $k$-regions 
such that $\areg_{2k} \subseteq \areg_k$ and $\areg_{2k}' \subseteq \areg_k'$,
we have $\reset{\areg_{2k}}{\areg_{2k}'} = \reset{\areg_k}{\areg_k'}$.
\end{lemma}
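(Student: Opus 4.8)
The plan is to mimic the proof of \lemref{lem:resets}, with the $k$-region containing a valuation replaced by the $k$-region containing a $2k$-region. For a $2k$- or $k$-region $\areg$, write $\aclockset^0_\areg$ for the set of clocks $\aclock$ that are identically $0$ in $\areg$ (that is, $\aval(\aclock) = 0$ for every $\aval \in \areg$).

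First I would establish the key correspondence: for every $2k$-region $\areg \in \regs{\maxbound}{2k}$ we have $\aclockset^0_\areg = \aclockset^0_{\regcont{\areg}{k}}$. The inclusion $\aclockset^0_{\regcont{\areg}{k}} \subseteq \aclockset^0_\areg$ is immediate from $\areg \subseteq \regcont{\areg}{k}$. For the reverse inclusion I would unfold the definition of regions: a clock $\aclock$ is identically $0$ in a region precisely when its natural-value component is $0$ and $\aclock$ lies in the block $\aclockset_0$ of the partition (equivalently, has zero fractional part). If $\aval(\aclock) = 0$ for all $\aval \in \areg$, then $\floor{k \cdot \aval(\aclock)} = 0$ and the fractional part $k \cdot \aval(\aclock) - \floor{k \cdot \aval(\aclock)}$ is $0$ for every $\aval \in \areg$; this forces the natural-value component of $\regcont{\areg}{k}$ at $\aclock$ to be $0$ and $\aclock$ to lie in its block $\aclockset_0$, so $\aclock$ is identically $0$ in all of $\regcont{\areg}{k}$.

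Second I would observe that $\areg_k' = \areg_k[\aclockset := 0]$: for $\aval \in \areg_{2k} \subseteq \areg_k$ we have $\aval[\aclockset := 0] \in \areg_{2k}[\aclockset := 0] = \areg_{2k}'$ and also $\aval[\aclockset := 0] \in \areg_k[\aclockset := 0]$, so $\areg_{2k}' \subseteq \areg_k[\aclockset := 0]$, and uniqueness of the containing $k$-region (noted just before the lemma) gives $\areg_k' = \regcont{\areg_{2k}'}{k} = \areg_k[\aclockset := 0]$. Hence $\areg_{2k}'$ and $\areg_k'$ are obtained from $\areg_{2k}$ and $\areg_k$ by resetting the \emph{same} set $\aclockset$, so exactly as for valuations in the proof of \lemref{lem:resets} one gets $\aclockset^0_{\areg_{2k}'} = \aclockset^0_{\areg_{2k}} \cup \aclockset$ and $\aclockset^0_{\areg_k'} = \aclockset^0_{\areg_k} \cup \aclockset$. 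Third, I would characterize the reset sets symbolically: $\aclockset' \in \reset{\areg_{2k}}{\areg_{2k}'}$ iff resetting $\aclockset'$ in $\areg_{2k}$ makes exactly the clocks of $\aclockset^0_{\areg_{2k}'}$ equal to $0$ while leaving every other region coordinate unchanged, which is equivalent to $\aclockset^0_{\areg_{2k}'} \setminus \aclockset^0_{\areg_{2k}} \subseteq \aclockset' \subseteq \aclockset^0_{\areg_{2k}'}$, and likewise $\aclockset' \in \reset{\areg_k}{\areg_k'}$ iff $\aclockset^0_{\areg_k'} \setminus \aclockset^0_{\areg_k} \subseteq \aclockset' \subseteq \aclockset^0_{\areg_k'}$. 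Combining this with the first step ($\aclockset^0_{\areg_{2k}} = \aclockset^0_{\areg_k}$ and $\aclockset^0_{\areg_{2k}'} = \aclockset^0_{\areg_k'}$), the two conditions on $\aclockset'$ are literally identical, whence $\reset{\areg_{2k}}{\areg_{2k}'} = \reset{\areg_k}{\areg_k'}$.

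I expect the main obstacle to be the first step, where one must descend to the precise definition of $k$- and $2k$-regions and verify that ``being identically zero'' is preserved in both directions under region refinement, together with the check that resetting clocks commutes with passing to the containing $k$-region. These are purely non-probabilistic region-graph manipulations, but they require care; the remainder of the argument is bookkeeping identical to that of \lemref{lem:resets}, which is why the paper elects to omit it.
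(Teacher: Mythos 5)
Your proposal is correct and is precisely the argument the paper has in mind: it omits the proof as ``analogous to Lemma~\ref{lem:resets}'', and your writeup carries out that analogy faithfully, including the two bridging facts one actually needs (that the identically-zero clock sets of a $2k$-region and of its containing $k$-region coincide, and that $\areg_k' = \areg_k[\aclockset{:=}0]$), before applying the same interval characterisation $\aclockset^0_{\areg'}\setminus\aclockset^0_{\areg}\subseteq\aclockset'\subseteq\aclockset^0_{\areg'}$ of the reset sets on both levels.
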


The following result specifies that
every corner point of $\areg  \in \regs{\maxbound}{2k}$
is either a corner point of $\regcont{\areg}{k}$ 
or can be obtained from a weighted combination of corner points of $\regcont{\areg}{k}$.

\begin{lemma}\label{lem:corner_k_2k}
Let $k \in \Nset$ and
let $\areg \in \regs{\maxbound}{2k}$.
For each corner point $\acp \in \cpsofreg{\areg}$,
there exist a set of weights $\set{\weight_{\acp'}}{\acp' \in \cpsofreg{\regcont{\areg}{k}}}$
such that $\acp = \sum_{\acp' \in \cpsofreg{\regcont{\areg}{k}}} \weight_{\acp'} \cdot \acp'$.
\end{lemma}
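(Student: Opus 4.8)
The plan is to work coordinate by coordinate, exploiting the fact that corner points and regions are defined through the floor function applied to scaled clock values, and that a $2k$-region sits inside exactly one $k$-region. Fix $\areg = (\anatval, [\aclockset_0, \ldots, \aclockset_n]) \in \regs{\maxbound}{2k}$ and write $\regcont{\areg}{k} = (\anatval^k, [\aclockset^k_0, \ldots, \aclockset^k_m])$ for the containing $k$-region. First I would record the relationship between the two: since every valuation $\aval \in \areg$ lies in $\regcont{\areg}{k}$, for each clock $\aclock$ the value $\anatval(\aclock)$ (a multiple of $\frac{1}{2k}$) determines $\anatval^k(\aclock) = \frac{1}{k}\floor{k \cdot \anatval(\aclock)}$ (the largest multiple of $\frac{1}{k}$ not exceeding it), and the fractional-order partition $[\aclockset^k_0,\ldots,\aclockset^k_m]$ is the partition obtained from $[\aclockset_0,\ldots,\aclockset_n]$ by refining according to whether $\anatval(\aclock)$ is itself a multiple of $\frac1k$ or sits strictly between two such multiples.

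Next I would fix a corner point $\acp = \seq{\acpconst_i}{}$ of $\areg$ (so $\acp = \acpconst_i$ for the index $i$ in the definition) and analyse, clock by clock, the value $\acp(\aclock) \in \multof{2k}$. By the definition of corner points, $\acp(\aclock)$ is either $\anatval(\aclock)$ or $\anatval(\aclock) + \frac{1}{2k}$. The key observation is that for \emph{any} value $\gamma \in \multof{2k} \cap [0,\maxbound]$ lying in the closure of the $\frac{1}{k}$-interval $[\anatval^k(\aclock), \anatval^k(\aclock) + \frac1k]$, we can write $\gamma$ as a convex combination of the two endpoints $\anatval^k(\aclock)$ and $\anatval^k(\aclock)+\frac1k$, which are precisely the two candidate values that the coordinate $\aclock$ takes across the $k+1$ corner points of $\regcont{\areg}{k}$. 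Concretely, $\gamma = (1-t)\anatval^k(\aclock) + t(\anatval^k(\aclock)+\frac1k)$ with $t \in \{0, \tfrac12, 1\}$ depending on whether $\acp(\aclock)$ equals the lower $k$-multiple, the midpoint, or the upper $k$-multiple. The subtlety is that this mixing parameter $t$ may differ from clock to clock, so I cannot simply use two corner points of $\regcont{\areg}{k}$; I need a single probability distribution over the whole set $\cpsofreg{\regcont{\areg}{k}}$ that projects correctly onto every coordinate simultaneously.

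To produce such a distribution I would appeal to the structure of the corner points of $\regcont{\areg}{k}$: they are exactly the points $\acpconst^k_0, \ldots, \acpconst^k_m$ where $\acpconst^k_j(\aclock)$ takes the lower value $\anatval^k(\aclock)$ if $\aclock$ is in one of the first $j$ fractional-order classes and the upper value otherwise. Since clocks in the same class $\aclockset^k_j$ always move together, and the fractional orders in $\areg$ refine those in $\regcont{\areg}{k}$, the coordinates of $\acp$ are compatible with a monotone ``threshold'' assignment, except possibly for the half-integer ($t = \tfrac12$) coordinates, which I would handle by splitting weight evenly between the corner point that puts those classes on the low side and the one that puts them on the high side. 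Thus the distribution $\set{\weight_{\acp'}}{\acp' \in \cpsofreg{\regcont{\areg}{k}}}$ is supported on at most two consecutive corner points $\acpconst^k_{j}, \acpconst^k_{j+1}$ of $\regcont{\areg}{k}$, with weights determined by which fractional-order class boundary in the coarser partition the corner point $\acp$ ``cuts through''. A short verification that $\sum_{\acp'} \weight_{\acp'} \cdot \acp'(\aclock) = \acp(\aclock)$ for each $\aclock \in \clocks$, done by cases on whether $\aclock$ lies below, at, or above that boundary class, completes the argument; establishing that a single pair of corner points works for all coordinates at once is the main obstacle, and it rests on the fact that the $2k$-fractional-order partition of $\areg$ refines the $k$-fractional-order partition of $\regcont{\areg}{k}$ in a way consistent with the floor relation between $\anatval$ and $\anatval^k$.
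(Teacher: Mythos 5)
Your proof is correct in substance but takes a genuinely different, and considerably more explicit, route than the paper. The paper disposes of the lemma in two lines: the convex hull of $\cpsofreg{\regcont{\areg}{k}}$ contains (the closure of) the $2k$-region $\areg$, hence contains every corner point $\acp$ of $\areg$, and a point of a convex hull is by definition \emph{some} convex combination of the generating points --- no weights are exhibited. You instead construct the weights: you observe that each coordinate of $\acp$ sits at the bottom, midpoint or top of the relevant $\frac{1}{k}$-interval, that this ``height'' $t\in\{0,\tfrac12,1\}$ is constant on each fractional-order class of $\regcont{\areg}{k}$ and monotone in the class index, and that the resulting partial-sum equations on the weights therefore admit a nonnegative solution supported on at most two corner points. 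This buys an explicit description of the decomposition --- essentially a proof of the paper's informal remark, placed \emph{after} the lemma, that corner points of $\areg$ are either corner points of $\regcont{\areg}{k}$ or midpoints of edges of its simplex --- at the cost of more bookkeeping. You correctly identify the simultaneous-coordinates issue as the crux; the convex-hull argument sidesteps it entirely.

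One incidental claim in your write-up is false and should be dropped: the two supported corner points need \emph{not} be consecutive, i.e.\ of the form $\acpconst^k_j,\acpconst^k_{j+1}$. Take $\clocks=\{x,y\}$ and $k=1$: the $1$-region $0<x<y<1$ has corner points $(1,1)$, $(0,1)$, $(0,0)$, and the contained $2$-region $\tfrac12<x<y<1$ has the corner point $(\tfrac12,\tfrac12)$, which is the midpoint of $(1,1)=\acpconst^k_0$ and $(0,0)=\acpconst^k_2$, skipping $\acpconst^k_1$. Your actual construction --- splitting weight evenly between the corner point putting all the $t=\tfrac12$ classes low and the one putting them all high --- does return exactly this non-consecutive pair, so nothing downstream breaks; only the parenthetical ``consecutive'' is wrong. (Both your argument and the paper's also silently permit zero weights, whereas the paper's definition of a set of weights requires values in $(0,1]$; restricting to the support repairs this in either case.)
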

\begin{proof}
Note that the convex hull of corner points in $\cpsofreg{\regcont{\areg}{k}}$
is a superset of the convex hull of corner points in $\cpsofreg{\areg}$.
Hence, any corner point $\acp \in \cpsofreg{\areg}$
is in the set of valuations induced by the convex hull of $\cpsofreg{\regcont{\areg}{k}}$,
and hence there exists the required $\set{\weight_{\acp'}}{\acp' \in \cpsofreg{\regcont{\areg}{k}}}$
such that $\acp = \sum_{\acp' \in \cpsofreg{\regcont{\areg}{k}}} \weight_{\acp'} \cdot \acp'$.
\qed
\end{proof}
We note that the corner points of $\areg \in \regs{\maxbound}{2k}$
are either also corner points of the unique $\areg'  \in \regs{\maxbound}{k}$ such that $\areg \subseteq \areg'$,
or they are mid-points of edges of the polyhedron induced by the convex hull of the corner points of $\areg'$.

Lemma~\ref{lem:corner_k_2k} allows us to state the following lemma 
(which is an analogue of Lemma~\ref{lem:weight_pedge}).

\begin{lemma}\label{lem:weight_pedge_k_2k}
Let $k \in \Nset$,
let $\areg \in \regs{\maxbound}{2k}$,
let $(\aloc,\g,\pd) \in \pedges$ be a probabilistic edge such that $\areg \vinz \g$,
and let $\acp \in \cpsofreg{\areg}$ be a corner point of $\areg$.
Then there exists a set of weights $\set{\weight_{\acp'}}{\acp' \in \cpsofreg{\regcont{\areg}{k}}}$ 
such that, for any $(\aclockset,\aloc') \in 2^\clocks \times \locs$, 
we have:
\[
\adtpara{\pd}{\acp}(\aclockset,\aloc') 
= 
\sum_{\acp' \in \cpsofreg{\regcont{\areg}{k}}} \weight_{\acp'} \cdot \adtpara{\pd}{\acp'}(\aclockset,\aloc') \; .
\]
\end{lemma}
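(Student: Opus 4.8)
The plan is to mirror the proof of \lemref{lem:weight_pedge} almost verbatim, with the corner point $\acp$ of the $2k$-region $\areg$ taking over the role played there by the valuation $\aval$, and the corner points of the $k$-region $\regcont{\areg}{k}$ taking over the role of the corner points of the region containing $\aval$. Writing $\apedge = (\aloc,\g,\pd)$, the first step is to invoke \lemref{lem:corner_k_2k} to fix a set of weights $\set{\weight_{\acp'}}{\acp' \in \cpsofreg{\regcont{\areg}{k}}}$ with $\acp = \sum_{\acp'} \weight_{\acp'} \cdot \acp'$ as points of $\nnr^{|\clocks|}$, so that $\acp(\aclock) = \sum_{\acp'} \weight_{\acp'} \cdot \acp'(\aclock)$ holds coordinatewise for every clock $\aclock \in \clocks$. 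These weights are purely geometric --- they depend on neither $\pd$ nor the outcome --- so the same choice will have to serve for every $(\aclockset,\aloc')$. I would also record at the outset that $\regcont{\areg}{k} \vinz \g$ (refining the granularity from $2k$ to $k$ cannot make a region straddle a breakpoint of $\g$, since those breakpoints are natural numbers), so that $\acp$ and each $\acp'$ have every coordinate inside the closure of $\interval{\apedge}{\aclock}$.

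The heart of the argument --- and the step I expect to be the main obstacle --- is to show that, for each clock $\aclock \in \clocks$ and each outcome $\outcome = (\aclockset,\aloc')$, the value $\acp(\aclock)$ together with all the values $\acp'(\aclock)$ lie in the closure of a single linear piece $\aninterval_\aclock$ of the partition $\apartition{\apedge,\outcome}{\aclock}$. This is where the relationship between the two granularities is used: all of these values lie inside the $k$-region $\regcont{\areg}{k}$, within which the $\aclock$-coordinate is confined to an interval of length $\tfrac1k$ whose endpoints are multiples of $\tfrac1k$; since no integer lies in the interior of such an interval, and the endpoints of the pieces of $\apartition{\apedge,\outcome}{\aclock}$ are natural numbers (assumption~(1) on piecewise linear probabilistic edges), the whole interval --- hence $\acp(\aclock)$ and every $\acp'(\aclock)$ --- is contained in the closure of one piece $\aninterval_\aclock$. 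By continuity of $\plf{\apedge,\outcome}{\aclock}$, the linear identity $\plf{\apedge,\outcome}{\aclock}(\gamma) = \fc{\apedge,\outcome}{\aclock}{\aninterval_\aclock} + \fd{\apedge,\outcome}{\aclock}{\aninterval_\aclock}\cdot\gamma$ then holds throughout this closure; as already done in \lemref{lem:weight_pedge}, this continuous extension is what is used to evaluate $\adtpara{\pd}{\acp}$ and $\adtpara{\pd}{\acp'}$ at corner points.

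With that in hand, the rest is the same finite linear-algebra manipulation as in \lemref{lem:weight_pedge}: starting from $\adtpara{\pd}{\acp}(\outcome) = \sum_{\aclock \in \clocks} \plf{\apedge,\outcome}{\aclock}(\acp(\aclock))$, expand each summand linearly on $\aninterval_\aclock$, substitute $\acp(\aclock) = \sum_{\acp'}\weight_{\acp'}\acp'(\aclock)$, interchange the two finite sums, and use $\sum_{\acp'}\weight_{\acp'} = 1$ to regroup the terms into $\sum_{\acp'}\weight_{\acp'}\sum_{\aclock\in\clocks}\plf{\apedge,\outcome}{\aclock}(\acp'(\aclock)) = \sum_{\acp'}\weight_{\acp'}\cdot\adtpara{\pd}{\acp'}(\outcome)$. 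Since the weights were chosen before $\outcome$ was fixed, this yields the claimed identity for every $(\aclockset,\aloc')$ simultaneously, completing the proof. As a sanity check, the remark preceding the lemma --- that a corner point of a $2k$-region is either a corner point of, or a midpoint of an edge of, the containing $k$-region --- makes the weights of \lemref{lem:corner_k_2k} completely explicit (one weight equal to $1$, or two weights equal to $\tfrac12$), so nothing subtle hides in the appeal to that lemma.
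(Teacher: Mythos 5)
Your proposal is correct and follows essentially the same route as the paper's proof: fix the weights once via Lemma~\ref{lem:corner_k_2k}, then repeat the linear expansion and sum interchange of Lemma~\ref{lem:weight_pedge}, using continuity of $\plf{\apedge,\outcome}{\aclock}$ to evaluate the linear pieces at corner points. The only differences are cosmetic --- the paper separates out the trivial case $\acp \in \cpsofreg{\regcont{\areg}{k}}$, while you handle it uniformly and spell out more explicitly why all the relevant corner-point coordinates lie in the closure of a single piece of the partition (a detail the paper delegates to ``as in the proof of Lemma~\ref{lem:weight_pedge}'').
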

\begin{proof}
By Lemma~\ref{lem:corner_k_2k},
it is possible that $\acp \in \cpsofreg{\regcont{\areg}{k}}$,
in which case we let $\weight_{\acp} = 1$ and trivially we have:
\[
\adtpara{\pd}{\acp}(\aclockset,\aloc')  
= \weight_{\acp} \cdot \adtpara{\pd}{\acp}(\aclockset,\aloc')
= \sum_{\acp' \in \cpsofreg{\regcont{\areg}{k}}} \weight_{\acp'} \cdot \adtpara{\pd}{\acp'}(\aclockset,\aloc') \; .
\]

Now consider the case in which $\acp \not\in \cpsofreg{\regcont{\areg}{k}}$.
We proceed in a similar manner to the proof of Lemma~\ref{lem:weight_pedge}.
By Lemma~\ref{lem:corner_k_2k},
we have the existence of a set of weights $\set{\weight_{\acp'}}{\acp' \in \cpsofreg{\regcont{\areg}{k}}}$ such that 
such that $\acp = \sum_{\acp' \in \cpsofreg{\regcont{\areg}{k}}} \weight_{\acp'} \cdot \acp'$.
Let $\outcome = (\aclockset,\aloc') \in 2^\clocks \times \locs$.
For clock $\aclock \in \clocks$,
we define $\aninterval_\acp$ as 
the interval of the partition $\apartition{\apedge,\outcome}{\aclock}$
such that $\acp(\aclock) \in \aninterval_\acp$,
and use $\fc{\apedge,\outcome}{\aclock}{\aninterval_\acp}$
and $\fd{\apedge,\outcome}{\aclock}{\aninterval_\acp}$ to denote the 
constants such that $\plf{\apedge,\outcome}{\aclock}(\gamma) = 
\fc{\apedge,\outcome}{\aclock}{\aninterval_\acp} 
+ \fd{\apedge,\outcome}{\aclock}{\aninterval_\acp} \cdot \gamma$
if $\gamma \in \aninterval_\acp$.
Then we have:
\begin{eqnarray*}
\adtpara{\pd}{\acp}(\outcome) 
& = & 
\displaystyle{
\sum_{\aclock \in \clocks} \plf{\apedge,\outcome}{\aclock}(\acp(\aclock))
}
\\
& = &
\displaystyle{
\sum_{\aclock \in \clocks}
(\fc{\apedge,\outcome}{\aclock}{\aninterval_\acp} 
+ \fd{\apedge,\outcome}{\aclock}{\aninterval_\acp} \cdot \acp(\aclock))
}
\\
& = &
\displaystyle{
\sum_{\aclock \in \clocks}
(\fc{\apedge,\outcome}{\aclock}{\aninterval_\acp} 
+ \fd{\apedge,\outcome}{\aclock}{\aninterval_\acp} 
\cdot \sum_{\acp' \in \cpsofreg{\regcont{\areg}{k}}} \weight_{\acp'} \cdot \acp'(\aclock))
}
\\
& = &
\displaystyle{
\sum_{\aclock \in \clocks}
\fc{\apedge,\outcome}{\aclock}{\aninterval_\acp} 
+ 
\sum_{\aclock \in \clocks}
\fd{\apedge,\outcome}{\aclock}{\aninterval_\acp} 
\cdot \sum_{\acp' \in \cpsofreg{\regcont{\areg}{k}}} \weight_{\acp'} \cdot \acp'(\aclock)
}
\\
& = &
\displaystyle{
\sum_{\acp' \in \cpsofreg{\regcont{\areg}{k}}}
\weight_{\acp'} 
\sum_{\aclock \in \clocks}
\fc{\apedge,\outcome}{\aclock}{\aninterval_\acp} 
+
\sum_{\acp' \in \cpsofreg{\regcont{\areg}{k}}} 
\weight_{\acp'} 
\sum_{\aclock \in \clocks}
\fd{\apedge,\outcome}{\aclock}{\aninterval_\acp} 
\cdot  \acp'(\aclock)
}
\\
& = &
\displaystyle{
\sum_{\acp' \in \cpsofreg{\regcont{\areg}{k}}}
\weight_{\acp'} 
(
\sum_{\aclock \in \clocks}
\fc{\apedge,\outcome}{\aclock}{\aninterval_\acp} 
+ 
\sum_{\aclock \in \clocks}
\fd{\apedge,\outcome}{\aclock}{\aninterval_\acp} 
\cdot \acp'(\aclock)
)
}
\\
& = &
\displaystyle{
\sum_{\acp' \in \cpsofreg{\regcont{\areg}{k}}}
\weight_{\acp'} 
\sum_{\aclock \in \clocks} \plf{\apedge,\outcome}{\aclock}(\acp'(\aclock))
}
\\
& = &
\displaystyle{
\sum_{\acp' \in \cpsofreg{\regcont{\areg}{k}}}
\weight_{\acp'} \cdot \adtpara{\pd}{\acp'}(\outcome) \; ,
}
\end{eqnarray*}
(where the fifth equation follows from $\sum_{\acp' \in \cpsofreg{\regcont{\areg}{k}}} \weight_{\acp'}  = 1$,
and the penultimate equation follows from the fact that $\plf{\apedge,\outcome}{\aclock}$
is a continuous function, as in the proof of Lemma~\ref{lem:weight_pedge})
which concludes the proof.
\qed
\end{proof}

\begin{lemma}\label{lem:one-step_pedge_k_2k}
Let $k \in \Nset$
and $\areg \in \regs{\maxbound}{2k}$.
For each transition $((\aloc,\areg),(\acp,(\aloc,\g,\pd)),\anrgdist) \in \rgtrans{\acdpta}{2k}$ 
of $\acdrg{\acdpta}{2k}$,
there exists a set of transitions 
$\set{(\aloc,\regcont{\areg}{k}),(\acp',(\aloc,\g,\pd)),\anrgdist_{\acp'})}{\acp' \in \cpsofreg{\regcont{\areg}{k}}} 
\subseteq 
\rgtrans{\acdpta}{k}$ of $\acdrg{\acdpta}{k}$
and weights $\set{\weight_{\acp'}}{\acp' \in \cpsofreg{\regcont{\areg}{k}}}$ 
such that, for each state $(\aloc',\areg') \in \rgstates{\acdpta}{2k}$,
we have:
\[
\anrgdist(\aloc',\areg') 
= 
\sum_{\acp' \in \cpsofreg{\regcont{\areg}{k}}} 
\weight_{\acp'} \cdot \anrgdist_{\acp'}(\aloc',\regcont{\areg'}{k}) 
\; .
\]
\end{lemma}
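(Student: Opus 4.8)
The plan is to mirror the proof of Lemma~\ref{lem:one-step_pedge} line for line, with the roles played there by a concrete valuation $\aval$ and the $k$-region containing it now taken over by a $2k$-region $\areg$ and the $k$-region $\regcont{\areg}{k}$ containing it, and with the three ingredient lemmas replaced by their granularity-$2k$-versus-$k$ analogues established above: Lemma~\ref{lem:existence_weights} by Lemma~\ref{lem:corner_k_2k}, Lemma~\ref{lem:weight_pedge} by Lemma~\ref{lem:weight_pedge_k_2k}, and Lemma~\ref{lem:resets} by Lemma~\ref{lem:resets_k_2k}.

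Concretely, I would begin from the definition of $\acdrg{\acdpta}{2k}$, which gives $\anrgdist(\aloc',\areg') = \sum_{\aclockset \in \reset{\areg}{\areg'}} \adtpara{\pd}{\acp}(\aclockset,\aloc')$, the sum ranging over the sets of clocks whose reset carries the $2k$-region $\areg$ to the $2k$-region $\areg'$ (if $\areg'$ is not such a reset of $\areg$ then both sides of the claimed identity are $0$ and there is nothing to prove, so assume $\areg' = \areg[\aclockset_0:=0]$ for some $\aclockset_0$). I would then fix, once and for all and independently of the outcome $(\aclockset,\aloc')$, the weights $\set{\weight_{\acp'}}{\acp' \in \cpsofreg{\regcont{\areg}{k}}}$ supplied by Lemma~\ref{lem:corner_k_2k} with $\acp = \sum_{\acp'} \weight_{\acp'} \cdot \acp'$; these are exactly the weights for which Lemma~\ref{lem:weight_pedge_k_2k} yields $\adtpara{\pd}{\acp}(\aclockset,\aloc') = \sum_{\acp'} \weight_{\acp'} \cdot \adtpara{\pd}{\acp'}(\aclockset,\aloc')$ for every outcome. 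Substituting this into the displayed sum, using Lemma~\ref{lem:resets_k_2k} to identify $\reset{\areg}{\areg'}$ with $\reset{\regcont{\areg}{k}}{\regcont{\areg'}{k}}$, exchanging the two summations, and recognising that $\sum_{\aclockset \in \reset{\regcont{\areg}{k}}{\regcont{\areg'}{k}}} \adtpara{\pd}{\acp'}(\aclockset,\aloc')$ is precisely the probability $\anrgdist_{\acp'}(\aloc',\regcont{\areg'}{k})$ assigned by the granularity-$k$ transition taken with corner point $\acp'$, gives the claimed equality. I also need the transitions $((\aloc,\regcont{\areg}{k}),(\acp',(\aloc,\g,\pd)),\anrgdist_{\acp'})$ to genuinely lie in $\rgtrans{\acdpta}{k}$: membership needs $\acp' \in \cpsofreg{\regcont{\areg}{k}}$, which is how the $\acp'$ were chosen, and $\regcont{\areg}{k} \vinz \g$, which holds because $\g$ has natural-number constants --- so every $k$-region lies wholly inside or wholly outside the set of valuations satisfying $\g$ --- together with $\areg \vinz \g$ (the original granularity-$2k$ transition being enabled) and $\areg \subseteq \regcont{\areg}{k}$.

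I do not anticipate a real obstacle, since all the geometric and combinatorial content has been pushed into Lemmas~\ref{lem:corner_k_2k}, \ref{lem:weight_pedge_k_2k} and \ref{lem:resets_k_2k}, leaving only bookkeeping that copies the proof of Lemma~\ref{lem:one-step_pedge}. The one point deserving care is that a \emph{single} convex decomposition $\acp = \sum_{\acp'} \weight_{\acp'} \acp'$ must serve simultaneously for every outcome $(\aclockset,\aloc')$; this is legitimate precisely because that decomposition depends only on $\acp$ and $\regcont{\areg}{k}$ and not on the outcome, so the resulting distributions $\anrgdist_{\acp'}$ and weights $\weight_{\acp'}$ are uniform across the support of $\anrgdist$. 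If any step is genuinely delicate it is the already-proved Lemma~\ref{lem:weight_pedge_k_2k}, which leans on the continuity of the piecewise-linear functions $\plf{\apedge,\outcome}{\aclock}$ to extend each linear piece to the closure of the interval containing the relevant corner-point coordinate.
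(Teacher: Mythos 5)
Your proposal is correct and follows essentially the same route as the paper's proof: fix the convex decomposition $\acp = \sum_{\acp'} \weight_{\acp'} \cdot \acp'$ once via Lemma~\ref{lem:corner_k_2k}, expand $\anrgdist(\aloc',\areg')$ from the definition of $\acdrg{\acdpta}{2k}$, apply Lemma~\ref{lem:weight_pedge_k_2k} and Lemma~\ref{lem:resets_k_2k}, and exchange the sums. Your additional remarks --- that the weights are outcome-independent and that the guard $\g$ (having natural-number constants) is satisfied by the whole of $\regcont{\areg}{k}$ so the granularity-$k$ transitions genuinely exist --- are points the paper leaves implicit, but they do not change the argument.
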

\begin{proof}
We proceed in a similar manner to the proof of Lemma~\ref{lem:one-step_pedge}.
Let $\set{\weight_{\acp'}}{\acp' \in \cpsofreg{\regcont{\areg}{k}}}$ be the set of weights such that 
$\acp = \sum_{\acp' \in \cpsofreg{\regcont{\areg}{k}}} \weight_{\acp'} \cdot \acp'$,
which exists by Lemma~\ref{lem:corner_k_2k}.
Then for each $(\aloc',\areg') \in \rgstates{\acdpta}{2k}$,
by the definition of $\acdrg{\acdpta}{2k}$, we have:
\[
\begin{array}{rcll}
\anrgdist(\aloc',\areg') 
& = & 
\displaystyle{
\sum_{\aclockset \in \reset{\areg}{\areg'}} \adtpara{\pd}{\acp}(\aclockset,\aloc')
}
&
\\
& = &
\displaystyle{
\sum_{\aclockset \in \reset{\areg}{\areg'}} 
\sum_{\acp' \in \cpsofreg{\regcont{\areg}{k}}} 
\weight_{\acp'} \cdot \adtpara{\pd}{\acp'}(\aclockset,\aloc') 
}
&
\mbox{ (by Lemma~\ref{lem:weight_pedge_k_2k})}
\\
& = &
\displaystyle{
\sum_{\aclockset \in \reset{\regcont{\areg}{k}}{\regcont{\areg'}{k}}} 
\sum_{\acp' \in \cpsofreg{\regcont{\areg}{k}}} 
\weight_{\acp'} \cdot \adtpara{\pd}{\acp'}(\aclockset,\aloc') 
}
&
\mbox{ (by Lemma~\ref{lem:resets_k_2k})}
\\
& = &
\displaystyle{
\sum_{\acp' \in \cpsofreg{\regcont{\areg}{k}}} 
\weight_{\acp'} \cdot 
\sum_{\aclockset \in \reset{\regcont{\areg}{k}}{\regcont{\areg'}{k}}} 
\adtpara{\pd}{\acp'}(\aclockset,\aloc') 
}
&
\\
& = &
\displaystyle{
\sum_{\acp' \in \cpsofreg{\regcont{\areg}{k}}} 
\weight_{\acp'} \cdot 
\anrgdist_i(\aloc',\regcont{\areg'}{k})
\; .
}
&
\end{array}
\]
\begin{flushright}
\qed
\end{flushright}
\end{proof}

The next lemma considers time-successor transitions of the region graphs 
for granularity $k$ and $2k$:
as it relies on standard non-probabilistic reasoning on the region graphs,
we omit its proof.

\begin{lemma}\label{lem:one-step_time_k_2k}
\sloppypar{
Let $k \in \Nset$
and let $(\aloc,\areg) \in \rgstates{\acdpta}{2k}$ be a state of $\acdrg{\acdpta}{2k}$.
For each transition $((\aloc,\areg),\rsuccact,\dirac{(\aloc,\areg')}) \in \rgtransdelay{\acdpta}{2k}$ 
of $\acdrg{\acdpta}{2k}$,
there exists a transition 
$(\aloc,\regcont{\areg}{k}),\rsuccact,\dirac{(\aloc',\regcont{\areg'}{k})}) \in \rgtransdelay{\acdpta}{k}$ 
of $\acdrg{\acdpta}{k}$.
}
\end{lemma}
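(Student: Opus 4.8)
The plan is to unwind the definition of $\cc$-satisfying time successor (taking $\cc = \inv(\aloc)$) and to exploit the one structural fact recorded immediately before the lemma, namely that every $2k$-region is contained, as a set of clock valuations, in a unique $k$-region, which is exactly what $\regcont{\cdot}{k}$ denotes; in particular $\areg \subseteq \regcont{\areg}{k}$ and $\areg' \subseteq \regcont{\areg'}{k}$. Since the states of $\acdrg{\acdpta}{k}$ are all pairs in $\locs \times \regs{\maxbound}{k}$, the states $(\aloc,\regcont{\areg}{k})$ and $(\aloc,\regcont{\areg'}{k})$ are present, so the only thing to check is that the defining condition of $\rgtransdelay{\acdpta}{k}$ holds between them.

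First I would take the hypothesised transition $((\aloc,\areg),\rsuccact,\dirac{(\aloc,\areg')}) \in \rgtransdelay{\acdpta}{2k}$ and unfold it: $(\aloc,\areg')$ is an $\inv(\aloc)$-satisfying time successor of $(\aloc,\areg)$ at granularity $2k$, so there are $\aval \in \areg$ and $\adelay \in \nnr$ with $\aval{+}\adelay \in \areg'$ and $\aval{+}\adelay' \vinz \inv(\aloc)$ for all $0 \leq \adelay' \leq \adelay$. Then I would observe that the very same pair $(\aval,\adelay)$ witnesses the required relation at granularity $k$: from $\aval \in \areg \subseteq \regcont{\areg}{k}$ we get $\aval \in \regcont{\areg}{k}$; from $\aval{+}\adelay \in \areg' \subseteq \regcont{\areg'}{k}$ we get $\aval{+}\adelay \in \regcont{\areg'}{k}$; and the continuous-invariant condition is literally unchanged, as it refers to no regions at all. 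Hence $(\aloc,\regcont{\areg'}{k})$ is an $\inv(\aloc)$-satisfying time successor of $(\aloc,\regcont{\areg}{k})$, and since $\rgtransdelay{\acdpta}{k}$ is the smallest transition set closed under that condition, $((\aloc,\regcont{\areg}{k}),\rsuccact,\dirac{(\aloc,\regcont{\areg'}{k})}) \in \rgtransdelay{\acdpta}{k}$, which is the claimed transition (the location component does not change along a delay transition, so the target location is $\aloc$).

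There is no genuine obstacle here; the only point that wants a moment's care is the passage from the comparatively liberal notion of ``time successor'' adopted in this paper --- which merely asserts the existence of a witness valuation together with an arbitrarily large delay, rather than immediate region succession --- to the observation that such a witness transfers verbatim to the coarser granularity. An alternative, purely combinatorial route would be to track the finite sequence of $2k$-regions traversed while delaying from $\areg$ to $\areg'$, map it under $\regcont{\cdot}{k}$ to a consecutive (possibly shorter) sequence of $k$-regions, and verify region-by-region that continuous satisfaction of $\inv(\aloc)$ is preserved; this is also correct but more laborious, which is presumably why the authors treat the proof as routine and omit it.
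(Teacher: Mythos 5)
Your proof is correct. The paper in fact omits the proof of this lemma entirely, remarking only that it ``relies on standard non-probabilistic reasoning on the region graphs''; your argument---unfolding the purely existential definition of $\inv(\aloc)$-satisfying time successor and observing that the same witness pair $(\aval,\adelay)$ transfers verbatim via the containments $\areg \subseteq \regcont{\areg}{k}$ and $\areg' \subseteq \regcont{\areg'}{k}$---is exactly the routine verification the authors intend, and you are also right that the $\aloc'$ in the statement's target is a typo for $\aloc$.
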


The following lemma is an analogue of Lemma~\ref{lem:concrete_support},
applied to the case of $k$-regions and $2k$-regions.

\begin{lemma}\label{lem:k_2k_support}
Let $(\aloc,\areg) \in \regs{\maxbound}{2k}$ be a state of the region graph with granularity $2k$,
and let $((\aloc,\areg),(\acp,(\aloc,\g,\pd)),\anrgdist) \in \rgtranspedge{\acdpta}{2k}$ 
be a transition of $\acdrg{\acdpta}{2k}$.
For each pair $(\aloc_1,\areg_1), (\aloc_2,\areg_2) \in \support(\anrgdist)$ 
such that $(\aloc_1,\areg_1) \neq (\aloc_2,\areg_2)$,
we have $(\aloc_1,\regcont{\areg_1}{k}) \neq (\aloc_2,\regcont{\areg_2}{k})$.
\end{lemma}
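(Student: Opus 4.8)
The plan is to follow the structure of the proof of Lemma~\ref{lem:concrete_support}, but working one level up: with $2k$-regions in place of valuations and with the coarser $k$-regions $\regcont{\cdot}{k}$ in place of the $k$-regions $\inrg{\cdot}{\cdot}{\maxbound}{k}$. First I would fix the transition $((\aloc,\areg),(\acp,(\aloc,\g,\pd)),\anrgdist) \in \rgtranspedge{\acdpta}{2k}$ and a pair $(\aloc_1,\areg_1),(\aloc_2,\areg_2) \in \support(\anrgdist)$ with $(\aloc_1,\areg_1) \neq (\aloc_2,\areg_2)$. If $\aloc_1 \neq \aloc_2$ the conclusion $(\aloc_1,\regcont{\areg_1}{k}) \neq (\aloc_2,\regcont{\areg_2}{k})$ is immediate, so I would assume $\aloc_1 = \aloc_2$ and $\areg_1 \neq \areg_2$. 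By condition (3) in the definition of $\rgtranspedge{\acdpta}{2k}$ we have $\anrgdist(\aloc_i,\areg_i) = \sum_{\aclockset \in \reset{\areg}{\areg_i}} \adtpara{\pd}{\acp}(\aclockset,\aloc_i) > 0$, so for each $i \in \{1,2\}$ there is a clock set $\aclockset_i \subseteq \clocks$ with $\areg_i = \areg[\aclockset_i:=0]$.

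Since both $\areg_1$ and $\areg_2$ are obtained from $\areg$ by resetting a subset of clocks to $0$, for any valuation $\aval \in \areg$ the valuations $\aval[\aclockset_1:=0]$ and $\aval[\aclockset_2:=0]$ agree on every clock outside $\aclockset_1 \triangle \aclockset_2$, and on a clock in $\aclockset_1 \triangle \aclockset_2$ at least one of the two values is $0$; as this holds for every $\aval \in \areg$, the $2k$-regions $\areg_1$ and $\areg_2$ differ only in which clocks are identically $0$ (this is exactly the observation already used in the proof of Lemma~\ref{lem:concrete_support}, transferred from valuations to regions because resetting acts pointwise on a region's valuations). Writing $\aclockset_i^0$ for the set of clocks that are $0$ in the valuations of $\areg_i$, the hypothesis $\areg_1 \neq \areg_2$ therefore forces $\aclockset_1^0 \neq \aclockset_2^0$. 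Now I would observe that passing to the containing $k$-region preserves the zero-clock set: since $\areg_i \subseteq \regcont{\areg_i}{k}$ and, for any region, a given clock is $0$ either in all of its valuations or in none of them, a clock is $0$ throughout $\areg_i$ if and only if it is $0$ throughout $\regcont{\areg_i}{k}$; hence $\regcont{\areg_1}{k}$ and $\regcont{\areg_2}{k}$ have zero-clock sets $\aclockset_1^0 \neq \aclockset_2^0$. Finally, a $k$-region $(\anatval,[\aclockset_0,\ldots,\aclockset_n])$ has a clock $\aclock$ equal to $0$ in all its valuations precisely when $\anatval(\aclock) = 0$ and $\aclock \in \aclockset_0$, so two $k$-regions with different zero-clock sets must differ in their $\anatval$-component or in the block $\aclockset_0$ of their partition, and so are distinct. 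This gives $\regcont{\areg_1}{k} \neq \regcont{\areg_2}{k}$, and hence $(\aloc_1,\regcont{\areg_1}{k}) \neq (\aloc_2,\regcont{\areg_2}{k})$.

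I do not expect a genuine obstacle here: the argument is purely combinatorial and essentially a restatement of Lemma~\ref{lem:concrete_support} one level up in the region hierarchy. The only point that deserves a sentence of care is the claim that resetting clocks in a $2k$-region changes only its zero-clock set (and not, e.g., the relative order of the fractional parts of the clocks that are left untouched); but this is immediate from the fact that the reset of a region is defined by resetting each of its valuations, together with the characterisation of $k$-regions in terms of integer parts and of the ordering of fractional parts.
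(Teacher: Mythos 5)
Your proposal is correct and follows essentially the same route as the paper's proof: reduce to the case $\aloc_1=\aloc_2$, observe that $\areg_1$ and $\areg_2$ arise from $\areg$ by clock resets and hence can differ only in their zero-clock sets, and then conclude that the containing $k$-regions differ because a $k$-region's zero-clock set is determined by its $\anatval$-component together with the block $\aclockset_0$. The only difference is presentational (you factor the last step through the observation that passing to the containing $k$-region preserves the zero-clock set, whereas the paper writes out the resulting disagreement in $\anatval$ or $\aclockset_0$ directly), which does not change the argument.
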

\begin{proof}
Let $(\aloc_1,\areg_1), (\aloc_2,\areg_2) \in \support(\anrgdist)$ such that $(\aloc_1,\areg_1) \neq (\aloc_2,\areg_2)$.
If $\aloc_1 \neq \aloc_2$ then trivially
$(\aloc_1,\regcont{\areg_1}{k}) \neq (\aloc_2,\regcont{\areg_2}{k})$.
Now consider the case in which $\aloc_1 = \aloc_2$ and $\areg_1 \neq \areg_2$.
Note that $\areg_1=\areg[\aclockset_1:=0]$ and $\areg_2=\areg[\aclockset_2:=0]$.
Let $\aclockset_1'$ and $\aclockset_2'$ be the set of clocks that are equal to 0 in $\areg_1$ and $\areg_2$, respectively,
and note that $\aclockset_1' \neq \aclockset_2'$.
Then $\regcont{\areg_1}{k}
= (\anatval_1,[\aclockset_{1,0}, \aclockset_{1,1}..., \aclockset_{1,n_1}])$
and $\regcont{\areg_2}{k} 
= (\anatval_2,[\aclockset_{2,0}, \aclockset_{2,1}..., \aclockset_{2,n_2}])$
have the following properties:
either there exists clock $\aclock \in \aclockset_1' \setminus \aclockset_2'$
such that $\anatval_1(\aclock)=0$ and $\aclock \in \aclockset_{1,0}$
but either $\anatval_2(\aclock) \neq 0$ or $\aclock \not\in \aclockset_{2,0}$,
or there exists clock $\aclock \in \aclockset_2' \setminus \aclockset_1'$
such that $\anatval_2(\aclock)=0$ and $\aclock \in \aclockset_{2,0}$
but either $\anatval_1(\aclock) \neq 0$ or $\aclock \not\in \aclockset_{1,0}$.
Hence we have $(\aloc_1,\regcont{\areg_1}{k}) \neq (\aloc_2,\regcont{\areg_2}{k})$.
\qed
\end{proof}

Given $((\aloc,\areg),(\acp,(\aloc,\g,\pd)),\anrgdist) \in \rgtranspedge{\acdpta}{2k}$
and $(\aloc',\areg') \in \rgstates{\acdpta}{k}$,
Lemma~\ref{lem:k_2k_support} specifies that there exists at most one $2k$-region $\areg''$
such that $(\aloc',\areg'') \in \support(\anrgdist)$ and $\areg'' \subseteq \areg'$.
In the case in which such a $2k$-region $\areg''$ exists, we let $\areg_{\anrgdist,(\aloc',\areg')} = \areg''$,
otherwise we can set $\areg_{\anrgdist,(\aloc',\areg')}$ be equal to an arbitrary $2k$-region.
From this fact, together with Lemma~\ref{lem:one-step_pedge_k_2k} and Lemma~\ref{lem:one-step_time_k_2k},
we obtain the following lemma.
Its proof is similar to that of Lemma~\ref{lem:one-step},
and hence we omit it.

\begin{lemma}\label{lem:one-step_k_2k}
Let $(\aloc,\areg) \in \rgstates{\acdpta}{k}$ be a state of the region graph with granularity $2k$.
For each transition $((\aloc,\areg),(\acp,(\aloc,\g,\pd)),\anrgdist) \in \rgtranspedge{\acdpta}{2k}$ 
of $\acdrg{\acdpta}{2k}$,
there exists a combined transition 
$(\set{(\aloc,\regcont{\areg}{k},\anaction_i,\anrgdist_i)}{i \in I}, \set{\ctweight_i}{i \in I})$ 
of $\acdrg{\acdpta}{k}$
such that, for each $(\aloc',\areg') \in \rgstates{\acdpta}{k}$,
we have:
\begin{enumerate}
\item
$\anrgdist(\aloc',\areg_{\anrgdist,(\aloc',\areg')}) = \sum_{i \in I} \ctweight_i \cdot \anrgdist_i(\aloc',\areg')$.
\item
$\sum_{\areg'' \in \regs{\maxbound}{2k} \mbox{ s.t. } \regcont{\areg''}{k}=\areg'} \anrgdist(\aloc',\areg'') 
= 
\sum_{i \in I} \ctweight_i \cdot \anrgdist_i(\aloc',\areg')$.
\end{enumerate}
\end{lemma}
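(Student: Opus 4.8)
The plan is to follow the probabilistic-edge case of the proof of \lemref{lem:one-step} almost verbatim, replacing each concrete-level ingredient by its $2k$-versus-$k$ analogue; note that, in contrast to \lemref{lem:one-step}, there is no time-transition case to treat, since the statement is restricted to transitions in $\rgtranspedge{\acdpta}{2k}$. So fix a transition $((\aloc,\areg),(\acp,(\aloc,\g,\pd)),\anrgdist) \in \rgtranspedge{\acdpta}{2k}$, with $\areg \in \regs{\maxbound}{2k}$. First I would apply \lemref{lem:corner_k_2k} to the corner point $\acp \in \cpsofreg{\areg}$ to obtain a set of weights $\set{\weight_{\acp'}}{\acp' \in \cpsofreg{\regcont{\areg}{k}}}$ with $\acp = \sum_{\acp'} \weight_{\acp'} \cdot \acp'$, and then feed these weights into \lemref{lem:one-step_pedge_k_2k}, which delivers a family $\set{((\aloc,\regcont{\areg}{k}),(\acp',(\aloc,\g,\pd)),\anrgdist_{\acp'})}{\acp' \in \cpsofreg{\regcont{\areg}{k}}} \subseteq \rgtrans{\acdpta}{k}$ of genuine transitions of $\acdrg{\acdpta}{k}$ satisfying $\anrgdist(\aloc',\areg'') = \sum_{\acp'} \weight_{\acp'} \cdot \anrgdist_{\acp'}(\aloc',\regcont{\areg''}{k})$ for every $(\aloc',\areg'') \in \rgstates{\acdpta}{2k}$. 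The well-formedness of those transitions (guard enabledness at $\regcont{\areg}{k}$, legality of the reset-sets) is entirely discharged inside \lemref{lem:one-step_pedge_k_2k}, so nothing further is needed here.

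For part~(1) I would set $I = \cpsofreg{\regcont{\areg}{k}}$, $\ctweight_{\acp'} = \weight_{\acp'}$, $\anaction_{\acp'} = (\acp',(\aloc,\g,\pd))$ and take $\anrgdist_{\acp'}$ as above; since $\set{\weight_{\acp'}}{\acp' \in \cpsofreg{\regcont{\areg}{k}}}$ is a set of weights, this is a legitimate combined transition of $\acdrg{\acdpta}{k}$, and all its actions lie in $\cps{\maxbound}{k} \times \pedges$. Then, fixing $(\aloc',\areg') \in \rgstates{\acdpta}{k}$ and instantiating the displayed identity at $\areg'' = \areg_{\anrgdist,(\aloc',\areg')}$ --- reading the paper's convention so that, even when no $2k$-region contained in $\areg'$ lies in $\support(\anrgdist)$, the arbitrary choice $\areg_{\anrgdist,(\aloc',\areg')}$ is taken with $\regcont{\areg_{\anrgdist,(\aloc',\areg')}}{k} = \areg'$ --- I would obtain directly $\anrgdist(\aloc',\areg_{\anrgdist,(\aloc',\areg')}) = \sum_{i \in I} \ctweight_i \cdot \anrgdist_i(\aloc',\areg')$, which is part~(1).

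For part~(2) I would invoke \lemref{lem:k_2k_support}: for the fixed $\anrgdist$ and fixed $(\aloc',\areg') \in \rgstates{\acdpta}{k}$, at most one $2k$-region $\areg''$ with $\regcont{\areg''}{k} = \areg'$ lies in $\support(\anrgdist)$, so $\sum_{\areg'' \in \regs{\maxbound}{2k} \suchthat \regcont{\areg''}{k}=\areg'} \anrgdist(\aloc',\areg'')$ is a sum with at most one nonzero term and hence equals $\anrgdist(\aloc',\areg_{\anrgdist,(\aloc',\areg')})$ (the empty case collapsing to $0$ on both sides by the convention above); combining with part~(1) yields part~(2). The one point that needs a moment's thought --- and the reason the argument is routed through \lemref{lem:resets_k_2k}, \lemref{lem:corner_k_2k}, \lemref{lem:weight_pedge_k_2k} and \lemref{lem:one-step_pedge_k_2k} --- is the commutation of clock resets with passage to the containing $k$-region, so that summing $\anrgdist$ over the $2k$-regions sitting above a fixed $k$-region reproduces exactly the reset-set sum defining $\anrgdist_{\acp'}$ at granularity $k$; this is precisely \lemref{lem:resets_k_2k} and is already used inside the lemmata cited above, so at the level of this lemma the remaining work is purely bookkeeping: keeping straight which of the two region graphs each object inhabits and feeding the correct $2k$-region into the coarse-granularity identity. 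I do not anticipate any genuine obstacle beyond this.
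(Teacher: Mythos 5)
Your proposal is correct and follows exactly the route the paper intends: the paper omits this proof, stating only that it is similar to that of \lemref{lem:one-step}, and your argument is precisely that proof's probabilistic-edge case transplanted to the $2k$-versus-$k$ setting, using \lemref{lem:one-step_pedge_k_2k} in place of \lemref{lem:one-step_pedge} and \lemref{lem:k_2k_support} in place of \lemref{lem:concrete_support}. Your observations that the time-transition case is absent here and that the ``arbitrary'' convention for $\areg_{\anrgdist,(\aloc',\areg')}$ should be read as still requiring containment in $\areg'$ are both sound.
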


Consider equivalence $\anequiv \subseteq (\rgstates{\acdpta}{2k} \uplus \rgstates{\acdpta}{k})^2$ 
over the states of
the disjoint union of $\acdrg{\acdpta}{2k}$ and $\acdrg{\acdpta}{k}$
defined as the smallest equivalence satisfying the following conditions:
\begin{itemize}
\item
for states $(\aloc,\areg),(\aloc',\areg') \in \rgstates{\acdpta}{2k}$,
we have $(\aloc,\areg) \anequiv (\aloc',\areg')$ 
if $\aloc=\aloc'$, and $\regcont{\areg}{k}=\regcont{\areg'}{k}$
(i.e., $\areg$ and $\areg'$ are contained in the same $k$-region in $\regs{\maxbound}{k}$);
\item
for $(\aloc,\areg) \in \rgstates{\acdpta}{2k}$, $(\aloc',\areg') \in \rgstates{\acdpta}{k}$,
$(\aloc,\areg) \anequiv (\aloc',\areg')$ 
if $\aloc=\aloc'$ and $\regcont{\areg}{k} = \areg'$
(i.e., $\areg$ is contained in $\areg'$).
\end{itemize}
%
%
We then obtain the following corollary from part (2) of Lemma~\ref{lem:one-step_k_2k}.

\begin{corollary}\label{cor:one-step_k_2k}
\sloppypar{
Let $(\aloc,\areg) \in \rgstates{\acdpta}{2k}$ be a state of $\acdrg{\acdpta}{2k}$.
For each transition $((\aloc,\areg),\anaction,\anrgdist) \in \rgtrans{\acdpta}{2k}$ of $\acdrg{\acdpta}{2k}$,
there exists a combined transition 
$(\set{(\aloc,\regcont{\areg}{k}),\anaction_i,\anrgdist_i)}{i \in I}, \set{\ctweight_i}{i \in I})$ 
of $\acdrg{\acdpta}{k}$
such that
$\anrgdist 
\equiv
\distsum_{i \in I} \ctweight_i \cdot \anrgdist_i(\aloc',\areg')$,
$\anaction_i = \rsuccact$ for all $i \in I$ if $\anaction = \rsuccact$
and $\set{\anaction_i}{i \in I} \subseteq \cps{\maxbound}{k} \times \pedges$ otherwise.
}
\end{corollary}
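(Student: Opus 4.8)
The plan is to obtain Corollary~\ref{cor:one-step_k_2k} as an immediate consequence of Lemma~\ref{lem:one-step_time_k_2k} (for time-elapse transitions) and part~(2) of Lemma~\ref{lem:one-step_k_2k} (for probabilistic-edge transitions), exactly mirroring the way Corollary~\ref{cor:sum_to_distsum} was derived from Lemma~\ref{lem:one-step}. So I would fix a transition $((\aloc,\areg),\anaction,\anrgdist) \in \rgtrans{\acdpta}{2k}$ of $\acdrg{\acdpta}{2k}$ and split on whether $\anaction = \rsuccact$ or $\anaction \in \cps{\maxbound}{2k} \times \pedges$. In the first case $\anrgdist = \dirac{(\aloc,\areg')}$ with $(\aloc,\areg')$ an $\inv(\aloc)$-satisfying time successor of $(\aloc,\areg)$, and Lemma~\ref{lem:one-step_time_k_2k} supplies the single transition $((\aloc,\regcont{\areg}{k}),\rsuccact,\dirac{(\aloc,\regcont{\areg'}{k})}) \in \rgtransdelay{\acdpta}{k}$; I then take $|I| = 1$, the unique weight equal to $1$, and $\anaction_i = \rsuccact$. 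In the second case $\anaction = (\acp,(\aloc,\g,\pd))$ and part~(2) of Lemma~\ref{lem:one-step_k_2k} supplies a combined transition $(\set{((\aloc,\regcont{\areg}{k}),\anaction_i,\anrgdist_i)}{i \in I}, \set{\ctweight_i}{i \in I})$ of $\acdrg{\acdpta}{k}$, all of whose actions lie in $\cps{\maxbound}{k} \times \pedges$, together with the summation identity of that lemma. In both cases the claimed condition on the actions $\anaction_i$ holds immediately.

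It then remains to verify $\anrgdist \anequiv \distsum_{i \in I} \ctweight_i \cdot \anrgdist_i$ for the equivalence $\anequiv$ over $\rgstates{\acdpta}{2k} \uplus \rgstates{\acdpta}{k}$ introduced just before the corollary. The point is that each $\anequiv$-class is indexed by a pair $(\aloc',\areg')$ with $\areg' \in \regs{\maxbound}{k}$, and consists of the single $k$-state $(\aloc',\areg')$ together with precisely those $2k$-states $(\aloc',\areg'')$ for which $\regcont{\areg''}{k} = \areg'$. Since $\anrgdist$ assigns positive probability only to $2k$-states and each $\anrgdist_i$ only to $k$-states, the $\anrgdist$-mass of such a class is the total $\anrgdist$-probability of the $2k$-states $(\aloc',\areg'')$ with $\regcont{\areg''}{k} = \areg'$, the $(\distsum_{i \in I} \ctweight_i \cdot \anrgdist_i)$-mass of the class is $\sum_{i \in I} \ctweight_i \cdot \anrgdist_i(\aloc',\areg')$, and these agree by the identity furnished by part~(2) of Lemma~\ref{lem:one-step_k_2k} (and trivially in the time-elapse case, where both Dirac distributions charge only the class of $(\aloc,\regcont{\areg'}{k})$). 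This proves Corollary~\ref{cor:one-step_k_2k}.

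I do not expect a genuine obstacle: the real content is already in Lemma~\ref{lem:one-step_pedge_k_2k}, Lemma~\ref{lem:one-step_time_k_2k} and Lemma~\ref{lem:one-step_k_2k}, and the only delicate point is the bookkeeping of $\anequiv$-classes across the disjoint union — specifically, that summing a distribution of one of the two region graphs over an $\anequiv$-class collapses to exactly one of the two sides of the Lemma~\ref{lem:one-step_k_2k} identity. Finally, I would remark that Corollary~\ref{cor:one-step_k_2k} then delivers part~(2) of Proposition~\ref{prop:approx} by the argument already used for part~(1): the relation $\proposedpsim = \{ ((\aloc,\areg),(\aloc,\regcont{\areg}{k})) \setsep (\aloc,\areg) \in \rgstates{\acdpta}{2k} \}$ is, by Corollary~\ref{cor:one-step_k_2k}, a probabilistic simulation respecting $\anequiv$ and $\{ \rsuccact \}$; $\anequiv$ respects $\finalregs{\maxbound}{2k} \uplus \finalregs{\maxbound}{k}$ because it never relates states with different locations; and, since $\rgstrategies{\acdpta}{2k}$ and $\rgstrategies{\acdpta}{k}$ are exactly the $\{ \rsuccact \}$-alternating strategies of the respective region graphs, Proposition~\ref{prop:hahn} yields $\maxval{\acdrg{\acdpta}{2k}}{\finalregs{\maxbound}{2k}}{\rgstrategies{\acdpta}{2k}} \leq \maxval{\acdrg{\acdpta}{k}}{\finalregs{\maxbound}{k}}{\rgstrategies{\acdpta}{k}}$ together with the symmetric inequality for minimum values.
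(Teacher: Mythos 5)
Your proof is correct and follows essentially the same route as the paper, which simply states the corollary as a direct consequence of part~(2) of Lemma~\ref{lem:one-step_k_2k} (with Lemma~\ref{lem:one-step_time_k_2k} covering the time-elapse case, exactly the case split you make). Your elaboration of the $\anequiv$-class bookkeeping is the intended, and correct, way to fill in the details the paper leaves implicit.
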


We now proceed to the proof of part~(2) of Proposition~\ref{prop:approx}.

\begin{proof}[of part~(2) of Proposition~\ref{prop:approx}]
Consider the relation $\proposedpsim \subseteq (\rgstates{\acdpta}{2k} \uplus \rgstates{\acdpta}{k})^2$
such that $\proposedpsim$ is the smallest relation satisfying:
for $(\aloc,\areg) \in \rgstates{\acdpta}{2k}$, $(\aloc',\areg') \in \rgstates{\acdpta}{k}$,
$(\aloc,\areg) \proposedpsim (\aloc',\areg')$ 
if $(\aloc,\regcont{\areg}{k}) = (\aloc',\areg')$.
By Corollary~\ref{cor:one-step_k_2k},
we have that $\proposedpsim$ is a probabilistic simulation respecting $\anequiv$ 
and $\{ \rsuccact \}$.
Then, by Proposition~\ref{prop:hahn},
we have that:
\begin{eqnarray*}
\maxval{\acdrg{\acdpta}{2k}}{\finalregs{\maxbound}{2k}}
{\altstrategiespts{\acdrg{\acdpta}{2k}}{\{ \rsuccact \}}} 
& \leq &
\maxval{\acdrg{\acdpta}{k}}{\finalregs{\maxbound}{k}}
{\altstrategiespts{\acdrg{\acdpta}{k}}{\{ \rsuccact \}}}
\\
\minval{\acdrg{\acdpta}{2k}}{\finalregs{\maxbound}{2k}}
{\altstrategiespts{\acdrg{\acdpta}{2k}}{\{ \rsuccact \}}} 
& \geq & 
\minval{\acdrg{\acdpta}{k}}{\finalregs{\maxbound}{k}}
{\altstrategiespts{\acdrg{\acdpta}{k}}{\{ \rsuccact \}}} \; .
\end{eqnarray*}
Noting that $\rgstrategies{\acdpta}{2k} = \altstrategiespts{\acdrg{\acdpta}{2k}}{\{ \rsuccact \}}$
and $\rgstrategies{\acdpta}{k} = \altstrategiespts{\acdrg{\acdpta}{k}}{\{ \rsuccact \}}$,
we have that 
$\maxval{\acdrg{\acdpta}{2k}}{\finalregs{\maxbound}{2k}}{\rgstrategies{\acdpta}{2k}}
\leq 
\maxval{\acdrg{\acdpta}{k}}{\finalregs{\maxbound}{k}}{\rgstrategies{\acdpta}{k}}$
and
$\minval{\acdrg{\acdpta}{2k}}{\finalregs{\maxbound}{2k}}{\rgstrategies{\acdpta}{2k}}
\geq 
\minval{\acdrg{\acdpta}{k}}{\finalregs{\maxbound}{k}}{\rgstrategies{\acdpta}{k}}$.
\qed
\end{proof}

%


%
%
%
%
%

\end{document}